\renewcommand{\paragraph}[1]{\medskip\noindent\textbf{#1}}
\newcommand{\alp}{\alpha}
\newcommand{\bet}{\beta}
\newcommand{\F}[2]{F_{#1,#2}} %^\epsilon}
\newcommand{\A}[2]{A_{#1,#2}}
\renewcommand{\H}[2]{H_{#1,#2}}
\newcommand{\M}[2]{J_{#1,#2}}
\renewcommand{\P}[1]{P\left(#1\right)}
\newcommand{\V}[2]{W_{#1,#2}}
\newcommand{\Vr}[1]{I_{#1}}
\newcommand{\X}[2]{V_{#1,#2}}
\newcommand{\Y}[2]{Y_{#1,#2}}
\newcommand{\f}[1]{f_{#1}}
\newcommand{\h}[1]{h(#1)}
\newcommand{\bl}[1]{b_{#1}}
\newcommand{\expect}[1]{\mathbb{E}\left[#1\right]}
\newcommand{\ea}[0]{e^{-\alp}}
\newcommand{\ex}[1]{e^{#1}}
\newcommand{\wequ}[0]{{\color{white}=}}
\author{Jing Li, Dongning Guo, and Ling Ren\\
JingLi2015@u.northwestern.edu, dGuo@northwestern.edu, renling@illinois.edu}
\newcommand{\DG}[1]{{#1}}
\newcommand{\jing}[1]{{#1}}
\begin{document}
\title{Close Latency--Security Trade-off for the Nakamoto Consensus} 

\begin{abstract}
Bitcoin is a peer-to-peer electronic cash system invented by Nakamoto in 2008. While it has attracted much research interest, its exact latency and security properties remain open. Existing analyses provide security and latency (or confirmation time) guarantees that are too loose for practical use. In fact the best known upper bounds are several orders of magnitude larger than a lower bound due to a well-known private-mining attack. This paper describes a continuous-time model for blockchains and develops a rigorous analysis that yields close upper and lower bounds for the latency--security trade-off.  For example, when the adversary controls 10\% of the total mining power and the block propagation delays are within 10 seconds, a Bitcoin block is secured with less than $10^{-3}$ error probability if it is confirmed after four hours, or with less than $10^{-9}$ error probability if confirmed after ten hours. These confirmation times are about two hours away from their corresponding lower bounds. To establish such close bounds, the blockchain security question is reduced to a race between the Poisson adversarial mining process and a renewal process formed by a certain species of honest blocks. The moment generation functions of relevant renewal times are derived in closed form. The general formulas from the analysis are then applied to study the latency--security trade-off of several well-known proof-of-work longest-chain cryptocurrencies.  Guidance is also provided on how to set parameters for different purposes.
\end{abstract}

% \acmConference[ACM AFT 2021]{ACM Advances in Financial Technologies}{September 26-28, 2021}{Arlington, VA}

%\ccsdesc[500]{Security and privacy~Distributed systems security}
% \begin{CCSXML}
% <ccs2012>
% <concept>
% <concept_id>10002978.10003006.10003013</concept_id>
% <concept_desc>Security and privacy~Distributed systems security</concept_desc>
% <concept_significance>500</concept_significance>
% </concept>
% </ccs2012>
% \end{CCSXML}

\keywords{Bitcoin,
blockchain,
continuous-time renewal process,
distributed consensus,
longest chain protocol.}

\maketitle

\section{Introduction}

Bitcoin was invented by Nakamoto~\cite{nakamoto2008bitcoin} in 2008 as a peer-to-peer electronic cash system.  It builds a distributed ledger commonly referred to as a blockchain.
The Bitcoin blockchain is a %growing
sequence of transaction-recording blocks which begins with a genesis block, and chains every subsequent block to a parent block using a cryptographic hash. 
Producing a new block requires proof-of-work (PoW) \emph{mining}: a nonce must be included such that the block's hash value satisfies a difficulty requirement.
An honest miner follows the longest-chain rule, i.e., it always tries to mine a block at the maximum height.
Miners form a peer-to-peer network to inform each other of newly mined or received blocks.

A block (and the transactions within) cannot be immediately confirmed upon its inclusion in the blockchain due to a phenomenon called \emph{forking}.
Different blocks may be mined and published at around the same time, so {different} honest miners may extend different blockchains.
Forking may also occur as a result of adversarial miners deviating from the longest-chain rule. 
% or {manipulating} the network in other ways in an attempt to sabotage consensus.  
Forking presents opportunities for double spending, which may happen if a transaction buried in a longest fork at one time is not included in another fork that later overtakes the former fork. 
In fact, unlike classic Byzantine fault tolerant protocols, %Nakamoto's consensus
the Bitcoin protocol only admits probabilistic guarantees. %;
The latency (or confirmation time) of a block in Nakamoto-style consensus protocols (Nakamoto consensus for short) depends on the desired security level.

The goal of this work is to obtain close estimates of the fundamental latency--security trade-off for the Nakamoto consensus.
%tight bounds on the latency of the Nakamoto consensus as a function of the target security level. 
While the %Bitcoin {Nakamoto 
consensus protocol is simple and elegant, %such 
a rigorous analysis of its security
is very challenging.
The original Bitcoin white paper~\cite{nakamoto2008bitcoin} 
%did not provide a rigorous analysis; instead, it 
only analyzes %one 
a single specific attack, called the {\em private attack}, which is to mine an adversarial fork in private.
Nakamoto showed that the probability the adversary's private fork overtakes the main blockchain vanishes exponentially with the latency (see also~\cite{rosenfeld2014analysis}).
It is not until six years later that Garay et al.~\cite{garay2015bitcoin} provided the first %rigorous 
proof that the Nakamoto consensus is secure against all possible attacks.
One {major} limitation of~\cite{garay2015bitcoin} %the work is %that they assume a 
{is that their} \emph{round-based lock-step synchrony} model
essentially abstracts away block propagation delays.
% This is a clean but oversimplified model that essentially abstracts away the network delay.
% In this model, nodes take actions in perfectly synchronized {discrete} rounds, {and their views }.
%In particular, if an honest node sends a message at the beginning of a round, all other honest nodes will receive this message by the end of the round, before taking their next actions in the next round.
Several follow-up works~\cite{pass2017analysis,pass2017rethinking,kiffer2018better,bagaria2019prism} have extended the analysis to the $\Delta$-synchrony model in which the rounds in which different honest miners observe the same block may differ by
up to a known upper bound $\Delta$. 
%The analyses in the non-lock-step model become much more complex. %e.g., the result in~\cite{kiffer2018better} has a non-closed form.

So far, most existing analyses against all possible attacks~\cite{garay2015bitcoin, garay2017bitcoin,pass2017analysis,pass2017rethinking,kiffer2018better,bagaria2019prism} (including a few concurrent and follow-up works~\cite{bagaria2019proof,niu2019analysis,dembo2020everything,gazi2020tight}) focus on establishing asymptotic trends using the big $O(\cdot)$ or big $\Omega(\cdot)$ notation.
These results are not concrete bounds for any given security level because the latency implied could be hours, days, or even years depending on the unknown constants.
% If one works out the %exact hidden 
{The}
constants in some of these asymptotic results {have been worked out in}~\cite{%li2019analysis,
li2020liveness} {to reveal that} the latency upper bounds {are} %will be
several orders of magnitude higher than the best known lower bounds. 
%\rl{I spinned your work slightly differently. Plz check.} {Good with me.}
%To the best of our knowledge, the only exceptions are~\cite{li2019analysis ,li2020liveness}, in which the latency lower and upper bounds are still several orders of magnitude apart.
Thus, despite their theoretical value, existing analyses of the Nakamoto consensus provide little guidance on the actual confirmation time, security guarantees, or parameter selection in practice. 

In this paper, we explicitly and closely % tightly 
characterize the trade-off between latency and security for Nakamoto-style PoW consensus protocols.
For Bitcoin,
the latency results we prove are typically within a few hours to %known 
lower bounds from the private attack.
The gap remains nearly constant at
different security levels, and is hence proportionally insignificant for high security levels. % but can be significant at low security levels.
For example, with a 10\% adversarial mining power, a mining rate of one block every 10 minutes, and a maximum block propagation delay of 10 seconds, a block in the Nakamoto consensus is secured with $10^{-3}$ error probability if it is confirmed four hours after it is mined, or with $10^{-9}$ error probability if it is confirmed ten hours after it is mined.
As a reference, % due to the private attack, 
one must wait for at least two hours or six hours and forty minutes
%\jing{2 hours or 6 hours 40 minutes} 
before confirming for $10^{-3}$ and$10^{-9}$ security levels, respectively.
In contrast, the best prior bounds under essentially the same setting %prior to this work 
put the latency guarantees at thousands of hours or more~\cite{bagaria2019prism,li2020liveness, li2020continuous}.

Since Bitcoin's rise to fame, numerous altcoins and Bitcoin hard forks have adopted the Nakamoto consensus protocol with very different parameters. 
Those parameters are mostly determined in an ad-hoc or empirical manner.
% For example, a larger block size may benefit throughput by carrying more transactions, but it also harms security due to longer propagation delay. 
% A higher block generation rate benefits security by allowing a transaction to be buried faster, but it also harms security by introducing more forking.  
This paper provides theoretical and quantitative tools to reason about the effects and trade-offs of these parameters on different metrics in the Nakamoto consensus including confirmation time, throughput, and fault tolerance.
We use these new tools to analyze and compare various altcoins and offer new insights and recommendations for setting parameters. 

Some new techniques developed in this paper may be of independent interests.
Assuming all block propagation delays are under $\Delta$ units of time, we show the arrivals of several species of honest blocks form {\em (ordinary) renewal processes~\cite{breuer2005introduction}}.  
That is, the inter-arrival times of such a process are independent and identically distributed (i.i.d.).  We show that the adversary must match the so-called double-laggers in order to succeed in any attack.  We derive the moment generating function (MGF) of their inter-arrival times in closed form.  This allows us to calculate quite accurately the probability that more double-laggers are mined than adversarial blocks over any confirmation time, which leads to a close {bound} %characterization
of the fundamental latency--security trade-off.
% As aforementioned, prior work based on Chernoff-type bounds can be hundreds of times looser in terms of the latency guarantee.

Our main contributions include: 
{1)~The blockchain security question is reduced to a race between a renewal process and a Poisson process.}
2)~We provide an explicit formula for the security guarantee as a function of the latency.  This is equivalent to an upper bound on the latency that guarantees any desired security level.  
3)~By means of numerical analysis, the latency upper bound is shown to be close to a lower bound due to the private attack.
4)~We quantify how the block propagation delay bound, mining rates, and other parameters affect the latency--security trade-off. 
5)~We compare the performance and security of several prominent PoW longest-chain cryptocurrencies.
We note that several existing proofs in the literature 
%(including some of our own earlier work) 
are flawed (see Section~\ref{s:remarks} and the end of Section~\ref{sec:proof:safety}). 
In this paper, we carefully circumvent some common mistakes and develop a fully rigorous analysis.
% 3) Rigorousness. In this paper, we build a simple but rigorous stochastic model for continuous-time block mining processes and the resulting blockchains.

The remainder of this paper is organized as follows.
Section~\ref{sec:model} reviews the Nakamoto consensus and describes a model for it.
Section~\ref{sec: main results} presents the main % latency--security 
theorems. % and a discussion along with some numerical examples.
%: explicit numerical bounds for the latency and security of bitcoin protocol. 
{Section~\ref{s:zero} presents a simple analysis of the zero-delay case.}
Section~\ref{s:consistency} develops a technical proof for the latency--security guarantee with delay.
Section~\ref{s:attack} calculates a bound due to the private attack.
Section~\ref{sec: numerical} is devoted to numerical results and discussions.
Section~\ref{sec: conclusion} concludes the paper.

\section{Preliminary and Model}
\label{sec:model}

\subsection{Review of the Nakamoto Consensus} % Protocol}
\label{sec:model:informal}

The Nakamoto consensus centers around the PoW mechanism and the ``longest-chain-win'' rule.  The gist of the protocol can be described succinctly: At any point in time, every honest miner attempts to mine a new block that extends the longest blockchain in that miner's view; once a new block is mined or received,
the honest miner publishes or forwards it
through the peer-to-peer mining network.

If an honest miner mines a block at time $t$, the block must extend the miner's longest blockchain immediately before $t$.
The honest miner will also immediately publish
the block through a peer-to-peer network.
Under the $\Delta$-synchrony model, 
where $\Delta$ is a known upper bound on {block propagation delays},
all other miners will receive the block by time $t+\Delta$.
Note that $\Delta$ upper bounds the end-to-end delay between every pair of miners regardless of the number of hops between them.
%if the block needs to travel through multiple hops in the network, then $\Delta$ upper bounds the total network delay through all hops. 

In contrast, if an adversarial miner mines a block at time $t$, 
%it
the block may extend any blockchain mined by time $t$ and may be presented to individual honest miners at any time after $t$. %, subject to the delay bound.

%This assumption is common in related works~\cite{garay2015bitcoin,pass2017analysis,bagaria2019prism,gazi2020tight,dembo2020everything}. 
%While we do not handle dynamic miner participation in this work, the model.

\subsection{Formal Model \DG{for the Nakamoto Consensus}}
\label{sec:model:formal}

In this subsection we present a very lean model for blockchain and mining processes which capture the essence of the Nakamoto consensus for our purposes.

\begin{definition}[Block and  mining]
\label{def:block}
    A genesis block, also referred to as block $0$, is mined at time $0$. Subsequent blocks are referred to as block $1$, block $2$, and so on, in the order they are mined.
    Let $T_b$ denote the time when block $b$ is mined.\footnote{A block in a practical blockchain system is a data structure that contains a unique identifier, a reference to its parent block, and some application-level data.
    The block number and mining time are tools in our analysis and are not necessarily included in the block. 
    The probability that two blocks are mined at the same time is zero in the continuous-time model. Nevertheless, for mathematical rigor, we can break ties in some deterministic manner even if they occur {(with no impact on the security--latency trade-off)}.}
    %It is impossible for a block to include its exact mining time or block number (as in Definition~\ref{def:block}).}
% For $t>0$, let $M_t$ denote the total number of non-genesis blocks mined {\color{black}by} time $t$.
\end{definition}

\begin{definition}[Blockchain and height]
Every block has a unique parent block that is mined strictly before it.  We use $\f{b} \in \{0,1,...,b-1\}$ to denote block $b$'s parent block number.
The sequence $\bl{0},\bl{1}, \dots, \bl{n}$ defines a blockchain if $\bl{0}=0$ and $\f{\bl{i}}=\bl{i-1}$ for $i =1,\dots ,n$.  It is also referred to as blockchain $b_n$ since $b_n$ uniquely identifies it. 
% For every $i=0,\dots,n$, blockchain $b_i$ is said to be a prefix of blockchain $b_n$. 
% \rl{We don't seem to use the term prefix. Instead, we were mostly using ``block included in chain, chain extend block''.}
The height of both block $b_i$ and blockchain $b_i$ is said to be $i$.  
% A blockchain is said to be mined by time $t$ if all of its blocks are mined by time $t$.
\end{definition}

Throughout this paper, ``by time $t$'' means ``during $(0, t]$''.

\begin{definition}[A miner's view]
\label{def:view}
    A miner's view at time $t$ is a subset of all blocks mined by time $t$.
    A miner's view can only increase over time. 
    %This miner's view, denoted by $\Vr{t}$, can only increase over time, i.e., $\Vr{t}\subset \Vr_{t+s}$ for all $s>0$. Block $b$ is said to be in the miner's view at time $t$ if $b\in \Vr{t}$. 
    % notation conflict V
    A block is in its own miner's view from the time it is mined.
\end{definition}

\begin{definition}[A miner's longest blockchain]
    A blockchain is in a miner's view at time $t$ if all blocks of the blockchain are in the miner's view at time $t$.  
    A miner's longest blockchain at time $t$ is a blockchain with the maximum height in the miner's view at time $t$. Ties are broken in an arbitrary manner.\footnote{The Bitcoin protocol favors the earliest to enter the view.  How ties are broken has essentially no impact on the security--latency trade-off.}
\end{definition}

\begin{definition}[Honest and adversarial miners]
\label{def:miner}
    Each miner is either honest or adversarial.  A block is said to be honest (resp.\ adversarial) if it is mined by an honest (resp.\ adversarial) miner.
    An honest block mined at time $t$ must extend its miner's longest blockchain immediately before $t$.
    %That is, the block's parent is the last block of the miner's longest blockchain immediately before $t$.
    % An honest block mined at time $t$ must extend its miner's longest blockchain at time $t$. % i.e., the block's parent is the last block of the miner's longest blockchain.
\end{definition}

%{TECHNICALLY IT'S WRONG TO SAY A BLOCK MINED AT TIME $t$ EXTENDS THE LONGEST CHAIN AT TIME $t$.  THE LONGEST CHAIN AT TIME $t$ INCLUDES THE NEW BLOCK.}

We assume all block propagation delays are upper bounded by $\Delta$ units of time in the following sense.
\begin{definition}[Block propagation delay bound $\Delta$]
\label{def:Delta}
    If a block is in any honest miner's view by time $t$, then it is in all miners' views by time $t+\Delta$.
\end{definition}

The adversary %is allowed to use
\DG{may use an} arbitrary strategy subject to \DG{(only)} the preceding constraints. Specifically, an adversary can choose to extend any existing blockchain. Once an adversarial block is mined, its miner can determine when it enters each individual honest miner's view subject to the delay bound $\Delta$ (Definition~\ref{def:Delta}).

% compare with Aggelos' model with oracles

This treatment cannot be fully rigorous without a well-defined probability space.  At first it appears to be intricate to fully described blockchains and an all-encompassing probability space.
One option (adopted in~\cite{dembo2020everything}) is to define blockchains as branches of a random tree that depend on the adversary's strategies as well as the network topology and delays. 
Other authors include in their probability space the random hashing outcomes as well as adversarial strategies (e.g.,~\cite{garay2015bitcoin}).
For our purposes it is %turns out to be 
sufficient (and most convenient) to include no more than the mining times of the honest and adversarial blocks in the probability space.
We show that barring a certain ``bad event''
in this probability space, blockchain consistency is guaranteed under all adversarial strategies and network schedules (under $\Delta$-synchrony).
Thus, the adversary's strategies and the network schedules do not have to be included in the probability space.

\begin{definition}[Mining processes]
\label{def:processes}
    Let $H_t$ (resp.\ $A_t$) denote the total number of honest (resp.\ adversarial) blocks mined during $(0, t]$.  We assume $(H_t, t\ge 0)$ and ($A_t, t\ge0)$ to be independent homogeneous Poisson point processes with rate $\alp$ and $\bet$, respectively.
    % \DG REMOVE THIS: The total mining rate of honest (resp.\ adversarial) miners is thus $\alp$ (resp.\ $\bet$) blocks per unit of time.
\end{definition}

% \DG REMOVE PARAGRAPH % The adversary can in principle regulate their mining effort at will (i.e., mine at reduced rates). But such strategies can be modeled as discarding %selected 
% adversarial blocks {probabilistically}. 
% Hence, Definition~\ref{def:processes} is without loss of generality. 

In lieu of specifying the number of honest and adversarial miners, \DG{this model is only concerned with} 
%model only defines 
their respective aggregate \DG{(constant)} mining rates. % (they remain constant in time).
This is in the same spirit as the permissionless nature of the Nakamoto consensus.
%At the same time, 
{In fact} the model and %results 
{techniques} apply to both centralized and decentralized mining.

Table \ref{table: notations} illustrates frequently used notations in this paper.

\begin{center}
\begin{table}[tb]
\begin{tabular}{ |c|l| } 
\hline
$\alp$ & {total} honest mining rate \\ 
$\bet $ & {total} adversarial mining rate \\
$\Delta$ & {block propagation delay upper bound} \\
% $f_b$ & block number of block $b$'s parent \\
% $T_b$ & mining time of block $b$ \\
$\A{s}{r}$ &  number of adversarial blocks mined during $(s,r]$\\
$\H{s}{r}$ &   number of honest blocks mined during   $(s,r]$ \\
$\M{s}{r}$ &  number of jumpers mined during   $(s,r]$ \\
$\X{s}{r}$ &  number of laggers mined during   $(s,r]$ \\
$\V{s}{r}$ &  number of double-laggers mined during   $(s,r]$ \\
$\Y{s}{r}$ &   number of loners mined during   $(s,r]$ \\
% $\V{s}{r}$ &  number of double-laggers mined during   $(s,r]$ \\
% $R_0$ & inter-arrival time between a non-lagger and its previous honest block \\
%  $R_1$ & inter-arrival time between a lagger and its previous honest block\\
% $\phi_0(\cdot)$ & MGF of $R_0$ \\
% $\phi_1(\cdot)$ & MGF of $R_1$ \\
\hline
\end{tabular}
\caption{{Some frequently used} notations.}
\label{table: notations}
\end{table}
\end{center}

\section{Main Results} \label{sec: main results}

Throughout this section, the time unit is %chosen arbitrarily 
\DG{arbitrary} and fixed.
Let $\alp>0$ denote the total honest mining rate and let $\bet>0$ denote the total adversarial mining rate, both in blocks per unit of time.

\begin{definition}[Achievable security latency function]
\label{def:achievable}
    A blockchain %system 
    \DG{protocol} is said to {\em achieve the security latency function $\overline{\epsilon}(\cdot)$} if, for all $s>0$ and $t>0$,
    % there exists an event $E_{s,t}$ with $P(E_{s,t})<\overline{\epsilon}(t)$ such that barring $E_{s,t}$, 
    barring an event with probability no greater than $\overline{\epsilon}(t)$, all blocks that are mined by time $s$ and included in some honest miner's longest blockchain at time $s+t$ are also included in all honest miners' longest blockchains at all later times.
    %, regardless of the adversary's strategy.
\end{definition}    
    
\begin{definition}[Unachievable security level for given latency]
\label{def:unachievable}
    We say a blockchain system cannot achieve the security level $\underline{\epsilon}(\cdot)$ 
    % (or the security level is {\em unachievable}) 
    for a given latency $t>0$, if for some $s>0$, there exists an attack such that 
    % and an event $F_{s,t}$ with $P(F_{s,t})\ge\underline{\epsilon}(t)$ under which
    some block that is mined by time $s$ and is included in a longest blockchain in some honest view at time $s+t$ is excluded from a longest blockchain in some honest view at some point in time after $s+t$ with probability $\underline{\epsilon}(t)$.
\end{definition}

%\rl{After more thoughts, I think these two definitions are too low level. I don't think they need to speak about the event. We are looking for the security function of the commit rules. I am also comfortable assuming there exists a true security function (even if it is not perfectly rigorous).} {Revised to remove Est, Fst.  What do you think?}

The view adopted in Definitions~\ref{def:achievable} and~\ref{def:unachievable} is that an attack on the interval $[s,s+t]$ is successful if any honest miner ``commits'' a block by time $s+t$ and then commits a different block at the same height at some time $r\ge s+t$.
% there exists a time $r\ge s+t$ when the adversarial blockchain is at least as long as one $r$-credible blockchain (so that the adversarial blockchain can be published to make some honest miners reverse at least one honest block committed by time $s+t$).  
% The adversary may
{We allow the adversary to mount} %begin
an attack tailored for $[s,s+t]$
{from as early as time 0}.
% immediately after the genesis block is produced.  
However, the interval of interest $[s,s+t]$ is exogenous, i.e., the adversary is not allowed to {adapt $s$ to their advantage}.
% We say the private attack on block $b$ is successful at time $r$ if the privately mined adversarial blockchain at time $r$ is at least as long as one $r$-credible blockchain (so that the adversarial blockchain can be published to make some honest miners reverse block $b$).

% \rl{$\Delta$ has been defined. Let's just use it.}

The Nakamoto consensus described in Section~\ref{sec:model} satisfies the following security--latency trade-offs.

\begin{theorem}
% [Latency--security upper bound for the zero-delay model]
\label{th:zero achievable}
    Suppose $\bet<\alp$ and $\Delta=0$.  For every $t>0$, {the Nakamoto consensus} achieves the following security--latency function:
%    , barring an event with probability no greater than
    \begin{align} \label{eq:e0}
      \overline{\epsilon}_0(t) =
      \left(1+\sqrt{\frac{\bet}\alp}\right)^2
      \exp\left( - \big( \sqrt{\alp}-\sqrt{\bet} \big)^2 t \right) .
    \end{align}
\end{theorem}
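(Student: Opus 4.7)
The plan is to reduce the security question to a race between the honest and adversarial Poisson processes and then bound the tail of that race with a Chernoff-type inequality, using the geometric distribution of the random-walk supremum. In the $\Delta=0$ model every honest block is instantly visible to every miner, so each honest block extends the then-longest chain; a routine induction then shows that every honest miner's longest-chain height at time $r$ is at least $H_r$. Suppose a block $b$ of height $h$ mined at time $T_b\le s$ lies on the longest chain at time $s+t$ but is displaced at some later $r\ge s+t$. Then there is a chain of height $\ge h$ at time $r$ not containing $b$; it shares an ancestor with $b$'s chain at some height $h^*<h$ reached at a time $s^*\le s$. A standard common-prefix argument tailored to $\Delta=0$ shows that the existence of such a chain forces $A_{s^*,r}\ge H_{s^*,r}$. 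Therefore the bad event is contained in
\[
E \;=\; \bigl\{\exists\, s^*\le s,\ r\ge s+t:\ A_{s^*,r}\ge H_{s^*,r}\bigr\}.
\]

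\textbf{Step 2 (Independent-suprema decomposition).} By stationary increments, extend $A$ and $H$ bilaterally and translate so that $s=0$; writing $Z_r=A_r-H_r$, we have $A_{s^*,r}-H_{s^*,r}=Z_r-Z_{s^*}$, and thus $\Pr(E)=\Pr(\sup_{r\ge t}Z_r-\inf_{s^*\le 0}Z_{s^*}\ge 0)$. Time reversal gives $-\inf_{s^*\le 0}Z_{s^*}\stackrel{d}{=}\sup_{u\ge 0}\tilde Z_u$ for an independent copy $\tilde Z$ of $Z$, and the Markov property at $r=t$ gives $\sup_{r\ge t}Z_r=Z_t+\sup_{r\ge t}(Z_r-Z_t)$, the latter being an independent copy of $U:=\sup_{u\ge 0}(A_u-H_u)$. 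Therefore $\Pr(E)=\Pr(Z_t+U_1+U_2\ge 0)$ with $Z_t,U_1,U_2$ mutually independent.

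\textbf{Step 3 (Chernoff with closed-form MGFs).} Since $\bet<\alp$, the embedded $\pm 1$ random walk has negative drift and $U$ is geometric with $\Pr(U\ge k)=(\bet/\alp)^k$, so $\mathbb{E}[e^{\theta U}]=(1-\bet/\alp)/(1-e^\theta\bet/\alp)$ for $e^\theta<\alp/\bet$. Combined with $\mathbb{E}[e^{\theta Z_t}]=\exp\!\bigl(t[\bet(e^\theta-1)+\alp(e^{-\theta}-1)]\bigr)$, Markov's inequality applied to $e^{\theta(Z_t+U_1+U_2)}$ yields $\Pr(E)\le \mathbb{E}[e^{\theta Z_t}]\cdot\mathbb{E}[e^{\theta U}]^2$. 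The choice $e^\theta=\sqrt{\alp/\bet}$ collapses these factors to $\exp\!\bigl(-(\sqrt\alp-\sqrt\bet)^2 t\bigr)$ and $(1+\sqrt{\bet/\alp})^2$ respectively (using $1-\bet/\alp=(1-\sqrt{\bet/\alp})(1+\sqrt{\bet/\alp})$), giving exactly $\overline{\epsilon}_0(t)$.

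\textbf{Main obstacle.} The delicate point is Step 1: rigorously ruling out every adversarial strategy, including ones that release blocks adaptively in an attempt to co-opt honest mining power onto a bad fork. I would discharge this by a continuous-time adaptation of the standard pivot/common-prefix argument: under $\Delta=0$ any honest block that ever lands above the fork height on a non-$b$ chain would require the non-$b$ chain to be at least as long as the $b$-containing chain at the moment of mining, which itself already witnesses $A_{s^*,r}\ge H_{s^*,r}$ for some earlier $r$. The probabilistic Steps 2--3 are then a clean Chernoff optimization with explicit moment generating functions.
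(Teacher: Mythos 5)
Your proposal is correct and follows essentially the same route as the paper: reduce the failure event to $\{H_{c,d}\le A_{c,d}$ for some $c\le s,\ d\ge s+t\}$, split it into the independent pre-mining supremum, the Skellam-type increment over $(s,s+t]$, and the post-mining supremum (the two suprema being geometric with parameter $\beta/\alpha$), and apply a Chernoff bound with the choice $\rho=\sqrt{\alpha/\beta}$. The only differences are cosmetic (bilateral extension/time reversal in place of the paper's stochastic-domination remark for the finite pre-mining window, and $e^{\theta}$ in place of $\rho$), and your closed-form evaluations of the MGFs and the optimized exponent match the paper's exactly.
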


\begin{theorem} \label{th:zero_unachievable}
    For every latency $t>0$, {the Nakamoto consensus} cannot achieve the security level
    \begin{align} \label{eq:_e0}
        \underline{\epsilon}_0(t)
        =
        \sum_{k=0}^\infty p(k-1; \alp t, \bet t)
        \left( \frac{\bet}{\alp} \right)^k
        \left( 1 + k \left( 1 -  \frac{\bet}{\alp} \right) \right)
        % \sum_{k=0}^\infty g(k) p(k-1; \alp t, \bet t)
        % \sum_{i=0}^\infty \sum_{j=0}^\infty
        % g( \max\{0,i-j+1\} )
        % e^{-(\alp+\bet)t}
        % \frac{(\alp t)^i(\bet t)^j}{i! j!}
    \end{align}
    where $p(\cdot;\cdot,\cdot)$ denotes the probability mass function (pmf) of the Skellam distribution{~\cite{skellam1946frequency}}.
    % is unachievable.
\end{theorem}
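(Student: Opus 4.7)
The plan is to exhibit an explicit private-chain adversarial strategy (a variant of the classical private attack) and compute its success probability in closed form. Fix any latency $t > 0$. In the zero-delay setting ($\Delta = 0$ in Definition~\ref{def:Delta}), every honest miner instantly sees every honest block, so the honest longest chain at any time $r$ is the unique chain of height $H_r$ and it contains every honest block mined by $r$. In particular, any honest block mined by time $s$ sits in the honest chain at time $s+t$, so the event in Definition~\ref{def:unachievable} reduces to: the adversary eventually produces a longer chain not containing this block.

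The attack I would analyze: the adversary mines a single private chain rooted at (or before) the honest-chain tip at time $s$, continues to mine privately thereafter, and publishes the fork the first time its height strictly exceeds the honest chain's height at some $r \geq s+t$. The adversary may begin pre-mining at time $0$, which only helps. To compute the success probability I would condition on the race state at $s+t$. Let $H' = H_{s+t} - H_s$ and $A' = A_{s+t} - A_s$; by Definition~\ref{def:processes} these are independent Poisson with means $\alpha t$ and $\beta t$, so $H' - A'$ is Skellam-distributed with parameters $\alpha t$ and $\beta t$. Writing the honest-chain lead at $s+t$ as $k$ with $k - 1 = H' - A'$ produces the Skellam factor $p(k-1;\alpha t, \beta t)$ after reindexing.

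Given the lead $k$ at time $s+t$, a standard first-passage/gambler's-ruin identity for the compound-Poisson walk $A_r - H_r$ (with negative drift $\beta - \alpha < 0$) shows that the probability of ever closing a $k$-block gap equals $(\beta/\alpha)^k$, which supplies the $(\beta/\alpha)^k$ factor in the sum. The remaining multiplier $(1 + k(1-\beta/\alpha))$ is the delicate piece: it captures the multiplicity of distinct confirmed targets that a single overtake event can simultaneously displace, together with the correlated chances the adversary has to trigger such an event within one private chain. I would extract this factor by a Wald-type calculation applied to the strict upper-ladder epochs of the post-$s+t$ walk, using that ladder overshoots for the $\pm 1$-jump Poisson walk are $\mathrm{Geometric}(1 - \beta/\alpha)$-distributed.

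The main obstacle I anticipate is precisely this $(1 + k(1-\beta/\alpha))$ correction: a naive union bound over the candidate target blocks overcounts, while a single-target analysis misses the factor entirely. The rigorous route, I expect, is to couple the success event to the first strict upper-ladder crossing of the post-$s+t$ walk, then track how many honest blocks the corresponding overtake displaces and convert this conditional expectation into the desired probability via the ladder-height identity. Once the identity $\sum_{k\geq 0} p(k-1;\alpha t,\beta t)(\beta/\alpha)^k(1+k(1-\beta/\alpha))$ is matched to the attack's success probability, Theorem~\ref{th:zero_unachievable} follows from Definition~\ref{def:unachievable}.
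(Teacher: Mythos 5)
Your attack, your Skellam factor, and your $(\beta/\alpha)^k$ first-passage factor all match the paper's argument. But you have misidentified the origin of the remaining multiplier $1+k(1-\beta/\alpha)$, and the route you propose for it (a Wald-type calculation on ladder epochs of the post-$(s+t)$ walk, counting how many confirmed targets one overtake displaces) would not produce it. In the paper this factor has nothing to do with multiplicity of displaced blocks or ladder overshoots: it comes from the \emph{pre-mining gain}. You mention that the adversary "may begin pre-mining at time $0$, which only helps," but you never fold that gain into the computation. The paper does: before time $s$ the adversary repeatedly mines off the parent of the highest honest block, restarting whenever it falls more than one behind, so its lead $L$ at time $s$ is a birth--death process in steady state and is geometric with parameter $\beta/\alpha$ --- i.i.d.\ with the post-mining gain $-M_+$. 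The success condition is $1+H_{s,s+t}-A_{s,s+t}\le L-M_+$, and conditioning on $H_{s,s+t}-A_{s,s+t}=k-1$ (the Skellam part) leaves exactly $P(L+(-M_+)\ge k)$, the tail of a sum of two i.i.d.\ geometrics (a negative binomial tail), which is
\begin{align*}
P\bigl(L+(-M_+)\ge k\bigr)=\left(\frac{\beta}{\alpha}\right)^k\left(1+k\left(1-\frac{\beta}{\alpha}\right)\right).
\end{align*}
So the "delicate piece" is just the convolution of the pre-mining and post-mining geometric tails; a single-target, post-mining-only analysis gives only $(\beta/\alpha)^k$ and, as you suspected, misses the factor entirely.

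To repair your write-up: drop the displaced-target/ladder-overshoot interpretation, make the pre-mining phase an explicit part of the attack (Definition~\ref{def:private} in the paper), argue that $L$ is geometric with parameter $\beta/\alpha$ in steady state and independent of the mining during $(s,s+t]$ and of $M_+$, and then evaluate $P(1+H_{s,s+t}-A_{s,s+t}\le L-M_+)$ by summing the Skellam pmf against the negative binomial tail above. That yields \eqref{eq:_e0} directly.
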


Theorems~\ref{th:zero achievable} and~\ref{th:zero_unachievable} have their counterparts in the case of non-zero propagation delays:

\begin{theorem}[Achievable security latency function]
%[Latency--security upper bound for the bounded delay model]
\label{th:achievable}
    Suppose $\Delta>0$ and
    \begin{align} \label{e:beta<}
        \bet < \alp e^{-2\alp\Delta} .
    \end{align}
    Let 
    \begin{align}
        \eta(v)
        & = \frac{\alp v - v^2}
            {v^2 - \alp  v - \alp  v e^{(v-\alp)\Delta}
            + \alp^2 e^{2(v-\alp)\Delta} } \label{eq:phi}
    \end{align}
    and let $\theta$ denote the smallest positive zero of the denominator in~\eqref{eq:phi}.
    {The Nakamoto consensus} achieves the following security--latency function:
    % For every $t>0$, the blockchain system is $\epsilon_1(t)$-secure where
    \begin{align} \label{eq:minv}
        \overline{\epsilon}(t)
        = \min_{v \in (0, \theta)} c^2(v) \,
            e^{- (v-\eta(v)\bet) t}
    \end{align}
    where
    \begin{align}
    \begin{split}
        c(v)
        &= e^{\eta(v)\bet\Delta}
        \left(1-\frac{\bet}{\alp}e^{2\alp\Delta}\right)
        %\left(
        \frac{\eta(v)}{
        \frac1{1+\eta(\eta(v)\bet)}
        -
        \frac1{1+\eta(v)} } % \right) 
        .
    \end{split}
    \end{align}
    \begin{comment}
    barring an event $E_{s,t}$ with probability $\P{E_{s,t}}\le e^{-\Theta(t)}$, 
    all blocks that are mined by time $s$ 
    and included in some honest miner's longest blockchain
    at time $s+t$ 
    are also included in all honest miners' longest blockchains at all later times.
    % at all times after $s+t$.
    %%%!!! IMPORTANT TO USE BOTH s, t.  DROPPING s IS WRONG - THE EVENT E_st DEPENDS ON s.
    To be precise, let
    {
    \begin{align} \label{eq:eta}
        \psi(v) = v - \bet \cdot \frac{\alp v - v^2}
        {v^2 - \alp v - \alp v e^{(v-\alp)\Delta}
        + \alp^2 e^{2(v-\alp)\Delta} }
    \end{align}
    and
    \begin{align} \label{e:c(u)}
        c(v)
        =
        \left( 1 + \frac{v^2+\bet v- \psi^2(v)}{\bet \psi(v)} \right)^2
        \left( 1 + \frac{\psi(v)}v \right)^{\frac{2v}{\psi(v)}}
        e^{2v\Delta} .
    \end{align}
    Let $\theta$ be the smallest positive number that satisfies
        $\psi(\theta) = 0$.
    Then %$\P{E_{s,t}}$ is upper bounded by
    \begin{align} \label{eq:minv}
        \P{E_{s,t}} 
        \le
        \min_{v \in (0, \theta)}
        c(v)
        e^{- \psi(v) t} .
        % \left( 1 + \frac{v^2+\bet v- \psi^2(v)}{\bet \psi(v)} \right)^2
        % \left( 1 + \frac{\psi(v)}v \right)^{\frac{2v}{\psi(v)}}
        % e^{2\Delta v - \psi(v) t} .
    \end{align}}
    \end{comment}
\end{theorem}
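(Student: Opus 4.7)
The plan is to reduce the safety question to a race between the Poisson adversarial process and a renewal process formed by \emph{double-laggers}, derive the MGF of double-lagger inter-arrival times in closed form, and then apply a Chernoff bound whose exponent is precisely $v-\eta(v)\bet$.

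\textbf{Step 1 (Species and renewal structure).} I would first formalize a double-lagger as an honest block $b$ whose surrounding $\Delta$-windows $(T_b-\Delta,T_b)$ and $(T_b,T_b+\Delta)$ each contain no other honest block (with the exact definition calibrated so that $b$ permanently occupies a unique height in every honest miner's longest chain from time $T_b+\Delta$ onward). The strong Markov property of the honest Poisson process at the end of each double-lagger's quiet post-window makes the consecutive inter-arrival times i.i.d., so the arrival times of double-laggers form an ordinary renewal process. Conditioning on the first honest block after a double-lagger---either inside the $\Delta$-window (restarting the pattern) or outside (closing it)---yields a rational-plus-exponential MGF $M(v)$ whose smallest positive pole is exactly the $\theta$ of~\eqref{eq:phi}; a direct algebraic manipulation identifies $\eta(v)$ with the key renewal quantity that controls the lower tail of $Z_t:=\V{0}{t}$.

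\textbf{Step 2 (Safety reduction).} Next, I would show that if any block committed in an honest view by time $s+t$ is later displaced, then the adversary's private fork must have matched the double-lagger count on some race interval $(s-r,s+t+r']$, up to $O(\Delta)$ boundary slack. The reason is that every double-lagger contributes a uniquely-held height in all honest views and cannot be simultaneously matched by more than one adversarial block, so a successful attack forces $\A{s-r}{s+t+r'} \ge \V{s-r}{s+t+r'}$. This reduction is the \emph{main obstacle}: it must hold uniformly over all adversarial strategies, including non-monotone release patterns that try to turn pairs of ordinary honest blocks into same-height collisions. Double-laggers are engineered to be immune to such tactics, at the cost of being scarcer than generic honest blocks; this is precisely where existing proofs have been fragile (as flagged in the introduction), so I would handle it with an explicit forward/backward pivot argument on the tree of mined blocks.

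\textbf{Step 3 (Chernoff race and optimization).} Finally, on a race interval of length $T=t+O(\Delta)$, a standard exponential tilt gives, for any $v\in(0,\theta)$,
\begin{align*}
\P{A_T\ge Z_T} \le \expect{e^{v(A_T-Z_T)}} = e^{\bet(e^{v}-1)T}\,\expect{e^{-vZ_T}},
\end{align*}
where $\expect{e^{-vZ_T}}$ is expressed through $M(v)$ via the renewal identity. Algebraic simplification, combined with condition~\eqref{e:beta<} (which ensures $v-\eta(v)\bet>0$ for small positive $v$), collapses the leading factor to $e^{-(v-\eta(v)\bet)t}$, while the boundary contributions at the two $\Delta$-endpoints of the race interval furnish the prefactor $c^2(v)$---the square reflecting one $c(v)$ factor per endpoint. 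Minimizing over $v\in(0,\theta)$ then yields the bound~\eqref{eq:minv}. A subsidiary technical difficulty is showing that the bound holds simultaneously for all choices of $s$ (so the attack cannot adaptively pick the starting time), which I would handle by a stationary-increments argument on the honest Poisson process.
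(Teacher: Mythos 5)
Your overall architecture matches the paper's: reduce safety to a race between the adversarial Poisson process and a renewal process of ``isolated'' honest blocks, derive the closed-form MGF of the renewal time (whose smallest positive pole is $\theta$), and apply a Chernoff bound with exponent $v-\eta(v)\bet$. Steps~1 and~2 are essentially the paper's Lemmas on loners/double-laggers and the pairing argument (the paper separates \emph{loners}, which uniquely hold their heights, from \emph{double-laggers}, the first honest block after a loner, precisely because the latter's arrivals are stopping times of the honest process and hence cleanly form a renewal process; your merged definition glosses over this but is repairable).

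The genuine gap is in Step~3. The achievability definition requires the confirmed block to survive \emph{at all later times}, and the adversary may begin pre-mining arbitrarily early; so the bad event is not a race on one interval of length $t+O(\Delta)$ but the union over all $c\in[0,s]$ and $d\in[s+t,\infty)$ of the events $\{\V{c}{d}\le \A{c}{d}+O(1)\}$. A Chernoff bound on a single fixed interval does not control this union, and a naive union bound over $c,d$ diverges. The paper collapses the union into a single event involving the maximum pre-mining gain $M_-$ (a supremum over $c\le s$ of the adversarial-minus-renewal deficit) and the maximum post-mining gain $M_+$ (a supremum over $d\ge s+t$), and it is the Laplace--Stieltjes transform $\mathcal{L}(\phi(u))$ of each of these suprema --- a known closed form for the minimum of a renewal-minus-Poisson difference process --- that produces the two identical factors in $c^2(v)$. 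Your attribution of the square to ``boundary contributions at the two $\Delta$-endpoints'' misidentifies its source: the $\Delta$-slack only contributes the small factor $e^{2\eta(v)\bet\Delta}$, while the dominant part of $c(v)$ is exactly this transform of the unbounded pre/post-mining suprema, which your argument never constructs. Relatedly, the renewal counts on the three segments (before $s$, on $[s,s+t]$, after $s+t$) are \emph{dependent} because a renewal interval can straddle a segment boundary; the paper needs an explicit stochastic-domination lemma (trading one unit of count for independence) to factor the expectation, and your proposal is silent on this. The ``stationary-increments argument'' you invoke addresses only the harmless invariance in $s$, not either of these two issues.
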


% \begin{lemma}
%     Let $(W_t)_{t\ge0}$ be an ordinary renewal process whose increments have mean $\mu$ and moment generating function 
% \end{lemma}

% The security guarantee is essentially exponential in the latency.
{As $c(v)$ and $\eta(v)$ are infinitely differentiable functions, the optimization in~\eqref{eq:minv} is numerically easy.}
While the variable $v$ is optimized in~\eqref{eq:minv} for each latency $t$ of interest, by picking a maximum universal exponent we obtain a slightly weaker exponential security--latency function:
% \begin{corollary} \label{cr:consistency}
%     Under the conditions of Theorem~\ref{th:achievable}, 
%     the following security--latency function is also achievable:
    \begin{align} \label{eq:ce^t}
        \overline{\epsilon}'(t) = c^2(u) \, e^{-(u-\eta(u)\bet)t}
    \end{align}
    where $u$ is a maximizer of the exponent, % $\psi(\cdot)$,
    i.e., $u-\eta(u)\bet\ge v-\eta(v)\bet$ for all $v\in(0,\theta)$.
% \end{corollary}
% \rl{I don't find Coro 3.6 very helpful. We are not using it for anything later, and it does not simplify the formula too much. I prefer to just leave it as a comment rather than a corollary. }
% As we shall see, for 
{For many} practical parameters and typical security levels of interest, $\overline{\epsilon}(t)$ in~\eqref{eq:minv} is very close to an exponential function in $t$.  Hence~\eqref{eq:ce^t} provides an excellent approximation.

Theorems~\ref{th:zero achievable} and \ref{th:achievable} guarantee for all honest miners that all blocks in their longest blockchains that are received $t$ units of time earlier (hence, mined at least that much earlier) will always remain in all honest miners' longest blockchains in the future, except for a probability that is (essentially) exponentially small in $t$.
Unlike existing asymptotic results, these results provide a concrete and close upper bound on the probability of consistency violation under for a given confirmation time.
Equivalently, these results also upper bound the required confirmation time for every desired security level.

\begin{theorem}[Unachievable security lower for given latency]
\label{th:unachievable}
    Let
    \begin{align} \label{eq:xir}
        {
        \xi(\rho)
        =
        \frac{ (1-\rho)(\alp - \bet - \alp\bet\Delta) }
        { \alp - e^{(1-\rho)\bet\Delta} (\alp+\bet-\bet \rho) \rho}
        }
        % \frac{ (1-r)(\alp - \bet - \alp\bet) }
        % { \alp - e^{\bet(1-r)} (\alp+\bet-\bet r) r}
    \end{align}
    and
    \begin{align} \label{eq:qn}
        q(n) = \begin{cases}
        \xi(0) + \xi'(0) \quad & \text{if }\; n=0 \\
        \xi^{(n+1)}(0){/(n+1)!} \quad & \text{if }\; n=1,2,\dots
        \end{cases}
    \end{align}
    For every latency $t>0$, the {Nakamoto consensus} cannot achieve the following security level:
% We have a closed form expression for the Laplace-Stiejes transform of the random variable $N$, which allows us to recover its probability at 0,1,2,\dots.
{ % 20210519 updated to include Delta.
\begin{align} \label{eq:e2}
\begin{split}
    \underline{\epsilon}(t)
    &= 
    \left(1-\frac{\bet}{\alp}\right)
    e^{(\alp-\bet)t}
    \sum_{n=0}^\infty
    \sum_{\substack{k=0\\k+n>0}}^\infty
    q(n) \left(\frac{\bet}{\alp}\right)^k \\
    & \qquad \times F_1(k;\alp t) \, \overline{F}_2(t-(n+k)\Delta;n+k,\alp)
    % \\    &= 
    % % \sum_{n=1}^\infty q(n)
    % \sum_{n=0}^\infty q(n)
    % \sum_{l=0}^\infty \left(1-\frac{\bet}{\alp}\right)
    % \left(\frac{\bet}{\alp}\right)^{l} % p(l)
    % \sum_{i=0}^\infty \ex{-\bet t}\frac{(\bet t)^i}{i!} % P(A_{s,t}=i)
    % \\
    % & \qquad\qquad \times ( 1 - F(t-i-n-l,i+n+l,\alp) )
    % % & \qquad\qquad \times ( 1 - F(t-i-n-l,i+n+l-1,\alp) )
\end{split}
\end{align}
where $F_1(\cdot;\lambda)$ denote the cumulative distribution function (cdf) of a Poisson distribution with mean $\lambda$ and $\overline{F}_2(\cdot;n,\alp)$ denotes the complementary cdf of the Erlang distribution with shape parameter $n$ and rate parameter $\alp$.
}
\end{theorem}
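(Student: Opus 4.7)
The plan is to exhibit an explicit private-mining attack together with worst-case adversarial scheduling of honest block propagation, and then to compute its success probability in closed form; the expression $\underline{\epsilon}(t)$ in~\eqref{eq:e2} will emerge as precisely this probability.

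First, I would specify the attack. From time $0$ onward, the adversary mines a single private chain extending its own tip, keeping all adversarial blocks secret. Simultaneously, the network scheduler under the adversary's control (within $\Delta$-synchrony) withholds every honest block from the other honest miners for the maximal $\Delta$ units of time. The effect is that, after a height-setting block is mined at time $\tau$, any honest block mined within $(\tau, \tau+\Delta)$ cannot see it, mines on a strictly lower chain, and fails to increase the honest maximum height; the next height-increasing block arises only after $\tau+\Delta$. For any target pair $(s, s+t)$ and any honest block $b$ of height $h$ committed in an honest view at time $s+t$, the adversary releases its secret chain the first time $r > s+t$ at which its length strictly exceeds $h$, displacing $b$. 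Under this scheduling, the honest longest-chain height at any instant equals the number of \emph{jumpers}, whose inter-arrival times are $\Delta + \mathrm{Exp}(\alp)$ by the memoryless property of the honest Poisson process; consequently the $n$-th jumper arrives at $n\Delta + G_n$ with $G_n \sim \mathrm{Erlang}(n,\alp)$, which explains the shape of the factor $\overline{F}_2(t-(n+k)\Delta;n+k,\alp)$ in~\eqref{eq:e2}.

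Second, I would compute the success probability by analyzing the race between the Poisson adversarial counter and the jumper renewal process. Decomposing on the number $n$ of jumpers that the adversary must overtake in the target window and on the surplus $k$ of adversarial blocks at the first overtake moment, the joint probability factors into: a Poisson-type factor $F_1(k;\alp t)$, the Erlang complementary cdf $\overline{F}_2(t-(n+k)\Delta;n+k,\alp)$, a geometric factor $(\bet/\alp)^k$ arising from the Lindley-type equilibrium of the adversary's pre-attack head start on $(0,s]$, and a renewal coefficient $q(n)$. The generating function of the $q(n)$ is the rational-exponential probability generating function $\xi(\rho)$ in~\eqref{eq:xir}---that $\xi(1)=1$ can be verified by l'H\^{o}pital's rule---whose denominator exponent $(1-\rho)\bet\Delta$ tracks the $\Delta$-shift in each jumper interval, and whose numerator factor $\alp-\bet-\alp\bet\Delta$ is the drift of the jumper-minus-adversary difference process. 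The prefactor $(1-\bet/\alp)\,e^{(\alp-\bet)t}$ combines the head-start equilibrium normalization with an exponential change of measure that converts renewal-overshoot probabilities into Poisson and Erlang primitives. Summing over $n,k$ with $n+k>0$ (the boundary case $n=k=0$ corresponding to a trivial match at the target block itself) yields~\eqref{eq:e2}.

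The main obstacle is deriving $\xi(\rho)$ cleanly from the jumper renewal structure and justifying the factorization of the success probability into the displayed product. This requires careful bookkeeping of the $\Delta$-shift, of the pre-attack head start accumulated over $(0,s]$, and of the special boundary term $q(0)=\xi(0)+\xi'(0)$, which collapses the ``overtake at the first jumper'' and ``overtake at a later jumper'' regimes into a single closed-form coefficient. I would sanity-check the formula in the limit $\Delta\to 0$, where $\xi(\rho)$ reduces to $(\alp-\bet)/(\alp-\bet\rho)$ and the double series in~\eqref{eq:e2} must collapse to the Skellam-based lower bound $\underline{\epsilon}_0(t)$ of Theorem~\ref{th:zero_unachievable}.
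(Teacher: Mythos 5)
Your overall strategy---a private attack combined with adversarial delay scheduling, reducing to a race between the adversary's Poisson process and the jumper renewal process with inter-arrival law $\Delta+\mathrm{Exp}(\alp)$---is the same as the paper's, and you correctly identify where $\overline{F}_2$ comes from. But there are genuine gaps and a misreading of what the indices in~\eqref{eq:e2} mean. First, your attack imposes maximal delays from time $0$, yet you then invoke a geometric $(\bet/\alp)^k$ pre-mining head start; these are inconsistent. The paper's attack sets all delays to \emph{zero} before time $s$ precisely so that the honest chain is a single Poisson-rate chain and the pre-mining gain $L$ has the geometric pmf~\eqref{eq:pm+}; only after $s$ are delays maximal. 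Second, in~\eqref{eq:e2} the index $n$ is not ``the number of jumpers the adversary must overtake'' and $k$ is not a surplus at overtake time: $k$ is the realized value of $A_{0,t}+L$ (whose pmf is exactly $(1-\bet/\alp)(\bet/\alp)^k e^{(\alp-\bet)t}F_1(k;\alp t)$, which is where the prefactor and the Poisson cdf with mean $\alp t$ rather than $\bet t$ come from), while $n$ is the realized value of the \emph{post-confirmation} mining gain $N=\max\{0,\,-1-\min_{r}\{J'_{0,r}-A'_{0,r}\}\}$, and the Erlang factor is $P(J_{0,t}\le n+k-1)$. The special form $q(0)=\xi(0)+\xi'(0)$ is then just the truncation of that minimum at $-1$, not a merging of ``first jumper'' and ``later jumper'' regimes.

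The two steps you flag as ``the main obstacle'' are in fact the substance of the proof, and your proposal does not supply them. (i) The jumper counts on $(s,s+t]$ and $(s+t,d]$ are dependent (a renewal process has memory), so the claimed factorization over $n$ and $k$ is not automatic; the paper pays for independence with a stochastic-domination argument (Lemma~\ref{lem:dom}), replacing $J_{t,d}$ by an independent copy at the cost of one count, which is also why a ``$-1$'' appears in the definition of $N$. Note that for a \emph{lower} bound one must be careful to use the domination in the direction that only decreases the success probability. (ii) The distribution of $N$, i.e., of the all-time minimum of the jumper-minus-Poisson difference process, is needed in closed form; the paper obtains its Laplace--Stieltjes transform by specializing~\eqref{eq:laplace-stieltjes} (from Kroese--Scheinhardt) with $\phi=\phi_1$ from~\eqref{eq:phi1} and $m=1+1/\alp$, which is what produces $\xi(\rho)$ in~\eqref{eq:xir}. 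Without (i) and (ii) the expression~\eqref{eq:e2} cannot be derived; your sanity check at $\Delta\to0$ is a good idea but does not substitute for them.
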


Evidently, if the infinite sums in~\eqref{eq:_e0} and~\eqref{eq:e2} are replaced by partial sums for numerical evaluation, the resulting (tighter) security level remains unachievable.

% {ARE DELTA=0 A SPECIAL CASE OF THE GENERAL THEOREMS?}

\subsection{Remarks}
\label{s:remarks}
%\paragraph{Remarks.}

{Theorems~\ref{th:achievable} and~\ref{th:unachievable} assume the delay $\Delta>0$.  The bounds therein still apply if we set $\Delta=0$, but are slightly looser than the bounds in Theorems~\ref{th:zero achievable} and~\ref{th:zero_unachievable} for the zero-delay case.}

% We note 
It is important to include the time of interest $s$ in Definitions~\ref{def:achievable} and~\ref{def:unachievable}.
The ``bad events'' 
% $E_{s,t}$ and $F_{s,t}$ 
for security breach
depend on $s$ as well as the latency $t$.  These well-defined events are concerned with block mining times, not how blocks form blockchains.\footnote{\DG{To be rigorous, we do not make} claims such as ``the blockchain/protocol/system satisfies 
%... (e.g., 
consistency or liveness
%) 
\DG{properties}
with probability ...'' \DG{because} %when
those properties themselves are not events in the probability space \DG{defined here}.}
%(the latter is not ``measurable'' in the probability space).  
% The treatment here avoids %fuzzy \rl{imprecise?} 
% imprecise 

% \begin{corollary} \label{c:wp1}
%     Given $t>0$, with probability 1, there exists $s>0$ such that a private attack with pre-mining on the interval $[s,s+t]$ succeeds.
% \end{corollary}

%We make several remarks about our results. 
We note that a number of previous analyses on the Nakamoto consensus assume a finite lifespan of the protocol~\cite{garay2015bitcoin, bagaria2019prism}, that is, a maximum round number is defined, at which round the protocol terminates.
The probability of consistency depends on the maximum round number.  In contrast, this paper does not assume a finite lifespan. Theorem~\ref{th:achievable} states that, barring a small probability event, confirmed blocks remain permanently in all miners' longest blockchains into the arbitrary future.

% For blocks mined by another time $s'$, the bad event $E_{s',t}$ is different than $E_{s,t}$, although its probability satisfies the same bound~\eqref{eq:minv}.  In fact, 
Even though we provide the same security guarantee for {\em every} blockchain after the confirmation latency $t$, {no one can simultaneously guarantee the same for {\em all} blocks that will ever be confirmed.  This is a simple consequence of Murphy's Law:}
If an adversary keeps trying new episodes of attacks,
% aimed at some highest honest block \rl{not quite clear what highest means here or why it is important}, 
with probability 1 a bad event will eventually occur to revert some confirmed honest blocks.
% in Bitcoin.

For technical convenience, 
we regard a block in a miner's longest blockchain to be confirmed after a certain amount of {\em time} elapses since the block is mined or enters the miner's view.  
Nakamoto~\cite{nakamoto2008bitcoin} originally proposed confirming a block after it is sufficiently {\em deep} in an honest miner's longest blockchain.
We believe both
confirmation rules are easy to use in practice. 
% It is slightly more convenient for our proof to use confirmation by time. 
%Our theorem adopts confirmation by time since it gives a slightly smaller latency than confirmation by depth. 
%{Can't say our latency is smaller.  Most often our confirmation time exceeds the time to reach a certain depth; only occasionally it takes longer to reach that depth. Suggest to drop this claim.} \rl{reworded.}
And the two confirmation rules imply each other 
%can be converted 
in probability (see Appendix~\ref{sec:conversion} for further discussion). 

\begin{figure}
\centering
\includegraphics[width=\columnwidth]{./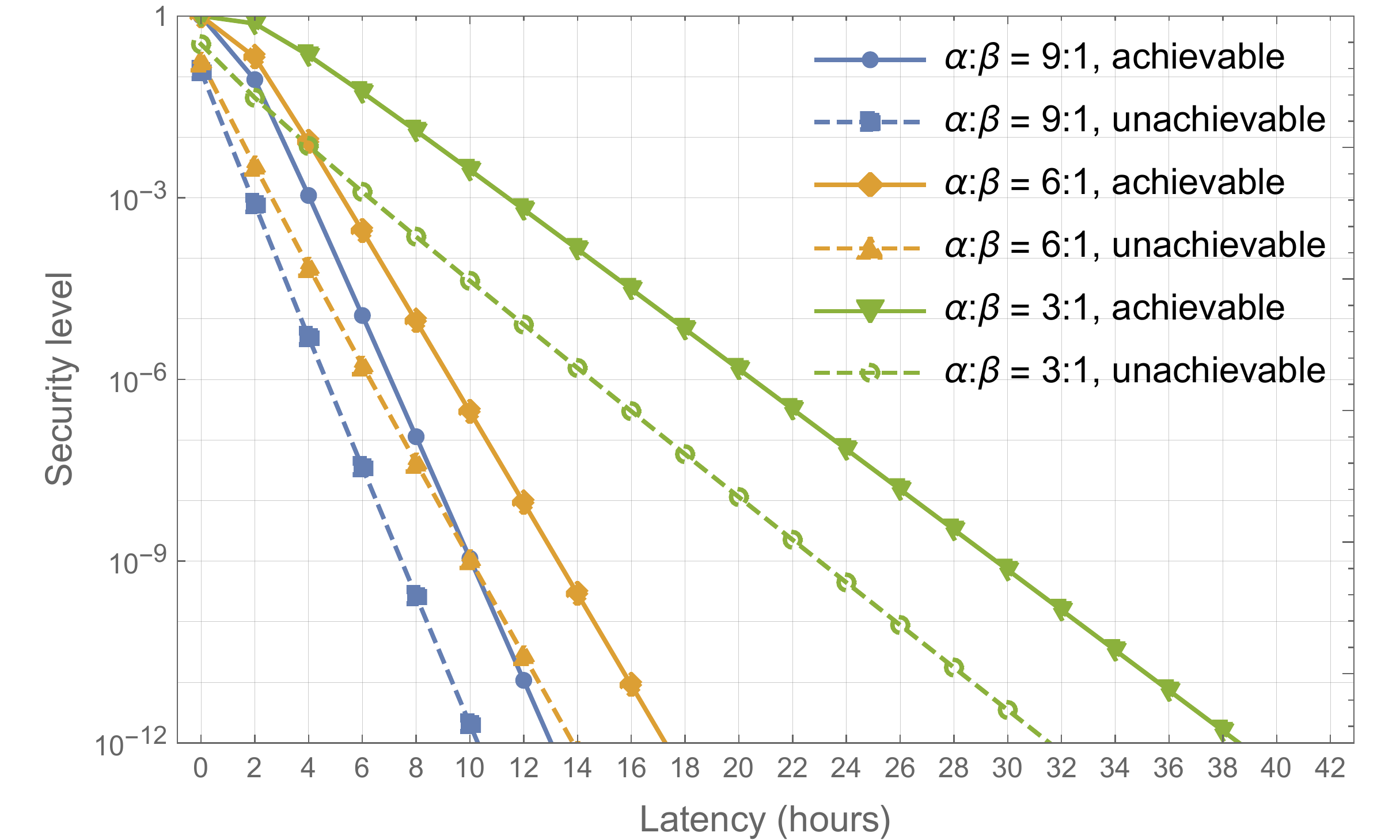}%
%\\
%\includegraphics[clip, trim=2cm 7.5cm 2cm 7.5cm,width=\columnwidth]{./fig/security_latency2}
% \includegraphics[clip, trim=4.5cm 8cm 5cm 8cm,width=\columnwidth]{./fig/security_latency2}%
%\\
%\includegraphics[clip, trim=2cm 7.5cm 2cm 7.5cm,width=\columnwidth]{./fig/security_latency2}
\caption{Bitcoin's latency--security trade-off with $\alp+\bet=1/600$ blocks per second and $\Delta=10$ seconds.}
\label{fig: security_latency}
\end{figure}

\subsection{Numerical Examples} 
% \paragraph{Numerical results.}
The latency--security trade-off under several different sets of parameters is plotted in Figure~\ref{fig: security_latency}.  
The mining rate is set to Bitcoin's one block per $600$ seconds, or $\alp+\bet=1/600$ blocks/second. 
The propagation delay bound is assumed to be $\Delta=10$ seconds.
The latency upper and lower bounds are computed using Theorems~\ref{th:achievable} and~\ref{th:unachievable}, respectively.
In Figure~\ref{fig: security_latency}, 
%the security level (error probability) of the private attack and any possible attack due to Theorem~\ref{th:achievable}  are both asymptotically
all bounds appear to be exponential {for all but very small latency and high error probabilities}.
This implies the exponential bound~\eqref{eq:ce^t} is a good approximation of~\eqref{eq:minv} in Theorem~\ref{th:achievable} for the {typical} range of parameters of interest here.

It is instructive to examine concrete data points in Figure~\ref{fig: security_latency}:
%as examples, 
If the adversarial share of the total network mining rate is 10\% ($\alp:\bet=9:1$), then a confirmation time of four hours is sufficient to achieve $10^{-3}$ security level, and a {ten-hour confirmation} achieves $10^{-9}$ security level.
These results are {about two hours away from} %within \jing{2} hours of 
the corresponding lower bounds.
Also, for every additional hour of latency, the security improves by a factor of approximately 20.
If the adversarial share of the mining rate increases to 25\% ($\alp:\bet=3:1$), then 10 hours 40 minutes and 28 hours 45 minutes of confirmation times achieve $10^{-3}$ and $10^{-9}$ security levels, respectively, and the gap between the upper and lower bounds is {between five and seven} %about \jing{7} 
hours.
In general, the gap is proportionally insignificant at high security levels but can be otherwise at low security levels.
For given mining rates, the gaps {are similar at different}
% is essentially constant under at 
security levels.
This indicates the lower bound~\eqref{eq:e2} is also approximately exponential with a {slightly steeper exponent than that of the upper bound}.
%similar exponent as $\psi(u)$ in Corollary~\ref{cr:consistency} 
% (the exponents are not equal). %the same)}.

% {[With $100$ time units and $\alp + \bet = 1$ per time unit, the probability is far from convergence (with heavy forking); So I picked other parameters.]}

\subsection{The Mining Race}
\label{s:race}

A key proof technique in this paper is to reduce {the} blockchain security question to a race between the Poisson adversarial mining process and a counting process formed by a certain {species} of honest blocks.
In the zero-delay case, the latter is exactly the Poisson honest mining process.
{The race is analyzed in Section~\ref{s:zero}.}
In the bounded-delay case, we identify a {species} of honest blocks that are necessary for the adversary to match block-by-block.  We then upper bound the probability 
{that a matching can occur}
to yield an achievable security latency function (Theorem~\ref{th:achievable}) {in Section~\ref{s:consistency}}.
Conversely, we identify another {species} of honest blocks which are sufficient for the adversary to match to succeed {with a specific} attack.  We then lower bound the probability of such a match to yield the unachievability result (Theorem~\ref{th:unachievable}) {in Section~\ref{s:attack}}.
% We show that the double-lagger counting process and the jumper counting process are both ordinary renewal processes, that is, their inter-arrival times are i.i.d.
% We develop general formulas to lower/upper bound the probabilities that a renewal process wins/loses the race against a Poisson process, which yield close bounds when applied to the blockchain security problem of interest here.

% In Section~\ref{s:zero}, we begin with the zero-delay case where the race between two Poisson processes is relatively easy to analyze.  The result is simple and interesting in its own right.  
% In and~\ref{s:attack}, we generalize the analysis to the bounded-delay case to develop the achievable and unachievable security latency trade-offs, respectively.

%\begin{figure}[bh]
%  \includegraphics[width=\textwidth]{20210421}
%\end{figure}

\section{The Zero-delay Case: The Race between Two Poisson Processes}
\label{s:zero}

In this section we analyze the fundamental security--latency trade-off of the Nakamoto consensus in the special case of zero block propagation delays.  
{
In this case, %the private attack with ``pre-mining'' 
a private attack is known to be the worst among all possible attacks~\cite{dembo2020everything}.
While the gist of the security--latency trade-off is known, the thorough treatment here including both ``pre-mining'' and ``post-mining'' gains demonstrates some key techniques we shall use later in the much more complicated case with positive delay bound.
}
%{Under the private attack, all honest blocks contribute to one blockchain and the adversary mines an independent private blockchain during the interval o.}  Hence it suffices to analyze the race between two independent Poisson processes due to honest and adversarial blocks, respectively.  \rl{unclear why this conclusion follows from the previous statement.}

% \rl{I think we need to say something along the lines that this section is a warmup and does not present new results. It is currently unclear why we have this section at all, given that the zero-delay case is considered solved. }

\subsection{Achievable Security Latency Function}
\label{s:zero delay achievable}

A sufficient condition for blocks mined by time $s$ and confirmed at time $r=s+t$ to remain permanent is that more honest than adversarial blocks are mined during every interval $(c,d]$ that covers $(s,r]$.
We let $F_{s,r}$ denote the ``bad event'' that
\begin{align} \label{eq:h<=a}
    H_{c,d}\le A_{c,d}
\end{align}
for some $c\in[0,s]$ and $d\in[r,\infty)$.  Note that~\eqref{eq:h<=a} is equivalent to
% \begin{align}
%   H_{c,s} + H_{s,r} + H_{r,d} \le A_{c,s} + A_{s,r} + A_{r,d}
% \end{align}
% which can be rewritten as
\begin{align}
  ( H_{c,s} - A_{c,s} )
  +
  ( H_{s,r} - A_{s,r} )
  +
  ( H_{r,d} - A_{r,d} )
  \le
  0 .
\end{align}
For convenience, let
% \begin{align}
%   D = H_{s,r} - A_{s,r}
% \end{align}
% and
\begin{align}
  M_- &= \min_{c\in[0,s]} \{ H_{c,s} - A_{c,s} \} \\
  M_+ &= \min_{d\in[r,\infty)} \{ H_{r,d} - A_{r,d} \}
\end{align}
where the adversary {attains} the maximum pre-mining {gain} if the attack began at the minimizer $c$, and the minimizer $d$ is the point of time where the adversary attains the maximum post-mining gain. % after time $r$.
It is easy to see that the bad event can be expressed as
\begin{align} \label{e:FHM}
  F_{s,r}
  =
  \{ M_- + H_{s,r} - A_{s,r} + M_+ \le 0 \} .
\end{align}
% To establish~\eqref{e:FHM}, it suffices to show that the RHS of~\eqref{e:FHA} and the RHS of~\eqref{e:FHM} are identical.  In particular, if
% \begin{align}
% ( H_{c,s} - A_{c,s} )
% +
% ( H_{s,t} - A_{s,t+\epsilon} )
% +
% ( H_{t,d} - A_{t+\epsilon,d+\epsilon} )
% \le
% 0
% \end{align}
% holds for some $c,d$, then we also have
% \begin{align}
%   M_- + D + M_+ \le 0
% \end{align}
% and vice versa.
Because the $H$ and $A$ processes are independent and memoryless, the variables $M_-, H_{s,r}, A_{s,r}, M_+$ are mutually independent.
% According to~\cite{kroese1992difference},
% \begin{align}
%   \expect{ \rho^{-M_+} }
%   =
%   \frac{ \alp - \bet }{ \alp - \bet \rho }
% \end{align}
% where the region of convergence is $r\in(0,\alp/\bet)$.

% It has been shown in~\cite{}
Note that $M_+$ is equal to the minimum of an asymmetric random walk where each step is $+1$ with probability $\alp/(\alp+\bet)$ and $-1$ with probability $\bet/(\alp+\bet)$.  It is well known that $-M_+$ is a geometric random variable with parameter $\bet/\alp$~\cite[Lemma 1]{janson1986moments}:
\begin{align} \label{eq:pm+}
  P(-M_+=k)
  =
  \left( 1-\frac{\bet}{\alp} \right) \left( \frac{\bet}{\alp} \right)^k, \qquad k=0,1,\dots.
\end{align}
Similarly, $M_-$ is equal to the minimum of the same random walk but with a finite duration $s$.  The distribution of $M_-$ converges to~\eqref{eq:pm+} as $s\to\infty$.  Evidently, $M_-$ is lower bounded by a random variable that is identically distributed as $M_+$.

%The variable $D$ is the difference between two Poisson random variables.  
We use the Chernoff's to upper bound the error probability: For every $\rho\in(0,\alp/\bet)$,
\begin{align}
  P(F_{s,r})
  &\le \expect{ \rho^{-(M_- + H_{s,r} - A_{s,r} + M_+)} } \\
%   &= \expect{ \rho^{-M_-} } \expect{ \rho^{-M_+} } \expect{ \rho^{-H_{s,r}} } \expect{ \rho^{A_{s,r}} } \\
  &\le \left( \expect{ \rho^{-M_+} } \right)^2
    \expect{ \rho^{A_{s,r}} } \expect{ \rho^{-H_{s,r}} } \\
  &= \left( \frac{ \alp - \bet }{ \alp - \bet \rho } \right)^2
    e^{\bet (r-s)(\rho-1)} e^{\alp (r-s)(\rho^{-1}-1)} \label{eq:F<=exp}
\end{align}
where we have used 
% Lemma~\ref{lem:poisson} as well as the 
{the following} Laplace-Stieltjes transforms: % of the geometric distribution.

\begin{lemma} \label{lem:poisson}
  Let $X$ be a Poisson random variable with parameter $q$.  For every $\rho>0$, we have
  \begin{align}
    \expect{ \rho^X }
    =
    e^{\lambda(\rho-1)} .
  \end{align}
\end{lemma}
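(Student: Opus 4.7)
The plan is to prove this directly from the definition of expectation and the Poisson probability mass function, recognizing that the result is just the probability generating function of a Poisson random variable evaluated at $\rho$. Since $X$ takes values in the non-negative integers with $P(X=k) = q^k e^{-q}/k!$, the expectation $\mathbb{E}[\rho^X]$ is an absolutely convergent series for every $\rho > 0$ (and in fact for every real $\rho$), so no convergence subtleties arise.

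First I would write
\begin{align}
\expect{\rho^X} = \sum_{k=0}^{\infty} \rho^k \cdot \frac{q^k e^{-q}}{k!} = e^{-q} \sum_{k=0}^{\infty} \frac{(\rho q)^k}{k!}.
\end{align}
Next I would recognize the remaining sum as the Taylor expansion of the exponential function at $\rho q$, yielding $e^{\rho q}$. Combining the two exponential factors gives $e^{-q} \cdot e^{\rho q} = e^{q(\rho-1)}$, which matches the claimed formula (the $\lambda$ in the statement appears to be a typographical variant of $q$). There is essentially no obstacle here since this is a one-line computation; the only thing to note is that convergence of the series is uniform on compact sets in $\rho$, so interchanging the sum and the factor $e^{-q}$ is fully justified. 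If desired, one could alternatively invoke the known formula for the probability generating function of a Poisson distribution and skip the computation, but the direct derivation is short enough to include.
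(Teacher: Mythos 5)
Your computation is correct and is the standard one-line derivation; the paper states this lemma without proof as a well-known Laplace--Stieltjes transform fact, so there is nothing to compare against. You are also right that the $q$ in the hypothesis and the $\lambda$ in the displayed formula are the same parameter (a typographical inconsistency in the statement).
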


{
\begin{lemma} \label{lem:geometric}
  Let $L$ be a geometric random variable with pmf $(1-q)q^k$, $k=0,1,\dots$.  For every $\rho>0$, we have
  \begin{align}
    \expect{ \rho^L }
    =
    \frac{1-q}{1-q\rho} .
  \end{align}
\end{lemma}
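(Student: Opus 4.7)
The plan is to compute $\expect{\rho^L}$ directly from the definition of expectation as a discrete sum, then recognize the resulting series as a geometric series and sum it in closed form.

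First I would write
\begin{align}
    \expect{\rho^L}
    = \sum_{k=0}^\infty \rho^k \, P(L=k)
    = (1-q) \sum_{k=0}^\infty (q\rho)^k ,
\end{align}
pulling the constant factor $(1-q)$ out of the sum since the pmf of $L$ is $(1-q)q^k$ for $k=0,1,\dots$. Next I would invoke the standard geometric series identity $\sum_{k=0}^\infty x^k = 1/(1-x)$ with $x=q\rho$, which converges whenever $|q\rho|<1$, to conclude
\begin{align}
    \expect{\rho^L}
    = \frac{1-q}{1-q\rho} .
\end{align}

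The only subtlety worth flagging is the implicit restriction on $\rho$: the series converges (and the formula is meaningful as a finite quantity) only for $\rho<1/q$, whereas the lemma is stated for all $\rho>0$. Since the lemma is invoked in~\eqref{eq:F<=exp} with $\rho\in(0,\alp/\bet)$ applied to $M_+$ whose parameter is $q=\bet/\alp$, we have $q\rho<1$ in the regime of interest, so this is not an obstacle for the subsequent use; I would simply note this convergence condition in a parenthetical remark. There is no real technical difficulty here — the result is a one-line geometric-series evaluation presented as a standalone lemma for clean citation in the bound leading to~\eqref{eq:F<=exp}.
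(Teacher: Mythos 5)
Your proof is correct and is the standard geometric-series evaluation; the paper states this lemma without proof as a known Laplace--Stieltjes transform, so there is no alternative argument to compare against. Your observation that the identity actually requires $q\rho<1$ (rather than all $\rho>0$ as stated) is a fair and accurate caveat, and as you note it holds in the regime where the lemma is applied.
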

}

To obtain the tightest bound, we may optimize $\rho$ by setting the derivative with respect to $\rho$ to zero.  It is straightforward to show that the best $\rho$ is the unique real-valued root of a third order polynomial.
For simplicity, however, we use a specific choice
\begin{align}
  \rho = \sqrt{\frac{\alp}\bet}
\end{align}
which is asymptotically the best choice with $r-s\to\infty$ (it minimizes $\bet \rho+\alp/\rho$).  
The upper bound~\eqref{eq:F<=exp} then becomes~\eqref{eq:e0} in Theorem~\ref{th:zero achievable}.

\subsection{Unachievable Security for Given Latency}
\label{s:zero delay unachievable}

To examine how tight the achievable functions are, we shall compare them to unachievable security levels due to the following well-known attack:

\begin{definition}[Private attack with pre-mining and post-mining]
\label{def:private}
    Let $s>0$ and $t>0$ be given.  % deterministic numbers.  
    The adversary mines in private and publishes no blocks until it attains a longest blockchain in the entire system at or after time $s+t$, at which point it publishes this blockchain to conclude a successful attack.
    Specifically, until time $s$, the adversary always mines off the highest of all blocks in the system except for the highest honest block (which it tries to undermine).  From time $s$ onward, the adversary always mines off the tip of its longest private blockchain.
    % A private attack strategy on block $b$, denoted as $\zeta_b$, is described as follows: 
    % \begin{itemize}
    % \item As soon as $b$ is mined, the adversary starts to mine a private adversarial blockchain that extends block $f_b$; and
    % \item Starting from (including) block $b$, every newly mined honest block enters all other honest miners' view exactly $\Delta$ units of time thereafter. %takes $\Delta$ time to reach other honest miners.
    % \item Let the privately mined blockchain enter all honest miners' view at a later time.
    % \end{itemize}
\end{definition}

% As alluded to in Section~\ref{sec:model}, the adversary is given the advantage of arbitrarily delaying %maximally delaying %manipulating 
% block propagations % times
% subject to the delay bound (Definition~\ref{def:Delta}).
% %

% Consider the private attack with pre-mining in Definition~\ref{def:private}.
Until time $s$, as soon as the adversary's private blockchain is more than one height behind, it starts afresh to mine off the parent of the highest honest block.  Let $L_r$, $0\le r\le s$, denote the pre-mining gain at time $r$, which is defined as the adversarial advantage relative to the parent of the highest honest block by time $r$.  It is not difficult to see that $(L_r, 0\le r\le s)$ is a birth--death process~\cite{ross1996stochastic} with constant birth rate $\bet$ and death rate $\alp$.  For all practical purposes $s$ is sufficiently large, so we regard the process to be in its steady state.  Hence the pmf of the pre-mining gain at time $s$ (denoted as $L$) is said to be the geometric distribution with parameter $\bet/\alp$ (the same as~\eqref{eq:pm+}).
% \begin{align} \label{eq:geometric}
%   P( L=l )
%   =
%   \left( 1-\frac{\bet}{\alp} \right) \left( \frac{\bet}{\alp} \right)^{l+1}
% \end{align}
% for $l=0,1,2,\dots$.

Following similar arguments as in Section~\ref{s:zero delay achievable}, the adversary wins the race if the honest mining advantage during $(s,s+t]$ fails to counter the adversary's pre-mining and post-mining gains:
\begin{align} \label{eq:l1ham}
    1 + H_{s,s+t} - A_{s,s+t} \le L - M_+
\end{align}
where the ``1'' is due to the highest honest block mined by $s$.
% where the pmf of $M_+$ is given by~\ref{eq:pm+}
% Plugging in the Poisson pmfs, the probability of the event~\eqref{eq:l1ham} is written as
% %~\eqref{eq:_e0}.
% \begin{align} \label{eq:_e0lm}
%     \underline{\epsilon}_0(t)
%     =
%     \sum_{i=0}^\infty \sum_{j=0}^\infty
%     e^{-\alp t}
%     \frac{(\alp t)^i}{i!}
%     e^{-\bet t}
%     \frac{(\bet t)^j}{j!}
%     P( L - M_+ \ge i - j + 1 ) .
% \end{align}
The difference of the two independent Poisson random variables takes the Skellam distribution{~\cite{skellam1946frequency}}.  Also, the complementary cdf of the sum of i.i.d.\ geometric random variables $L$ and $-M_+$ takes the form of
    \begin{align} \label{eq:gn}
        % g(n)        =
        \left( \frac{\bet}{\alp} \right)^n
        \left( 1 + n \left( 1 -  \frac{\bet}{\alp} \right) \right) .
    \end{align}
%~\eqref{eq:gn}.  
Hence the probability of the event~\eqref{eq:l1ham} can be rewritten as~\eqref{eq:_e0} in Theorem~\ref{th:zero achievable}.

%\rl{Where is $L$ defined, and why not keep using $M_-$?} {$L$ is now defined.  Not sure $L$ (a specific pre-mining gain) and $M_-$ (the best possible pre-mining gain) are the same in general.}
% \rl{I think you are summing over the possible values of H and A. Another option is to sum up over M- and M+, which will give two geometric and a Skellam pmf. It might look simpler (by hiding the complexity inside the Skellam. } {Great suggestion.}

\section{Achievable Security with Bounded Delay \DG{(Proof of Theorem~3.5)}}
%Theorem \ref{th:achievable}}
\label{s:consistency}

As in the zero-delay case studied in Section~\ref{s:zero delay achievable}, we establish the achievable security latency function %of Theorem~\ref{th:achievable} 
by studying a race between two mining processes.
Because delays cause forking, we only count 
% consists of two main steps:
% We first identify a ``bad event'' barring which a block becomes \emph{permanent} (or irreversible) once confirmed.  We then show the bad event is {\em atypical} in the sense that its probability vanishes with the confirmation time.
% Combining the two yields an achievable security latency function.
% an upper bound on the probability of consistency violation.
%, i.e., a confirmed block is later reversed. 
% To be more specific, the bad event involves 
a special {species} of honest blocks called \emph{loners}. 
A loner is an honest block that is not mined within $\Delta$ units of time of other honest blocks.
Let $b$ be a block mined {by} time $s$ and included in some honest miner's longest blockchain at time $s+t$. 
Section~\ref{sec:proof:safety} proves that block $b$ must be ``permanent'' if for all ${c} \leq s$ and ${d} \geq s+t$, more loners than adversarial blocks are mined during $({c,d}%a,b
]$. % (so the bad event does not occur).

The race between the loner counting process and the adversarial mining process is difficult to analyze directly.  To make progress, we identify a renewal process as a surrogate of the loner process in Section~\ref{s:mgf}.  In Appendix~\ref{s:renewal-poisson} we develop a general formula to upper bound the probability that a renewal process loses the race against an independent Poisson process (this formula is useful in its own right).  In Section~\ref{s:2lagger-poisson} we invoke a special case of the formula to prove Theorem~\ref{th:achievable}.
% The probability distribution of loners is difficult to analyze. 
% We define another species of honest blocks called \emph{double-laggers}.
% Double-laggers have one-to-one correspondence with loners but are easier to count since they form a renewal process. 
% We derive the MGF of double-laggers' inter-arrival times in Section~\ref{s:mgf}. 
% This allows us to show that the probability of the bad event vanishes essentially exponentially with the latency in Section~\ref{s:race}.
%Indeed, the bad event is {\em atypical}.} %to tightly bound the probability of the typical event. 

For convenience and better intuition, we specifically choose the time unit to be equal to the block propagation delay bound in this proof.  
Hence $\Delta$ units of time in Theorem~\ref{th:achievable} becomes one (new) unit of time here.  This obviously normalizes the block propagation delay bound to 1 new unit of time.  Consequently,
%, we will use $\Delta$ as the time unit in the proof. This will make the network delay 1 (as the unit is $\Delta$ itself).
%To elaborate, 
the mining rate, aka % are now defined as 
the expected number of blocks mined \emph{per
new unit of time}, is equal to the expected number of blocks mined per maximum {allowable} delay.
With slight abuse of notion, we still use $\alp$ and $\bet$ as the mining rates under the new time unit.
At the end of the analysis we will recover Theorem~\ref{th:achievable} with an arbitrary time unit.

\subsection{Consistency Barring Loner Deficiency}
\label{sec:proof:safety}

\begin{definition}[Publication]
    A block is said to be published by time $t$ if it is included in at least one honest miner's view by time $t$.
    A blockchain is said to be published by time $t$ if all of its blocks are published by time $t$.
\end{definition}

\begin{definition}[$t$-credibility]
\label{def: c longest blockchain}
    A blockchain is said to be \emph{$t$-credible} if it is published by time $t$ and its height is no less than the height of any blockchain published by time $t-1$.
    If $t$ is unspecified, the blockchain is simply said to be credible (in context).
\end{definition}

There can be multiple $t$-credible blockchains, which may or may not be of the same {height}.
Once a block is published, it takes no more than 1 unit of time to propagate to all miners.  Hence at time $t$, an honest miner must have seen all blockchains published by $t-1$.  It follows that every honest miner's longest blockchain must be $t$-credible.  As we shall see, it is unnecessary to keep tabs of individual miner's views as far as the fundamental security is concerned.  Focusing on credible blockchains allows us to develop a simple rigorous proof with minimal notation.

\begin{definition}[Lagger] \label{def: lagger}
    An honest block mined at time $t$ is called a {\em lagger} if it is the only honest block mined during $[t-1, t]$.  By convention, the genesis block is honest and is regarded as the 0-th lagger.
\end{definition}

\begin{definition}[Loner]
    %[Loner and double-lagger]
    An honest block mined at time $t$ is called a {\em loner} if it is the only honest block mined during ${[t-1, t+1]}$.
    %The honest block mined after a loner is called a double-lagger.
    %If the last honest block mined before a lagger is also a lagger, then the former is called a {\em loner} and the latter a {\em double-lagger}.
\end{definition}

%\begin{lemma} \label{lemma: d loner only}
%    A loner mined at time $t$ is the only honest block mined during $[t-1, t+1]$.
%\end{lemma}
%\begin{proof}
%    By definition, a loner is a lagger, so a loner mined at time $t$ must be the only honest block mined during $[t-1,t]$.  Since it is also followed immediately by a lagger, the loner must also be the only honest block mined during $[t, t+1]$.
%\end{proof}

\begin{lemma} \label{lemma: d unique loner}
A loner is the only honest block at its height.
\end{lemma}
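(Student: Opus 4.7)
The plan is to argue by contradiction: suppose a loner $b$ shares its height with some other honest block $b'$, and derive a conflict between the longest-chain rule for honest miners and the $\Delta$-synchrony guarantee, which in this section is normalized to $\Delta=1$.

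Let $b$ be a loner mined at time $t$ with height $h$, and suppose some honest block $b'\neq b$ at height $h$ is mined at time $t'$. By the loner property (Definition of ``loner''), $t'\notin[t-1,t+1]$, so either $t'<t-1$ or $t'>t+1$. Each case will be handled by the same two-line argument, differing only in which of $b,b'$ is forced to height at least $h+1$.

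Consider first $t'<t-1$. Since $b'$ is honest, it extends its miner's longest blockchain immediately before $t'$, so the entire chain of height $h$ ending at $b'$ sits in $b'$'s miner's view by time $t'$, hence is published by time $t'$. By $\Delta$-synchrony (Definition~\ref{def:Delta}), every block of that chain has reached every honest miner's view by time $t'+1<t$. In particular, $b$'s miner sees a blockchain of height $h$ immediately before $t$, so its longest blockchain at that instant has height at least $h$; as $b$ must extend that longest blockchain, $b$ has height at least $h+1$, contradicting the assumption that $b$ has height $h$. The case $t'>t+1$ is symmetric: the chain of height $h$ ending at the honest block $b$ is published at time $t$, hence propagates to all miners by $t+1<t'$, forcing $b'$ (an honest extension of its miner's longest chain) to have height at least $h+1$.

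There is no real obstacle; the only care needed is to match the loner's exclusion window $[t-1,t+1]$ with the unit propagation delay bound. The window is essentially tight: the ``$\pm 1$'' is precisely what guarantees that the earlier honest block's chain is universally received before the later honest block is mined, which is exactly what the longest-chain rule then uses to produce the strict height gain. If the exclusion window were narrowed below $\Delta$, the argument would break, which is why the definition of a loner (as opposed to a mere lagger) is needed for this uniqueness claim.
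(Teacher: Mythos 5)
Your proof is correct and takes essentially the same approach as the paper's: both handle the two cases (the other honest block mined before $t-1$ or after $t+1$) by combining the $\Delta$-synchrony propagation bound with the longest-chain rule to force a strict height increase, yielding a contradiction. The only difference is that you explicitly track that the entire chain ending at the earlier block (not just that block itself) is published and propagated, a detail the paper leaves implicit.
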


\begin{proof}
    Suppose block $b$ mined at time $t$ is a loner.
    By definition, no other honest block is mined during $[t-1, t+1]$. 
    % Since blockchain $b$ is $t$-credible, by the Bitcoin protocol, \rl{unclear what's the reasoning here. See proposed change.}
    By Definitions~\ref{def:view} and~\ref{def:Delta}, block $b$ is in all honest miners' views by time $t+1$.
    Thus, the height of every honest block mined after $t+1$ is at least ${\h{b}+1}$.
    Similarly, if an honest block is mined before $t-1$, its height must be smaller than $\h{b}$; otherwise, block $b$'s height would be at least $\h{b}+1$.  Hence, no other honest block has height $h(b)$.
    % is at the same height as block $b$.
\end{proof}

% \begin{definition}[Loner] \label{def: loner}
%     An honest block mined at time $t$ is called a {\em loner} if it is the only honest block mined during $[t-\Delta, t+\Delta]$.
% \end{definition}

% A loner is obviously also a lagger; {the first honest block mined after the loner is also a lagger}.

% \begin{definition}[Double-lagger] 
% An honest block $b$ is called a double-lagger if both block $b$ and the last honest block mined before it are both laggers.
% {[Equivalent definition. Makes Lemma \ref{D MGF L} a little easier -- no need to discuss whether the genesis block is a loner.}
% % {The first honest block mined after a loner is called a double-lagger.}
% % Suppose block $b$ is an honest block and the most recent honest block before block $b$ is block $d$. 
% % Block $b$  is called a double-lagger if both block $b$ and block $d$ are laggers. Note that block $b$ is a double-lagger if and only if block $d$ is a loner.
% \end{definition}

% {The last honest block mined before a double-lagger must be a loner.  %They are of course both laggers.

Suppose $0\le s < r$.
Let $\A{s}{r}=A_r-A_s$ denote the total number of adversarial blocks mined during $(s,r]$. 
Let $H_{s,r}= H_r-H_s$ denote the total number of honest blocks mined during time interval $(s,r]$.
Let $\X{s}{r}$ denote the total number of laggers mined during  $(s,r]$. 
Let $\Y{s}{r}$ denote the total number of loners mined during $(s,r]$. 
%Let $\V{s}{r}$ denote the total number of double-laggers mined during $(s,t]$. 
%%% THIS CONVENTION IS NEEDED IN LATER PROOF.
By convention, for every counting process $X$, we let $X_{s,r} = 0$ for all $s \ge r$.
%%% THIS CONVENTION IS NEEDED IN LATER PROOF.

\begin{lemma} \label{lemma: d Y < Z}
    Suppose $t \le r$.
    Let $s$ denote the mining time of the highest honest block shared by a $t$-credible blockchain and an $r$-credible blockchain.
    % Suppose blockchain $d$ is $r$-credible and blockchain $d'$ is $t$-credible with $t \le r$.
    % If the highest honest block shared by blockchains $d$ and $d'$ is mined at time $s$, 
    Then
    \begin{align} \label{eq: c Y < Z}
        \Y{s+1}{t-1} \le \A{s}{r}.
    \end{align}
\end{lemma}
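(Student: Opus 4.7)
The plan is to construct an injection from the loners mined in $(s+1, t-1]$ to adversarial blocks mined in $(s,r]$. Call the two blockchains in the statement $C_t$ and $C_r$, and let $h_0$ denote the height of the ``$s$-block,'' i.e., the highest honest block they share.

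First I would show that for each loner $L$ mined at time $\tau \in (s+1, t-1]$, both $C_t$ and $C_r$ contain some block at height $\h{L}$. Since the $s$-block is honest and mined at time $s$, Definition~\ref{def:Delta} puts it in every honest miner's view by time $s+1 < \tau$, so just before $\tau$ the miner of $L$ sees a blockchain of height at least $h_0$. Hence $L$ extends a chain of height at least $h_0$, so $\h{L} \ge h_0 + 1 > h_0$. Moreover, the blockchain ending in $L$ lies in $L$'s miner's view at time $\tau$, hence is published by time $\tau \le t-1 \le r-1$ (using $t \le r$). Definition~\ref{def: c longest blockchain} then forces both $C_t$ and $C_r$ to have height at least $\h{L}$, so each contains a unique block at that height.

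Next I would invoke Lemma~\ref{lemma: d unique loner}: $L$ is the only honest block at height $\h{L}$. Thus the block at height $\h{L}$ in each of $C_t$ and $C_r$ is either $L$ itself or adversarial; and both cannot be $L$, for then $L$ would be a shared honest block at height $\h{L} > h_0$, contradicting the maximality of $h_0$. Consequently, at least one of $C_t, C_r$ contains an adversarial block $B(L)$ at height $\h{L}$. Following that chain from the $s$-block up to $B(L)$, heights and mining times both strictly increase, so $B(L)$ is mined strictly after $s$; and since $B(L)$ sits in a blockchain published by time $r$, it is mined by time $r$. Hence $B(L)$ is counted in $\A{s}{r}$.

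Finally, distinct loners have distinct heights (again by Lemma~\ref{lemma: d unique loner}), so the blocks $B(L)$ are distinct across loners. The assignment $L \mapsto B(L)$ is therefore an injection from loners mined in $(s+1, t-1]$ into adversarial blocks mined in $(s,r]$, yielding~\eqref{eq: c Y < Z}. (If $t < s+2$ the left side is zero and the claim is vacuous.) The main delicate point will be juggling the three interval endpoints simultaneously: one needs $\tau > s+1$ to force $L$ above the $s$-block via propagation, while $\tau \le t-1 \le r-1$ is needed to compel both credible chains to reach the loner's height; this is exactly what the half-open interval $(s+1, t-1]$ encodes.
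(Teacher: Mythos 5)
Your proof is correct and follows essentially the same approach as the paper's: each loner $L$ mined in $(s+1,t-1]$ is matched to a distinct adversarial block at height $\h{L}$ mined in $(s,r]$, using propagation of the $s$-block to force $\h{L}>h_0$, credibility to force both chains to reach height $\h{L}$, and Lemma~\ref{lemma: d unique loner} for uniqueness and for the distinctness of the matched blocks. Your only deviation is cosmetic: where the paper splits into two cases around the highest (possibly adversarial) block shared by the two chains, you argue directly that the blocks at height $\h{L}$ in the two chains cannot both be $L$ (else $L$ would be a shared honest block above the $s$-block), which is a slightly cleaner route to the same conclusion.
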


\begin{proof}
    Let block $e$ denote the highest honest block shared by $r$-credible blockchain $d$ and $t$-credible blockchain $d'$ with $T_e = s$.
    Let block $b$ denote the highest block shared by blockchains $d$ and $d'$.
    Blocks $b$ and $e$ may or may not be the same block.
    \DG{An illustration of the} relationship between these blocks is %illustrated 
    as follows:
    \begin{align} \label{eq:illustration}
        \begin{split}
        \setlength{\jot}{0pt}% tweak
        &\fbox{\color{white}a} -\cdots 
        %        \fbox{\color{white}a}
        -\cdots-\fbox{d} \;\; \\
        &\,\;|\quad\;  \qquad\qquad\qquad \text{time } r \\
        \fbox{e}-\cdots-&\fbox{b}-\cdots-\fbox{d'} \qquad\qquad \\
        \text{time } T_e=s \qquad & \qquad\qquad\qquad \text{time } t 
        %-\epsilon>t
        % \fbox{$e$}--$\cdots$--\fbox{\ }--\fbox{$b$}
        \end{split}
    \end{align}

    If $t-s\le {2}$ or no loner is mined during $(s+1, t-1]$, obviously $ \Y{{s}+{ 1}}{t-{ 1}} = 0 \le \A{{s}}{r}$.
    Otherwise, consider loner $c$ mined during $({s}+{ 1}, t-{ 1}]$.
    We next show that $c$ can be paired with an adversary block mined during $(s,r]$.
    
    Since blockchain $e$ is $s$-credible and block $c$ is mined after time $s+1$, we have
    $\h{c} > \h{e}$. % DG's 20210629 CORRECTION    $\h{c} \ge \h{e}$.
    %by Lemma \ref{lemma: height concensus}. 
    Since blockchain $d$ is $r$-credible and blockchain $d'$ is $t$-credible, we have $\h{c} \le \min\{\h{d}, \h{d'}\}$.  {Consider the following two only possible cases:}
    \begin{enumerate}
    \item If $\h{e} < \h{c} \le \h{b}$, there exists at least one adversarial block at height $h(c)$ because all blocks in blockchain $d$ between block $e$ (exclusive) and block $b$ (inclusive) are adversarial by definition.
    \item If $\h{b} < \h{c} \le \min\{\h{d}, \h{d'}\}$,
    there is at least one adversarial block at height $h(c)$, because two divergent blockchains exist but loner $c$ is the only honest block at its height by Lemma \ref{lemma: d unique loner}.
    % by Lemma \ref{lemma: d unique kth block}.
    \end{enumerate}
    Thus, for {every} loner mined during $({s}+1, {t}-{1}]$, at least one adversarial block must be mined during $({s},r]$ at the same height.
    In particular, the adversarial block must be mined before $r$ because it is published by $r$.  Hence \eqref{eq: c Y < Z} must hold.
\end{proof}

We define {a ``bad event'':} % alluded to at the beginning of this section.

\begin{definition}
    For all $s,r \ge 0$ and $\epsilon\in(0,1)$, let
    \begin{align}
    \F{s}{r} = \bigcup_{a\in [0,s], b \in [r,\infty)} \{\Y{a+1}{b-1-\epsilon} \le \A{a}{b}\} .
    \end{align}
\end{definition}

Let $\epsilon\in(0,1)$ be fixed for now.  We will eventually send $\epsilon\to0$.

\begin{lemma} \label{lemma: d consistency}
    Suppose block $b$ is mined by time $s$ and is included in a $t$-credible blockchain. Then, barring event $\F{s}{t}$, block $b$ is included in all $r$-credible blockchains for all $r\ge t$. 
\end{lemma}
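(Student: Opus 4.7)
The plan is to establish the contrapositive: assume block $b$ is mined by time $s$, lies on some $t$-credible blockchain, yet for some $r \ge t$ fails to appear in some $r$-credible blockchain $d$; I will extract from this configuration a constituent event in the union defining $\F{s}{t}$. Fix such an $r$ and $d$, and fix a $t$-credible blockchain $d'$ that does contain $b$. The first step is to locate the highest honest block $e$ shared by $d$ and $d'$. This is well-defined: the genesis block is honest and sits on every blockchain (so the common honest blocks form a nonempty set), and any two blockchains share a common prefix that is linearly ordered by height, which admits a maximum.

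Next I would argue $T_e \in [0, s)$. Because $e \in d$ while $b \notin d$, we have $e \ne b$. Within the chain $d'$, block $b$ is either a strict ancestor of $e$, equal to $e$, or a strict descendant of $e$; the first two options place $b$ inside the common prefix of $d$ and $d'$ and therefore in $d$, contradicting $b \notin d$. Hence $b$ is a strict descendant of $e$ along $d'$, and since every block is mined strictly after its parent we obtain $T_e < T_b \le s$.

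Applying Lemma~\ref{lemma: d Y < Z} to the pair $(d, d')$ with shared highest honest mining time $T_e$ gives
\begin{align*}
    \Y{T_e + 1}{t-1} \le \A{T_e}{r}.
\end{align*}
For any $\epsilon \in (0, 1)$ the interval $(T_e+1,\, t-1-\epsilon]$ is a sub-interval of $(T_e+1,\, t-1]$, so monotonicity of the loner count yields
\begin{align*}
    \Y{T_e + 1}{t-1-\epsilon} \le \Y{T_e + 1}{t-1} \le \A{T_e}{r}.
\end{align*}
Taking the parameter values $T_e \in [0, s]$ and $r \in [t, \infty)$ in the definition of $\F{s}{t}$, this inequality exhibits one of the events composing the union, so $\F{s}{t}$ occurs and the contrapositive is proved.

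The substantive work is already done by Lemma~\ref{lemma: d Y < Z}; the main obstacle in the present lemma is the combinatorial bookkeeping that guarantees the existence of a highest common honest block $e$ strictly preceding $b$, so that the extracted parameter really lands in $[0, s]$. The $\epsilon$-slack between $t-1$ and $t-1-\epsilon$ plays no role in this step but is retained so that the renewal-process arguments that later bound $\P{\F{s}{t}}$ can accommodate loners born arbitrarily close to the right endpoint of the confirmation window.
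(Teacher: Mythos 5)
There is a genuine gap in the final step. The constituent events of $\F{s}{t}$ are of the form $\{\Y{a+1}{d-1-\epsilon} \le \A{a}{d}\}$ with $a\in[0,s]$ and $d\in[t,\infty)$: the loner count runs up to $d-1-\epsilon$, where $d$ is the \emph{right} endpoint of the adversary's interval. Taking $a=T_e$ and $d=r$, the event you would need to exhibit is $\{\Y{T_e+1}{r-1-\epsilon} \le \A{T_e}{r}\}$, but what you derived is $\Y{T_e+1}{t-1-\epsilon} \le \A{T_e}{r}$. Since $r\ge t$, monotonicity gives $\Y{T_e+1}{t-1-\epsilon} \le \Y{T_e+1}{r-1-\epsilon}$, which is the wrong direction: loners mined in $(t-1-\epsilon,\,r-1-\epsilon]$ could make the required count strictly exceed $\A{T_e}{r}$ even though your truncated count does not. (Choosing $d=t$ instead fails for the symmetric reason, since $\A{T_e}{t}\le\A{T_e}{r}$.) Lemma~\ref{lemma: d Y < Z} inherently bounds loners only up to $t-1$, where $t$ is the time of the \emph{earlier} credible chain, so the one-shot contrapositive cannot close the gap between $t$ and an arbitrary $r\ge t$.

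This is exactly what the paper's $\epsilon$-induction is for. The base case restricts to $r\in[t,t+\epsilon]$, where $r-1-\epsilon\le t-1$ and hence $\Y{T_e+1}{r-1-\epsilon}\le\Y{T_e+1}{t-1}\le\A{T_e}{r}$ — now the monotonicity runs the right way and the exhibited event genuinely lies in $\F{s}{t}$, forcing $T_e>s$ and hence $b\in$ blockchain $e$. The inductive step then advances the confirmation time to $t'=t+n\epsilon$, using that $b$ is already in every $t'$-credible chain and that $\F{s}{t'}\subseteq\F{s}{t}$ (the union over $d\in[t',\infty)$ is a sub-union), and repeats the base argument on $[t',t'+\epsilon]$. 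Your closing remark that the $\epsilon$-slack ``plays no role in this step'' is therefore exactly backwards: $\epsilon$ is the step size of the induction that your argument omits, and without it the lemma as stated is not established for $r$ appreciably larger than $t$. The rest of your setup — locating $e$, showing $T_e<T_b\le s$, and invoking Lemma~\ref{lemma: d Y < Z} — matches the paper and is fine.
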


\begin{proof}
    We first establish the result for $r\in[t,t+\epsilon]$ and then prove the lemma by induction.
    
    Fix arbitrary $r\in[t,t+\epsilon]$.  
    Let block $e$ denote the highest honest block shared by an $r$-credible blockchain and a $t$-credible blockchain that includes block $b$.  We have 
    \begin{align}
        \Y{T_e+1}{r-1-\epsilon}
        &\le
        \Y{T_e+1}{t-1} \\
        &\le
        \A{T_e}{r} \label{eq:YZTe}
    \end{align} 
    where~\eqref{eq:YZTe} is due to Lemma~\ref{lemma: d Y < Z}.
    %Under
    Barring $\F{s}{t}$,
    \begin{align}
        \Y{a+1}{r-1-\epsilon} > \A{a}{r}
    \end{align}
    holds for all $a\in[0,s]$.
    Hence for~\eqref{eq:YZTe} to hold, we must have $T_e>s$.  Since $s\ge T_b$ by assumption, block $b$ must be included in blockchain $e$, which implies that block $b$ must also be included in the $r$-credible blockchain.
    
    Suppose the lemma holds for $r\in[t,t+n\epsilon]$ for some positive integer $n$.
    We show the lemma also holds for $r\in[r,t+(n+1)\epsilon]$ as follows:
    Let $t'=t+n\epsilon$.  
    If $\F{s}{t}$ does not occur, then $\F{s}{t'}$ does not occur either.
    Because block $b$ is included in a $t'$-credible blockchain, a repetition of the $r\in[t,t+\epsilon]$ case with $t$ replaced by $t'$ implies that block $b$ is included in all $r$-credible blockchains with $r\in[t',t'+\epsilon]$.  Hence lemma holds also for $r\in[t,t+(n+1)\epsilon]$.
    
    The lemma is then established by induction on $n$.
\begin{comment}
    If block $b$ is the genesis block, the lemma is trivial. Otherwise, we will show contradiction if blockchain $b$ is not permanent after time $t$.
Contrary to the claim, assume there exists a credible blockchain at or after time $t$ which does not extend $b$.
Suppose $r$ is the smallest number in $[t, \infty)$ such that an $r$-credible blockchain (denoted as blockchain $d$) does not extend $b$. If $r=t$, let $\epsilon = 0$. If $r>t$, then all credible blockchains before time $r$ extends block $b$. 
Pick $\epsilon$ such that $r-\epsilon > t$.
In both cases there must exist an $(r-\epsilon)$-credible blockchain that extends $b$.
Denote this blockchain as blockchain $d'$.
Let block $e$ be the highest honest block shared by blockchain $d$ and blockchain $d'$. 
%Let $T_e = s$ for convenience. % WRONG, s is predefined.
Note that $T_e < T_b < s$.
Under event $\F{s}{t}$ we have
\begin{align}
    \Y{T_e+1}{r-1-\epsilon} > \A{T_e}{r} .
\end{align}
On the other hand, by Lemma \ref{lemma: d Y < Z} we have
\begin{align}
    \Y{T_e+1}{r-1-\epsilon}
     \le \A{T_e}{r},
\end{align} 
a contradiction. 
Thus, under event $\F{s}{t}$, there cannot exist any credible blockchain after time $t$ that does not extend $b$.
    \end{comment}
\end{proof}

Lemma~\ref{lemma: d consistency} guarantees that a block with some confirmation time is permanent/irreversible barring the bad event $\F{s}{r}$. It remains to upper bound $P(\F{s}{r})$ as a function of the confirmation time $r-s$.

Throughout this paper, the {arbitrary interval of interest $[s,r]$ is {\em exogenous}, so that both the honest and the adversarial} mining processes are Poisson {during the interval}.  We note that a recurrent mistake in the literature is to define a time interval according to some miners' views, actions or other observed outcomes. 
The boundaries of such an interval are in fact complicated 
% (random) {\em stopping times\,} 
{random variables} in an all-encompassing probability space.
Several prior analyses~(see, e.g., \cite{Ren2019Analysis, niu2019analysis, li2020continuous}) fail to account for the (subtle but crucial) fact that the mining processes are no longer Poisson when conditioned on {those random variables}.
% stopping times.

\subsection{Some Moment Generating Functions}
\label{s:mgf} 

In this subsection we derive the MGF of several types of inter-arrival times.  These will be useful in the analysis of $P(\F{s}{r})$.

\begin{lemma} \label{lemma: d mgf R0}
    Let $Z$ be an exponential random variable with mean $1/\alp$.  Let the MGF of $Z$ conditioned on $Z\le1$ be denoted as $\phi_0(u)$.
    Then
    \begin{align} \label{eq:Phi0}
    \phi_0(u) = 
    \begin{cases}
        \frac{ \alp (1 -  e^{u-\alp}) }{ (1-e^{-\alp}) (\alp-u)} \quad &\text{ if } u\ne \alp\\
        \frac{\alp}{1-\ea} &\text{ if } u = \alp.
    \end{cases}
\end{align}
\end{lemma}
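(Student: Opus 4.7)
The plan is a direct computation of the conditional expectation from the definition. By definition,
\begin{align*}
\phi_0(u) = \expect{e^{uZ}\mid Z\le 1} = \frac{\expect{e^{uZ}\mathbf{1}\{Z\le 1\}}}{P(Z\le 1)},
\end{align*}
and since $Z$ is exponential with rate $\alp$, the denominator is simply $P(Z\le 1) = 1-\ea$. So the whole task reduces to evaluating the numerator $\int_0^1 e^{uz}\cdot \alp e^{-\alp z}\,dz = \alp\int_0^1 e^{-(\alp-u)z}\,dz$.

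For the generic case $u\ne\alp$, I would perform the elementary integration to obtain $\alp(1-e^{-(\alp-u)})/(\alp-u) = \alp(1-e^{u-\alp})/(\alp-u)$, then divide by $1-\ea$ to recover the first branch of~\eqref{eq:Phi0}. For the degenerate case $u=\alp$, the integrand collapses to $\alp$ and the integral evaluates to $\alp$, so dividing by $1-\ea$ yields $\alp/(1-\ea)$, matching the second branch. As a sanity check, I would verify continuity by taking $u\to\alp$ in the first expression (using L'H\^opital's rule or a first-order Taylor expansion of $1-e^{u-\alp}$ around $u=\alp$), which gives the same limit $\alp/(1-\ea)$ and confirms the piecewise definition is really a single analytic function.

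There is essentially no obstacle here; the lemma is a one-line integral. The only minor care needed is to state the case split cleanly (to avoid dividing by zero when $u=\alp$) and to verify continuity at the split point, both of which are straightforward.
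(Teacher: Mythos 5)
Your proposal is correct and follows essentially the same route as the paper: write the conditional density (equivalently, normalize by $P(Z\le 1)=1-\ea$), evaluate $\alp\int_0^1 e^{(u-\alp)z}\,\mathrm{d}z$, and split the case $u=\alp$. The continuity check at $u=\alp$ is a nice extra but nothing more is needed.
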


% Conditioned on $Z\le1$, $Z$ represents the duration between a non-lagger and its preceding honest block.

\begin{proof}
    The probability density function (pdf) of $Z$ conditioned on $Z\le1$ is simply
    \begin{align}
        \frac{\alp}{1-\ea} e^{-\alp z} 1_{\{0<z<1\}}
    \end{align}
    where $1_{\{\cdot\}}$ represents the indicator function which takes the values of 1 or 0 depending on whether the condition in the braces holds or not.
    The conditional MGF is thus %of $R_0$ is 
\begin{align}
  \phi_0(u)
  & = \expect{ \ex{u Z} | Z\le1 } \\ 
  & = \int_0^1 \frac{\alp}{1-\ea} e^{-\alp z}\ex{uz} \text{d} z
\end{align}
which becomes~\eqref{eq:Phi0}.   
\end{proof}

\begin{lemma} \label{lemma: d mgf R1}
    Let $Z$ be an exponential random variable with mean $1/\alp$.  
    Let the MGF of $Z$ conditioned on $Z>1$ be denoted as $\phi_1(u)$.
    Then
    \begin{align} \label{eq:phi1}
      \phi_1(u) = \frac{\alp e^u}{\alp-u}
    \end{align}
    where the region of convergence is $u\in(-\infty,\alp)$.
\end{lemma}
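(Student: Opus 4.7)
The plan is to compute the conditional MGF by direct integration, exploiting the memoryless property of the exponential distribution to streamline the calculation. Since $P(Z > 1) = e^{-\alpha}$ and $Z$ has density $\alpha e^{-\alpha z}$ on $(0,\infty)$, the conditional density of $Z$ given $Z > 1$ is
\begin{align*}
    f_{Z \mid Z>1}(z) = \frac{\alpha e^{-\alpha z}}{e^{-\alpha}} \, 1_{\{z>1\}} = \alpha e^{-\alpha(z-1)} \, 1_{\{z>1\}}.
\end{align*}
This is exactly the density of $1+Z'$ where $Z' \sim \mathrm{Exp}(\alpha)$, which is the standard memoryless-property restatement.

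From here I would simply write
\begin{align*}
    \phi_1(u) = \mathbb{E}[e^{uZ} \mid Z > 1] = \int_1^{\infty} e^{uz} \alpha e^{-\alpha(z-1)} \, \mathrm{d}z,
\end{align*}
substitute $w = z-1$, and factor out $e^u$ to obtain $e^u \int_0^\infty \alpha e^{(u-\alpha)w} \mathrm{d}w$. The inner integral converges if and only if $u < \alpha$, in which case it equals $\alpha/(\alpha-u)$, producing the claimed formula $\phi_1(u) = \alpha e^u/(\alpha - u)$ together with the stated region of convergence $u \in (-\infty, \alpha)$.

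There is no real obstacle here: the only care needed is to track the region of convergence (the integral diverges for $u \geq \alpha$) and to notice that, unlike the conditional MGF $\phi_0(u)$ in the previous lemma, no case analysis at $u = \alpha$ is required since the denominator vanishes there without being removable. I would conclude by noting explicitly that $\phi_1$ coincides with the MGF of a unit-shifted $\mathrm{Exp}(\alpha)$ random variable, which is a useful sanity check and matches the memoryless-property derivation.
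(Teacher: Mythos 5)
Your proof is correct and follows essentially the same route as the paper: write down the conditional density of $Z$ given $Z>1$ and integrate $e^{uz}$ against it over $(1,\infty)$, with convergence exactly for $u<\alpha$. The memoryless-property framing (viewing the conditional law as $1+\mathrm{Exp}(\alpha)$) is a pleasant sanity check but does not change the computation.
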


\begin{proof}
    % Let $f_1(x)$ denote the  the probability density function of $R_1$. According to Lemma \ref{lemma: d lagger independent}, $f_1(x)$ is a truncated exponential distribution from $1$, i.e.,
    Conditioned on $Z>1$, the pdf of $Z$ is given by
    \begin{align}
        %f_1(x) = 
        e^{\alp(-z+1)} 1_{\{z>1\}}.
    \end{align}
    The conditional MGF is thus: %of $R_1$ is
    \begin{align}
        \phi_1(u)
        % & = \expect{\ex{u R_1}} \\ 
        & = \expect{ e^{uZ} | Z>1 } \\ 
        & = \int_1^{+\infty} e^{\alp(-z+1)}\ex{uz} \text{d} z.
    \end{align}
    The integral converges if and only if $u<\alp$, where the result is given by \eqref{eq:phi1}. %d 0.1}.
\end{proof}

Recall the genesis block is the $0$-th lagger.  For $i=1,2,\dots$, let $X_i$ denote the time elapsed between the mining times of the $(i-1)$-st and the $i$-th lagger.  Let $K_i$ denote the number of honest blocks mined between the $(i-1)$-st lagger (excluded) and the $i$-th lagger (included). 
%The following lemma (proved in Appendix \ref{proof: lagger iid}) shows that $(X_i,K_i)$ are identical and independently distributed random variables for $i=1,2,\dots$

\begin{lemma} \label{LEMMA: LAGGER IID} 
    $(X_1,K_1), (X_2,K_2), \dots$ are i.i.d.
\end{lemma}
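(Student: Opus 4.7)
The plan is to reduce the claim to the elementary fact that an i.i.d.\ sequence decomposed at a sequence of stopping indices produces i.i.d.\ segments. First I would note that by Definition~\ref{def:processes}, honest blocks arrive as a Poisson process with rate $\alpha$, so the inter-arrival times $\tau_1,\tau_2,\dots$ between consecutive honest blocks (with $\tau_1$ measured from time $0$) are i.i.d.\ exponential with mean $1/\alpha$. Introducing the marking $B_j = \mathbf{1}\{\tau_j > 1\}$, the augmented pairs $Z_j = (\tau_j,B_j)$ form an i.i.d.\ sequence, and by Definition~\ref{def: lagger} the $i$-th lagger corresponds precisely to the $i$-th honest block whose preceding gap exceeds $1$. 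Let $N_i$ be the index of the $i$-th lagger (with $N_0 = 0$ for the genesis block). Then $K_i = N_i - N_{i-1}$ and $X_i = \tau_{N_{i-1}+1} + \cdots + \tau_{N_i}$.

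Next I would argue that each $N_i$ is a stopping time with respect to the filtration generated by $Z_1,Z_2,\dots$, since $\{N_i \le m\}$ is determined by $B_1,\dots,B_m$ alone. Writing $S_i := (Z_{N_{i-1}+1}, Z_{N_{i-1}+2}, \dots, Z_{N_i})$, the $i$-th ``block'' of the sequence, the strong Markov property of an i.i.d.\ sequence applied at the stopping time $N_{i-1}$ yields that $S_i$ is independent of $\mathcal{F}_{N_{i-1}}$ and has the same distribution as $S_1$. Iterating this, $S_1, S_2, \dots$ are i.i.d. Since $(X_i, K_i)$ is a measurable function of $S_i$, the pairs $(X_i, K_i)$ are i.i.d., which is the claim.

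If desired, I would make the distribution of $S_1$ explicit to connect with Lemmas~\ref{lemma: d mgf R0}--\ref{lemma: d mgf R1}: conditional on $K_1 = k$, the inter-arrival times $\tau_1, \dots, \tau_{k-1}$ are i.i.d.\ exponential conditioned on being $\le 1$, while $\tau_k$ is independent and exponential conditioned on being $>1$; moreover $P(K_1 = k) = (1-e^{-\alpha})^{k-1} e^{-\alpha}$. This conditional-independence structure follows again from the product form of the density of $Z_1,\dots,Z_k$ together with the stopping condition $\{B_1 = \cdots = B_{k-1} = 0, B_k = 1\}$.

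The argument is essentially bookkeeping; the only genuinely substantive step is invoking the strong Markov property at each $N_{i-1}$, which I expect to pose no real obstacle given the i.i.d.\ starting point. The main care required is purely notational: ensuring that $N_{i-1}$ is recognized as a stopping time for the $Z$-sequence (not for some time-continuous filtration), so that no subtle measurability issue arises.
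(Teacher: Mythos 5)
Your proposal is correct and follows essentially the same route as the paper: both reduce the claim to the i.i.d.\ exponential inter-arrival times of honest blocks, identify the $i$-th lagger as the $i$-th block whose preceding gap exceeds one delay unit, and conclude that the resulting segments of the inter-arrival sequence are i.i.d. The only difference is presentational—the paper verifies the factorization of the joint distribution by writing out the product explicitly, whereas you package the same step as an appeal to the strong Markov property of an i.i.d.\ sequence at the stopping indices $N_i$.
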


Lemma \ref{LEMMA: LAGGER IID} is proved in Appendix \ref{proof: lagger iid}.  {The lemma implies that laggers form an ordinary renewal process.}

\paragraph{Double-laggers.}
%Next, we hope to analyze the arrival processes of loners.  However, the arrival process of loners are 
The loner process is not easy to characterize since whether a block mined at time $t$ is a loner depends not only on the past but also on future blocks (in $(t,t+1]$). 
In order to count loners, we examine a tightly-related species of honest blocks defined as follows.

\begin{definition}[Double-lagger]
    The first honest block mined after a loner is called a double-lagger.
\end{definition}
%%% TECHNICAL ISSUE?  WHAT IF TWO ARE MINED SIMULTANEOUSLY?

Note that a loner is also a lagger. 
% In other words, 
So whenever %there are 
two laggers are mined in a row, the former one is a loner and the latter one is a double-lagger.
% Double-laggers are the blocks mined immediately after loners, thus they have a one-to-one mapping to loners. 
As such, there is a one-to-one correspondence between loners and double-laggers.
We {shall} prove the independence of inter-double-lagger times and derive their MGFs, thus establishing the arrivals of double-laggers as a renewal process. 

Let $\V{s}{r}$ denote the total number of double-laggers mined during $(s,r]$. 
Let $\Vr{1}$ denote the time the first double-lagger arrives. Let $J_1$ be the number of laggers after the genesis block until the first double-lagger (included). For $i>1$, let $\Vr{i}$ denote the time elapsed between the $(i-1)$-st and the $i$-th double-lagger. Let $J_i$ be the number of laggers between the $(i-1)$-st double-lagger to the $i$-th double-lagger.

\begin{lemma} \label{lem:loner-double-lagger}
    For all $0 \leq s \le r$,
    \begin{align}
        \Y{s}{r} \ge \V{s}{r}-1.    
    \end{align}
\end{lemma}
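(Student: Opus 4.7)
The plan is to use the canonical pairing between double-laggers and the loners they immediately follow. Since each double-lagger is, by definition, the first honest block mined after some loner, this pairing is an injection from double-laggers to loners; the $-1$ in the bound arises from the one loner in the pairing that may fall at or before time $s$.

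First, I would enumerate the double-laggers mined during $(s,r]$ in chronological order as $D_1, D_2, \dots, D_k$, where $k = \V{s}{r}$, and for each $j$ let $L_j$ denote the associated loner, so $T_{L_j} < T_{D_j}$ and $D_j$ is the first honest block mined after $L_j$. I would then show that $L_j \in (s, r]$ for every $j \ge 2$. Since $T_{L_j} < T_{D_j} \le r$, it suffices to prove $T_{L_j} > s$. Suppose for contradiction that $T_{L_j} \le s$. Because $T_{D_{j-1}} > s \ge T_{L_j}$ and $D_{j-1} \ne D_j$, the block $D_{j-1}$ is an honest block mined strictly after $L_j$ and strictly before $D_j$, contradicting the defining property of the double-lagger $D_j$. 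Hence $L_j \in (s,r]$. Moreover, the loners $L_2, \dots, L_k$ are pairwise distinct, because the assignment $L \mapsto D(L)$ from a loner to its double-lagger is a function, so equal loners would force equal double-laggers. This supplies $k-1$ distinct loners in $(s,r]$, i.e., $\Y{s}{r} \ge k - 1 = \V{s}{r} - 1$.

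The cases $k \le 1$ are trivial since $\Y{s}{r} \ge 0 \ge \V{s}{r} - 1$. The only mild subtlety, and the source of the $-1$, is that the first paired loner $L_1$ may have been mined at or before time $s$ and therefore not counted in $\Y{s}{r}$. I anticipate no further obstacles; the argument is purely combinatorial and does not invoke any probabilistic machinery, unlike the sequel where the MGFs from Section~\ref{s:mgf} will be used to analyze the renewal process of double-lagger inter-arrivals.
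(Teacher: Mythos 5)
Your proof is correct and takes essentially the same approach as the paper, which disposes of the lemma in one sentence by observing that a loner and its double-lagger are consecutive honest blocks, so every double-lagger in $(s,r]$ except possibly the first has its paired loner in $(s,r]$. Your write-up simply makes explicit the contradiction (via $D_{j-1}$ lying strictly between $L_j$ and $D_j$) that the paper leaves implicit.
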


\begin{proof}
    Because loners and double-laggers appear in consecutive pairs, every double-lagger mined during $(s,r]$ but the first corresponds to a distinct loner mined during $(s,r]$.
\end{proof}

\begin{lemma}\label{LEMMA: DOUBLE-LAGGER IID}
$(\Vr{1},J_1), (\Vr{2},J_2),\dots$ are i.i.d. 
\end{lemma}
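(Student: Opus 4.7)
The plan is to express the double-lagger process as a deterministic functional of the lagger sequence $\{(X_j, K_j)\}_{j\ge 1}$, which is i.i.d.\ by Lemma~\ref{LEMMA: LAGGER IID}, and then invoke the standard regenerative structure of i.i.d.\ sequences.

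The first step is to establish the characterization: \emph{the $(j{-}1)$-st lagger is a loner if and only if $K_j = 1$ and $X_j > 1$}. The ``if'' direction is immediate: when $K_j = 1$ and $X_j > 1$, the only honest block mined after the $(j{-}1)$-st lagger up to and including the $j$-th lagger is the $j$-th lagger itself, and it is mined more than one unit of time later, so no honest block lies in $(T_{j-1}, T_{j-1}+1]$ and the $(j{-}1)$-st lagger meets the loner definition. For the ``only if'' direction, suppose $K_j \ge 2$, and let $b^\star$ be the first honest block mined after the $(j{-}1)$-st lagger. Then $b^\star$ is not a lagger, so there must be another honest block in $[T_{b^\star}-1, T_{b^\star})$; since no lagger lies strictly between the $(j{-}1)$-st and the $j$-th and no non-lagger precedes $b^\star$ in this window, that ``other'' block must be the $(j{-}1)$-st lagger itself, forcing $T_{b^\star} \le T_{j-1} + 1$ and destroying the loner property. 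The remaining case $K_j = 1$ with $X_j \le 1$ puts the $j$-th lagger in $(T_{j-1}, T_{j-1}+1]$ and is handled identically.

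With this characterization in hand, set $n_0 := 0$ and $n_i := \min\{j > n_{i-1} : K_j = 1, X_j > 1\}$; then the $i$-th double-lagger is precisely the $n_i$-th lagger, so $J_i = n_i - n_{i-1}$ and $\Vr{i} = \sum_{j = n_{i-1}+1}^{n_i} X_j$. In particular, $(\Vr{i}, J_i)$ is a fixed measurable function of the block $\{(X_j, K_j)\}_{j \in (n_{i-1}, n_i]}$ of the lagger sequence. Since $\{(X_j, K_j)\}_{j\ge 1}$ is i.i.d.\ and $n_1$ is the first-hitting index of the event $\{K_j = 1, X_j > 1\}$, the strong Markov / renewal property for i.i.d.\ sequences yields that the post-$n_1$ sequence $\{(X_{n_1+j}, K_{n_1+j})\}_{j\ge 1}$ is independent of $\{(X_j, K_j)\}_{1 \le j \le n_1}$ and shares the same joint law as the original. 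Applying the same decomposition of $(\Vr{i}, J_i)$ to the shifted sequence gives $(\Vr{2}, J_2) \stackrel{d}{=} (\Vr{1}, J_1)$ independent of $(\Vr{1}, J_1)$; iterating at $n_2, n_3, \ldots$ yields the desired i.i.d.\ conclusion.

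The main obstacle is the ``only if'' direction of the characterization, where one must rule out the possibility that several non-lagger honest blocks accumulate between two consecutive laggers without violating the loner condition; the crucial observation is that the \emph{first} such non-lagger is deterministically pinned inside the unit interval immediately following the preceding lagger. Once this deterministic characterization is secured, the i.i.d.\ claim is a routine consequence of the renewal structure already provided by Lemma~\ref{LEMMA: LAGGER IID}.
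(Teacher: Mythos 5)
Your proposal is correct and follows essentially the same route as the paper: both identify the double-laggers as the laggers satisfying a coordinate-wise condition on the i.i.d.\ sequence $(X_j,K_j)$ from Lemma~\ref{LEMMA: LAGGER IID} (the paper uses $K_j=1$, which already forces $X_j>1$, so your extra conjunct is redundant but harmless) and then exploit the regeneration of that i.i.d.\ sequence at the successive hitting indices, which the paper carries out by writing the factorization of the joint probability explicitly rather than citing the strong Markov property.
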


Lemma \ref{LEMMA: DOUBLE-LAGGER IID} is proved in Appendix~\ref{s:doublelaggeriid}.

% Let random variable $V$ be the inter-arrival time between double-laggers. 
% {THE PROOF OF THE FOLLOWING LEMMA IS A BIT AWKWARD.  CONSIDER TO USE THE FACT THAT THE MINING OF LAGGERS IS A RENEWAL PROCESS.}

\begin{lemma} \label{lemma: lagger double-lagger dist}
    The time from a lagger to the next double-lagger follows the same distribution as an inter-double-lagger time.
\end{lemma}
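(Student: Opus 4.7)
The plan is to exploit the i.i.d.\ structure of $(X_i, K_i)$ established in Lemma~\ref{LEMMA: LAGGER IID} by expressing the ``next double-lagger'' event as a deterministic function of that sequence alone. Once this reduction is made, the conclusion will follow from a simple time-shift argument on an i.i.d.\ sequence.

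First, I would establish the key characterization that $L_{i+1}$ is a double-lagger if and only if $K_{i+1}=1$. The argument goes through the loner definition: since $L_i$ is already a lagger, no honest block lies in $[T_{L_i}-1, T_{L_i})$, so $L_i$ is a loner iff no honest block is mined in $(T_{L_i}, T_{L_i}+1]$. Let $Z$ be the time from $L_i$ to the next honest block (exponential, by memorylessness of the honest Poisson process); then $L_i$ is a loner iff $Z>1$. But $Z>1$ forces the next honest block after $L_i$ to itself be a lagger, which must therefore coincide with $L_{i+1}$, giving $K_{i+1}=1$ and $X_{i+1}=Z$. Conversely, $K_{i+1}=1$ forces $X_{i+1}=Z>1$ and hence $L_i$ to be a loner. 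Since a double-lagger is by definition the first honest block mined after a loner, $L_{i+1}$ is a double-lagger iff $K_{i+1}=1$.

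Next, for any lagger $L_i$, I would set $J = \min\{j \ge 1 : K_{i+j}=1\}$ and $S = \sum_{j=1}^{J} X_{i+j}$. By the characterization just obtained, $L_{i+J}$ is the first double-lagger after $L_i$ and $S$ is the elapsed time. Crucially, both $J$ and $S$ are measurable functions of the shifted sequence $(X_{i+j}, K_{i+j})_{j\ge 1}$ alone, with no dependence on the past up to and including $L_i$. Lemma~\ref{LEMMA: LAGGER IID} makes this sequence i.i.d.\ with a joint distribution independent of $i$, so the law of $S$ is the same for every choice of the starting lagger $L_i$. Specializing $L_i$ to be itself a double-lagger gives exactly the inter-double-lagger time $\Vr{i+1}$ appearing in Lemma~\ref{LEMMA: DOUBLE-LAGGER IID}, which delivers the stated equality in distribution.

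The main obstacle is the first step, namely the equivalence ``$L_{i+1}$ is a double-lagger $\iff K_{i+1}=1$''. It requires carefully disentangling three overlapping definitions (lagger, loner, double-lagger) and pinning the loner status of $L_i$ on the immediately following pair $(X_{i+1}, K_{i+1})$, so that no global knowledge of future honest blocks leaks into the condition. Once this bridge is in place, the lemma reduces at once to the renewal/stationarity property granted by Lemma~\ref{LEMMA: LAGGER IID}.
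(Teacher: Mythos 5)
Your proof is correct, but it takes a more structural route than the paper's. The paper argues directly on three consecutive honest blocks $b,c,d$: letting $Q$ be the time from $d$ to the next double-lagger, it observes that conditioned on $d$ being a lagger (i.e., $T_d-T_c>1$), $Q$ does not depend on whether $c$ is also a lagger (i.e., $T_c-T_b>1$); conditioning on the first event alone gives the lagger-to-double-lagger time, and conditioning on both makes $d$ a double-lagger, so the two conditional laws coincide. You instead lift the argument to the lagger renewal process: you first prove the characterization that $L_{i+1}$ is a double-lagger iff $K_{i+1}=1$, then write the time from any lagger $L_i$ to the next double-lagger as $S=\sum_{j=1}^{J}X_{i+j}$ with $J=\min\{j\ge1:K_{i+j}=1\}$, a measurable function of the post-$L_i$ pairs alone, and invoke the i.i.d.\ property of $(X_j,K_j)$ from Lemma~\ref{LEMMA: LAGGER IID}. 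What your version buys is that the paper's central assertion (``$Q$ does not depend on whether $c$ is a lagger'') is actually \emph{proved} rather than asserted, by exhibiting the explicit measurability that makes the independence valid --- exactly the kind of care about conditioning that the paper itself advocates elsewhere; your characterization of double-laggers via $K_{i+1}=1$ is also reused implicitly in the proof of Lemma~\ref{D MGF L}, so making it explicit is a plus. The one point you gloss over is that ``the inter-double-lagger time'' starts from a \emph{randomly indexed} lagger (the previous double-lagger), so to transfer the law of $S$ you need that the event ``$L_i$ is the $k$-th double-lagger'' is determined by $(X_j,K_j)_{j\le i}$ and hence independent of the shifted sequence; this follows immediately from your own characterization, so it is a presentational rather than a substantive gap.
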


\begin{proof}
    Let blocks $b$, $c$, $d$ be consecutive honest blocks.  Evidently, $T_c-T_b$ and $T_d-T_c$ are i.i.d.\ exponential random variables. Let $Q$ be the time elapsed from block $d$ to the next double-lagger. 
    If $d$ is a lagger, then $Q$ does not depend on whether $c$ is a lagger. Thus, for all $x$,
    \begin{align}
    %\begin{split}
        P(Q\le x &\,|\,T_d-T_c>1) %\\
        = P(Q\le x \,|\, T_d-T_c>1, T_c-T_b>1) .
    %\end{split}
    \end{align}
    The left hand side is the cdf of the time between a lagger and the next double-lagger; the right hand side is the cdf of an inter-double-lagger time.  Hence the proof.
\end{proof}

For convenience we define the following function $g_\alp(u)$:
\begin{align} \label{eq:hau}
    g_\alp(u)
    =
    u^2 - \alp  u - \alp u e^{u-\alp} + \alp ^2 e^{2(u-\alp)}.
\end{align}
Evidently $g_\alp(u)>0$ if $u\le 0$ and $g_\alp(\alp)=0$.  Also, $g_\alp(u)$ is differentiable with bounded derivative on $[0,\alp]$.  
% Hence $u_0$ exists and $u_0\le \alp$.
From now on, let $u_0$ denote the smallest zero of $g_\alp(\cdot)$, i.e.,
% \begin{align}
    $g_\alp(u_0)=0$,
% \end{align}
and $g_\alp(u)\ne0$ for all $u\in[0,u_0)$. 
We must have $0<u_0\le \alp$.

\begin{lemma}\label{D MGF L}
    % Let $V$ denote an inter-double-lagger time.  The MGF of $V$ is
    {The MGF of the inter-double-lagger time is}
    \begin{align}
        \phi(u)
        & = 1 + \frac{\alp u - u^2}
            {u^2 - \alp  u - \alp  u e^{(u-\alp )}
            + \alp ^2 e^{2(u-\alp)} } \label{eq:doublelaggermgf}
    \end{align}
    where the region of convergence is $(-\infty,u_0)$.
    %, where $u_0$ is the smallest number that satisfies $g_\alp(u_0)=0$.
\end{lemma}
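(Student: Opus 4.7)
The plan is to compute the MGF $\phi(u)$ of the inter-double-lagger time by first-step analysis on a two-state Markov chain whose state tracks whether the most recent honest block is a lagger. By Lemma~\ref{LEMMA: DOUBLE-LAGGER IID} and Lemma~\ref{lemma: lagger double-lagger dist}, $\phi(u)$ equals the MGF of the waiting time from any lagger (``state 1'') to the next double-lagger. I will also introduce the auxiliary MGF $\phi_\star(u)$ of the waiting time from a non-lagger (``state 0'') to the next double-lagger.

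Conditioning on the next honest inter-arrival $Z\sim\mathrm{Exp}(\alp)$, independent of the future, from state 1: if $Z>1$ (probability $\ea$), the new block together with the previous lagger forms a double-lagger, so the contribution is just $Z$; if $Z\le 1$ (probability $1-\ea$), the new block is not a lagger and we must restart from state 0, adding an independent copy with MGF $\phi_\star(u)$. From state 0 the situation is analogous: either we move to state 1 (if $Z>1$) or remain in state 0 (if $Z\le 1$). Using the conditional MGFs $\phi_0$ and $\phi_1$ from Lemmas~\ref{lemma: d mgf R0} and~\ref{lemma: d mgf R1}, this yields
\begin{align}
\phi(u) &= \ea\phi_1(u) + (1-\ea)\phi_0(u)\,\phi_\star(u),\\
\phi_\star(u) &= \ea\phi_1(u)\,\phi(u) + (1-\ea)\phi_0(u)\,\phi_\star(u).
\end{align}
Solving the second equation for $\phi_\star(u)$ and substituting into the first gives
\begin{align}
\phi(u) = \frac{\ea\phi_1(u)\bigl(1-(1-\ea)\phi_0(u)\bigr)}{1-(1-\ea)\phi_0(u)-\ea(1-\ea)\phi_0(u)\phi_1(u)}.
\end{align}

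The remainder is algebraic. Substituting the closed forms $(1-\ea)\phi_0(u)=\alp(1-\ex{u-\alp})/(\alp-u)$ and $\ea\phi_1(u)=\alp\ex{u-\alp}/(\alp-u)$, I expect the denominator to collapse to $g_\alp(u)/(\alp-u)^2$ after expanding $(\alp\ex{u-\alp}-u)(\alp-u)-\alp^2\ex{u-\alp}+\alp^2\ex{2(u-\alp)}$ and tracking the cancellation of the two $\alp^2\ex{u-\alp}$ terms; the numerator simplifies to $(\alp^2\ex{2(u-\alp)}-\alp u\ex{u-\alp})/(\alp-u)^2$. The identity $g_\alp(u)+\alp u - u^2 = \alp^2\ex{2(u-\alp)}-\alp u\ex{u-\alp}$ then rewrites the quotient as $1+(\alp u - u^2)/g_\alp(u)$, matching~\eqref{eq:doublelaggermgf}. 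The main obstacle is exactly this denominator bookkeeping, where one must carefully cancel several $\alp^2\ex{u-\alp}$ terms to recognise $g_\alp(u)$.

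For the region of convergence, $g_\alp(0)=\alp^2\ex{-2\alp}>0$, so by continuity of $g_\alp$ and the definition of $u_0$ we have $g_\alp(u)>0$ on $(-\infty,u_0)$; combined with $\phi_1(u)<\infty$ for $u<\alp$ and $u_0\le\alp$, the closed-form expression is finite throughout $(-\infty,u_0)$. Since the number of attempts until two consecutive inter-arrivals exceed $1$ is geometric, the waiting time is a.s.\ finite and its true MGF is finite in a neighborhood of $0$, where it must agree with the closed form; by analyticity this agreement extends to all of $(-\infty,u_0)$, and divergence as $u\to u_0^-$ shows this is the maximal ROC.
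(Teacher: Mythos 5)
Your proposal is correct and follows essentially the same route as the paper: a first-step decomposition out of the lagger state using the conditional MGFs $\phi_0,\phi_1$ of Lemmas~\ref{lemma: d mgf R0} and~\ref{lemma: d mgf R1} together with Lemma~\ref{lemma: lagger double-lagger dist}, arriving at the identical expression $\ea\phi_1(u)\bigl(1-(1-\ea)\phi_0(u)\bigr)/\bigl(1-(1-\ea)\phi_0(u)-\ea(1-\ea)\phi_0(u)\phi_1(u)\bigr)$ and the same simplification via $g_\alp(u)$. The only cosmetic difference is that you solve a two-state linear system with the auxiliary unknown $\phi_\star$, whereas the paper collapses the run of non-laggers into a single geometric sum by conditioning on $K$; the resulting algebra and the region-of-convergence discussion coincide.
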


Lemma \ref{D MGF L} is proved in Appendix \ref{s:lem:mgf}.
% \rl{need some proof sketch of this lemma.}
The key is to study a Markov process: The initial state is a lagger.  With a known probability a double-lagger follows immediately to terminate the process.  With the remaining probability we visit a non-lagger state a geometric number of times until we return to the initial lagger state.  This allows us to write a recursion for the MGF of the inter-double-lagger time (aka the time till the double-lagger terminal state), the solution of which is~\eqref{eq:doublelaggermgf}.

% \subsection{Bounding the 
% Probability of the {Error Event}}
% %Typical Event \texorpdfstring{$\F{s}{t}$}{Lg}} 
% \label{s:race}

\subsection{Race between the Double-Lagger Process and a Poisson Process}
\label{s:2lagger-poisson}

By Lemma~\ref{lem:loner-double-lagger}, it suffices to study the race between the double-lagger process and the adversarial mining process.
In Appendix~\ref{s:renewal-poisson}, we study the race between a general renewal process and a Poisson process to obtain the following result:

\begin{theorem}\label{th:renewal-poisson}
%[Race between a renewal process and a Poisson process]
    {Let $(W_t)_{t\ge0}$ denote a renewal process where the mean of the renewal time is denoted by $m$.  Let the MGF of the renewal time be denoted by $\phi(u)$ where the region of convergence is $(-\infty,{\overline{u}})$, where ${\overline{u}}$ may be $+\infty$.}  Let $(A_t)_{t\ge0}$ denote a Poisson process with rate $\bet$.
    Let $\mu,\nu\ge0$ be fixed.  For $s,t>0$, the probability that there exists $c\in[0,s]$ and $d\in[s+t,\infty)$ such that the number of renewals in $(c,d]$ is at most the number of Poisson arrivals in $(c-\mu,d+\nu)$ plus $n$ is upper bounded:
    \begin{align} \label{eq:renewal-poisson}
    \begin{split}
        & P( \exists\, c\in[0,s], \, d\in[s+t,\infty) \text{ such that }
        W_{c,d} \le A_{c-\mu,d+\nu} + n ) \\
        &\le
    %   p(\alp,\bet)
    %   &=
        \exp((\phi(u)-1)\bet(\mu+\nu))
        \phi^{n+1}(u) \mathcal{L}^2( \phi(u) )
        \exp( - \psi(u) t )
        % \exp((\bet\phi(u)-\bet-u)t) \\
        % &\qquad \exp((\phi(u)-1)\bet(\mu+\nu))
        % ( \phi(u) L_M( \phi(u) ) )^2
    \end{split}
    \end{align}
    for all $u\in(0,{\overline{u}})$ where
    \begin{align} \label{eq:psi}
        \psi(u) = u + \bet - \bet\phi(u)
    \end{align}
    and
    \begin{align} \label{eq:laplace-stieltjes}
        \mathcal{L}(r) = \frac{ (r-1) ( 1-\bet\, m )}
        { r \left( \phi(\bet(r-1)) \right)^{-1} - 1 } .
        % { \frac{r}{\phi_W(\bet(r-1))} - 1 } .
    \end{align}
\end{theorem}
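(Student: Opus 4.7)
The plan is to recast the bad event as a single random variable---a sum of pre-mining, middle, and post-mining contributions---and to control its lower tail via Chernoff's inequality with the tilt parameter $\rho=\phi(u)>1$. Decomposing at $s$ and $s+t$ gives
\begin{align*}
W_{c,d}-A_{c-\mu,d+\nu}=(W_{c,s}-A_{c-\mu,s})+(W_{s,s+t}-A_{s,s+t})+(W_{s+t,d}-A_{s+t,d+\nu}),
\end{align*}
so defining $M_-=\inf_{c\in[0,s]}(W_{c,s}-A_{c-\mu,s})$, $G=W_{s,s+t}-A_{s,s+t}$, and $M_+=\inf_{d\ge s+t}(W_{s+t,d}-A_{s+t,d+\nu})$, the event in \eqref{eq:renewal-poisson} becomes $\{M_-+G+M_+\le n\}$. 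Chernoff then yields $P(\mathrm{bad})\le\rho^n\,\expect{\rho^{-M_-}}\,\expect{\rho^{-G}}\,\expect{\rho^{-M_+}}$, provided the three components are (conditionally) independent.

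For the middle factor I would invoke the martingale $\phi(v)^{-k}\exp(vT_k)$, with $v=u$ and $T_k$ denoting the $k$th renewal epoch, and apply the optional stopping theorem at $k=W_t+1$ to obtain $\expect{\phi(v)^{-W_t}}\le\phi(v)e^{-vt}$. Multiplying by the Poisson MGF $\expect{\rho^{A_t}}=\exp(\bet t(\rho-1))$ and substituting $\rho=\phi(u)$ gives $\expect{\rho^{-G}}\le\phi(u)\exp(-\psi(u)t)$, since $\psi(u)=u-\bet(\phi(u)-1)$. For the pre-mining factor I would index candidate extrema by the renewal epochs $T_j\in(0,s]$: the backward-in-$j$ increments $A_{T_{j-1}-\mu,T_j-\mu}-1$ are i.i.d.\ (by independence of the Poisson process from the i.i.d.\ renewal gaps), with $\rho$-MGF equal to $\phi(\bet(\rho-1))/\rho$. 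A Wiener--Hopf / Pollaczek--Khinchine identity then delivers the MGF of the supremum of this walk as $\mathcal{L}(\rho)$, and the boundary Poisson interval of length $\mu$ contributes an additional $\exp(\bet\mu(\rho-1))$. A symmetric argument handles $M_+$. Multiplying out yields $\rho^{n+1}\,\mathcal{L}^2(\rho)\,\exp(-\psi(u)t)\,\exp(\bet(\mu+\nu)(\rho-1))$, which is exactly \eqref{eq:renewal-poisson} once we set $\rho=\phi(u)$.

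The hardest step is justifying the independence of $M_-$, $G$, and $M_+$: a renewal process does not have independent increments across the arbitrary times $s$ and $s+t$. I would resolve this by conditioning on the first renewal epoch after $s$ and the last renewal epoch before $s+t$, invoking the Markov property of the renewal process at renewal epochs so that the three regions become conditionally independent. The residual-lifetime terms produced by this alignment are precisely what forces the extra multiplicative $\phi(u)$ beyond the $\rho^n$ slack in the final bound, and they also explain why the auxiliary random walk in the pre/post-mining analysis must be treated with a careful boundary correction rather than as a bare infinite-horizon walk.
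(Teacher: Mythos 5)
Your overall architecture matches the paper's: split the bad event at $s$ and $s+t$ into a pre-mining minimum $M_-$, a middle term, and a post-mining minimum $M_+$; apply a Chernoff bound with tilt $\rho=\phi(u)$; bound the middle renewal count's tail to produce $e^{-\psi(u)t}$ (your optional-stopping route and the paper's direct bound $P(W_{0,\tau}\le l-1)\le P(I_1+\cdots+I_l\ge\tau)\le\phi^l(u)e^{-u\tau}$ are interchangeable); and use the known Laplace--Stieltjes transform $\mathcal{L}$ of the all-time minimum of the renewal-minus-Poisson walk for the two boundary terms. The bookkeeping of the $\mu$ and $\nu$ slack into $\exp((\phi(u)-1)\bet(\mu+\nu))$ and of the $+1$ into $\phi^{n+1}$ is also consistent with the paper.

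The genuine gap is the step you yourself flag as hardest: the decoupling of $M_-$, $G$, and $M_+$. Your proposed fix---condition on the first renewal epoch after $s$ and the last before $s+t$ and invoke the Markov property---does not go through as stated, because under that conditioning the renewal intervals straddling $s$ and $s+t$ are size-biased (the inspection paradox), so the segments beyond those epochs are \emph{not} distributed as fresh copies of the renewal process, and the residual lifetimes are not simply ``absorbed into an extra $\phi(u)$.'' Relatedly, your claim that the backward-in-$j$ increments from $s$ form an i.i.d.\ walk with $\rho$-MGF $\phi(\bet(\rho-1))/\rho$ presupposes exactly this fresh-copy property. The paper resolves this with a dedicated stochastic-domination lemma (Lemma~\ref{lem:dom}): it replaces the dependent segments $W_{c,s}$ and $W_{r,d}$ by independent copies $W'_{0,s-c}$ and $W''_{0,d-r}$, proving that the replacement can only make the renewal counts stochastically smaller (hence the bad event more likely, preserving the upper bound), with the one-count penalty tracked explicitly. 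It then also replaces $M_-$ by an independent copy $M'_+$ of $M_+$ using the fact that $M_-$ stochastically dominates $M_+$, which is what legitimizes the factor $\mathcal{L}^2$. To complete your proof you would need to supply this domination argument (and verify its direction) rather than the conditioning argument; without it, the factorization $\expect{\rho^{-M_-}}\expect{\rho^{-G}}\expect{\rho^{-M_+}}$ is unjustified.
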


Theorem~\ref{th:renewal-poisson} applies to arbitrary renewal processes.  In this section, the renewal process of interest is the double-lagger process, where the mean and MGF of the inter-arrival time given by $m=e^{-2\alp}/\alp$ and $\phi(u)$ in~\eqref{eq:doublelaggermgf}, respectively.
%   The race between honest and adversarial miners is studied as the difference between a continuous-time renewal process and a Poisson process.  Different choices of the renewal process yield close upper and lower bounds for the fundamental security--latency trade-off.

% A sufficient condition for consistency is that fewer adversarial blocks are mined than a certain {species} of honest blocks the adversary must match at every height.
% Likewise, a sufficient condition for security breach is that more adversarial blocks are mind than a certain {species} of honest blocks that must be mined to grow their heights.

\begin{lemma} \label{lemma: exist2}
    Define $\psi(u)$ using~\eqref{eq:psi} and~\eqref{eq:doublelaggermgf}.  If $\bet < \alp\ex{-2\alp}$, then there exists $u^*\in(0,u_0)$ such that 
    $\psi(u) > 0$ 
    for all $u\in (0, u^*]$.
\end{lemma}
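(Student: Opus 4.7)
\smallskip
\noindent\textbf{Proof proposal.} The plan is to show $\psi(u)/u$ is continuous on $[0,u_0)$ (after removing the removable singularity at $0$) and positive at $u=0$ under the hypothesis $\bet<\alp e^{-2\alp}$; positivity on some initial interval $(0,u^*]$ then follows immediately by continuity.

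First, I would rewrite $\psi$ using the explicit form of $\phi$: from~\eqref{eq:doublelaggermgf} we have $\phi(u)-1=(\alp u-u^2)/g_\alp(u)$, so
\begin{align}
    \psi(u) \;=\; u - \bet\,\frac{\alp u - u^2}{g_\alp(u)}
    \;=\; u\left(1 - \bet\,\frac{\alp - u}{g_\alp(u)}\right).
\end{align}
Because $g_\alp$ is continuous and, by the definition of $u_0$, nonzero on $[0,u_0)$, the function $h(u):=1-\bet(\alp-u)/g_\alp(u)$ is continuous on $[0,u_0)$ and $\psi(u)=u\,h(u)$.

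Next I would evaluate $h(0)$. Plugging $u=0$ into~\eqref{eq:hau} gives $g_\alp(0)=\alp^2 e^{-2\alp}$, hence
\begin{align}
    h(0) \;=\; 1 - \bet\cdot\frac{\alp}{\alp^2 e^{-2\alp}}
    \;=\; 1 - \frac{\bet\, e^{2\alp}}{\alp}
    \;=\; 1 - \frac{\bet}{\alp e^{-2\alp}}.
\end{align}
The hypothesis $\bet<\alp e^{-2\alp}$ is precisely what forces $h(0)>0$.

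Finally, by continuity of $h$ at $0$ there exists $u^*\in(0,u_0)$ such that $h(u)>0$ for all $u\in[0,u^*]$. Since $u>0$ on $(0,u^*]$, we obtain $\psi(u)=u\,h(u)>0$ on $(0,u^*]$, as required. No step here is technically hard; the only thing worth being careful about is that $g_\alp$ really stays nonzero on $[0,u_0)$, but this is built into the definition of $u_0$ as the \emph{smallest} positive zero together with $g_\alp(0)=\alp^2 e^{-2\alp}>0$.
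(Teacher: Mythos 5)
Your proof is correct and follows essentially the same route as the paper's: both arguments reduce the claim to the observation that $\lim_{u\to 0^+}\psi(u)/u = 1-\beta e^{2\alpha}/\alpha$ is positive exactly under the hypothesis $\beta<\alpha e^{-2\alpha}$ (using $g_\alpha(0)=\alpha^2 e^{-2\alpha}$), and then invoke continuity near $0$. The only cosmetic difference is that you factor $\psi(u)=u\,h(u)$ and use continuity of $h$ on $[0,u_0)$, whereas the paper computes $\psi'(0)=1-\beta\phi'(0)$ and argues via continuity of the derivative; your variant is marginally tidier in that it needs only that $g_\alpha$ is continuous and nonvanishing on $[0,u_0)$ rather than differentiability of $\psi$ near the origin.
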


% \rl{Are $\psi$ and $\psi$ different? Are $\phi$ and $\psi$ different?} {Fixed.  All different.}
% \rl{Where is $L_M$ defined? What is E[W]?}
% {Fixed.}

\begin{proof}
    Evidently, $\psi(0) = 0$.  It is not difficult to calculate the derivative
\begin{align}
    \psi'(0)
    & = 1-\bet \phi'(0) = 1-\frac{\bet}{\alp\ex{-2\alp}}.
    % & = -\left[\frac{-\alp -\alp \ex{-\alp+u}+2u^2\ex{2(u-\alp)}+2u-\alp u \ex{u-\alp}}{(\alp^2\ex{2(u-a)}-\alp u -\alp u\ex{u-a}+u^2)^2} \right.\notag \\
    % & \qquad \cdot \bet(-\alp u+u^2) \notag \\
    % &\qquad +\left. \left.\frac{\bet(-\alp+2u)}{\alp^2\ex{2(u-a)}-\alp u -\alp u\ex{u-a}+u^2} + 1 \right]\right|_{u=0}\\
    %&= 1-\frac{\bet}{\alp\ex{-2\alp}}. 
\end{align}
If $\bet < \alp\ex{-2\alp}$, we have $\psi'(0)> 0$.  By continuity, there must exist a $u^*<u_0$ such that $\psi'(u)>0$ and $\psi(u)>0$ for all $u\in (0,u^*]$.
\end{proof}

Let $u_1>0$ be the smallest positive number such that $\psi(u) = 0$.
Note that as $u\to u_0$, we have $\phi(u)\to \infty$ and $\psi(u) \to -\infty$.
By Lemma \ref{lemma: exist2}, 
%there exists a $u^*$ such that $g(u)>0$ for all $u \in (0, u^*]$. 
%According to the continuity of $g(u)$, 
$u_1$ exists and $u^* < u_1<u_0$.
{Indeed, Lemma~\ref{lemma: exist2} guarantees that the optimized exponent $\psi(u)$ is positive if $\bet<\alp e^{-2\alp}$.}

Theorem~\ref{th:achievable} is basically a special case of Theorem~\ref{th:renewal-poisson} where the renewal process is the double-lagger process.
% studied in Section~\ref{s:mgf}.  
Let $\mu=1$ and $\nu=1+\epsilon$.
Due to Lemma~\ref{lem:loner-double-lagger}, we given $n=1$ extra count to the Poisson process.
For convenience we let
\begin{align}
    \zeta(u)=\phi(u)-1.
\end{align}
The right hand side of~\eqref{eq:renewal-poisson} becomes
\begin{align}
\begin{split}
    e^{(2+\epsilon)\zeta(u)\bet}
    \left(1-\frac{\bet}{\alp}e^{2\alp}\right)^2
    \left( \frac{\zeta(u)}{
        \frac1{1+\zeta(\zeta(u)\bet)}
        -
        \frac1{1+\zeta(u)} } \right)^2 
                    e^{- (u-\zeta(u)\bet) t}
        .
    % e^{(2+\epsilon)\zeta(u)\bet}
    %     \left(1-\frac{\bet}{\alp}e^{2\alp}\right)
    %     \left( \frac{\zeta(u)}{
    %     \frac1{1+\zeta(\zeta(u)\bet)}
    %     -
    %     \frac1{1+\zeta(u)} } \right)^2 
    %                 e^{- (u-\zeta(u)\bet) t}
    %     .
    \end{split}
\end{align}
{It is interesting to note that the squared terms are the product of two identical multiplicative factors, one due to pre-mining and the other due to post-mining.}

%Sending $\epsilon\to0$ and optimizing the variable $u$ in the interval $(0,u_1)$ yield

Lastly, we recover Theorem~\ref{th:achievable} for the original arbitrary time unit.
In particular, the block propagation delays are bounded by $\Delta$ time units.
%into formula without assuming the $\Delta$ is $1$ time unit.
To reintroduce $\Delta$ into the result, we let $\tau=r\Delta$, $\sigma=s\Delta$, $\alp'=\alp/\Delta$, $\bet'=\bet/\Delta$, and $v=u/\Delta$.  These new variables and parameters are then defined under 
the original time unit. 
We define
\begin{align} \label{eq:Phieta}
    % \phi(v)
    % & = \phi(\Delta v) \\
    \eta(v) = \zeta(\Delta v) .
    % &= \psi(\Delta v) - 1. %/ \Delta.
\end{align}
Plugging in~\eqref{eq:doublelaggermgf}, we have % and~\eqref{equ: eta(u)}, we have
\begin{align}
    % \phi(v)
    % & = 1 + \frac{a v - v^2}
    % {v^2 - a v - a v e^{(v-a)\Delta}    \label{eq:phi}
    % + a^2 e^{2(v-a)\Delta} } \\
    \eta(v)
    % &= \psi(\Delta v) \\ %\phi(v) \\
    &= \frac{ \alp'v - v^2 }
        {v^2 - \alp' v - \alp' v e^{(v-\alp')\Delta}
        + \alp'^2 e^{2(v-\alp')\Delta} } .
   \label{eq:rho}
\end{align}

Suppose a block is mined by time $\sigma$ and is included in a $\tau$-credible blockchain. Applying Lemma~\ref{lemma: d consistency}, the block is included in all future credible blockchains barring {the} event $\F{\frac{\sigma}{\Delta}}{\frac{\tau}{\Delta}}$.
%(Note that since Lemma \ref{lemma: d consistency} uses $\Delta$ as the time unit) the ``$F$'' event should be defined with respect to time interval $[\frac{ \sigma}{\Delta},\frac{\tau}{\Delta}]$).
% {The event $\F{s}{t}$ shrinks as $\epsilon\to0$.
% For $0\le\sigma<\tau$, we define
% \begin{align}
%     \E{\sigma}{\xi} %\tau-\sigma}
%     &= %\bigcap_{\epsilon>0}
%     \F{\frac{\sigma}{\Delta}}{\frac{\sigma+\xi}{\Delta}} .
%     % &= \left( \bigcup_{\epsilon>0} \F{\frac{\sigma}{\Delta}}{\frac{\sigma+\xi}{\Delta}} \right)^\text{c}.
% \end{align}
% We then apply Lemma~\ref{lemma: d probF} with the conversion.
% Using Lemma~\ref{lemma: d probF} 
{
Using Theorem~\ref{th:renewal-poisson} with $\mu=1$, $\nu=1+\epsilon$, and $n=1$ with the time unit conversion and then letting $\epsilon\to0$,
%positive integer $d$ and 
% real numbers $\tau, \sigma$ satisfying $0\le  \sigma < \tau -  (2q+2) %+\epsilon)\Delta$, 
an upper bound of 
$P(\F{\frac{\sigma}{\Delta}}{\frac{\tau}{\Delta}})$
%$P(\E{\sigma}{\tau-\sigma})$ 
is obtained as Theorem~\ref{th:achievable},
% \begin{align}
%     \min_{v\in(0,v_1)} %\hspace{-1mm}
%         \left( 1 + \frac{v^2+\bet'v-\psi^2(v)}{\bet'\psi(v)} \right)^2
%         %\hspace{-2mm}
%         \left(1 + \frac{\psi(v)}v \right)^{\frac{2v}{\psi(v)}}
%         e^{2 v \Delta - \psi(v) (\tau-\sigma)}  \label{d 100.5}
% \end{align}
where $v_1$ is the smallest positive number such that $\psi(v_1)=0$.
It is easy to see that all blocks mined by $\sigma$ and included in a $\tau$-credible blockchain must be included in all credible blockchains thereafter barring the event
$\F{\frac{\sigma}{\Delta}}{\frac{\tau}{\Delta}}$,
% \E{\sigma}{\tau-\sigma}$, 
whose probability is upper bounded by~\eqref{eq:minv} with the optimized parameter $v$.}
This conclusion is equivalent to the main theorem (with minor abuse of notation we still use $\alp$ and $\bet$ to replace $\alp'$ and $\bet'$, respectively, to follow some convention).
Thus, Theorem \ref{th:achievable} is proved.

\section{Liveness and Private Attack with Bounded Delay}
\label{s:attack}

% \rl{I prefer to do upper bound first.}

% \rl{In fact, I suggest we further simplify by removing this section altogether. I predict equation (24) will give us a lower bound within 1\% of the more accurate analysis in this section given the low mining rate. It should make the paper less dense.}

In this section, we study liveness properties and generalize the analysis in Section~\ref{s:zero delay unachievable} to the case where block propagation delays are upper bounded.  {Following Section~\ref{s:consistency}, we let the time units be such that delay bound is 1 unit of time.}
% analyze blockchain growth and the private attack under the assumption that block propagation delays are upper bounded by $\Delta$.
% the success probability of a private attack to establish Theorem~\ref{th:unachievable}.
% The key is to compare the growth of a blockchain made of only honest blocks and a privately mined adversarial blockchain.
% We consider an attack on an honest block mined at time $s$ after confirmation time $t$.  We develop a lower bound on the probability of error for a given confirmation time, which means that there exists an attack that succeeds with that probability,
% This implies that no better security can be guaranteed for that confirmation time.

\subsection{Blockchain Growth and Liveness}
\label{s:live}

\begin{definition}[Jumper]\label{def: jumper}
    We say the genesis block is the 0-th jumper.  After the genesis block, every jumper is the first honest block mined at least one maximum delay after the previous jumper.
    % A block is called a jumper if it is the first honest block mined at its height.\footnote{Ties are broken in a deterministic manner.} 
\end{definition}

% Suppose the private attack strategy $\zeta_b$ is executed.
% Let block $b$ be the $0$-th jumper. 
% For $i=1,2,\dots$, the $i$-th jumper is the earliest mined honest block whose generation time is at least $\Delta$ after the $(i-1)$-st jumper. 
%An honest block is called the $i$-th jumper if it is the earliest honest block whose generation time is greater than $T_0 + \sum_{k=1}^{i-1}M_{k} + \Delta$. 

% Without loss of generality, we assume that honest miners always mine on top of the earliest longest blockchain in case multiple longest blockchains are in an honest miner's view.
%\footnote{Again, ties are broken in a deterministic manner if multiple such blockchains enter one's view at the same time.}
% If an honest block is not a jumper, then once its propagation is subject to the maximum delay, it loses to some other honest block at the same height and hence does not make it into any later credible blockchain even if the adversary takes no action at all.
% Hence, without loss of generality, we assume the adversary attacks a jumper block $b$ which is referred to as the $0$-th jumper in this section.

For $i=1,2,\dots$, let $M_i$ denote the time elapsed between the $(i-1)$-st jumper and the $i$-th jumper.  The following result is evident:

\begin{lemma} \label{lemma:inter-jumper}
    The inter-jumper times $M_1,M_2,\dots$ are i.i.d.\ and $M_i-1$ follows the exponential distribution with mean $1/\alp$.
\end{lemma}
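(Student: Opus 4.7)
The plan is to apply the memoryless property of the honest Poisson mining process at a deterministic shift of a natural stopping time. Let $T_i$ denote the mining time of the $i$-th jumper, with $T_0=0$, so that $M_i = T_i - T_{i-1}$. By Definition~\ref{def: jumper} and the normalization that the delay bound equals one unit of time, $T_i$ is the smallest honest arrival time satisfying $T_i \ge T_{i-1}+1$; in particular $M_i \ge 1$ almost surely.

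The key observation is that, by induction on $i$, the time $T_{i-1}$ is a stopping time of the filtration generated by the honest Poisson process, so the deterministically shifted time $T_{i-1}+1$ is also a stopping time. By the strong Markov property of the Poisson process, the honest mining process restarted at $T_{i-1}+1$ is itself a Poisson process with rate $\alp$, and is independent of the honest arrivals during $[0, T_{i-1}+1]$; hence it is independent of $(T_0, T_1, \dots, T_{i-1})$, and therefore of $(M_1, \dots, M_{i-1})$.

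By construction, $M_i - 1 = T_i - (T_{i-1}+1)$ is precisely the waiting time until the first arrival of this restarted Poisson process. Hence $M_i - 1$ is exponentially distributed with mean $1/\alp$ and independent of $(M_1, \dots, M_{i-1})$. Iterating over $i$ yields that $M_1, M_2, \dots$ are i.i.d.\ with $M_i - 1 \sim \mathrm{Exp}(\alp)$, as claimed.

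I do not anticipate a substantive obstacle: the lemma is essentially a one-line consequence of the Poisson memoryless property, once one notices that the jumper construction is simply a deterministic shift (by one delay bound) of a natural stopping time. The only care required is the routine inductive verification that $T_{i-1}+1$ is a stopping time of the honest mining filtration, which follows because $T_{i-1}$ is determined by the Poisson arrivals up to and including $T_{i-1}$.
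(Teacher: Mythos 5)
Your proof is correct. Note that the paper itself offers no proof of this lemma --- it is stated as ``evident'' immediately after the definition of jumpers --- so there is nothing to match against; your strong-Markov argument is a perfectly valid way to make the evident claim rigorous. The key points are all in place: $T_{i-1}+1$ is a stopping time of the honest-mining filtration (a deterministic shift of a stopping time), the restarted process is a fresh rate-$\alp$ Poisson process independent of $\mathcal{F}_{T_{i-1}+1}$, the honest arrivals in $(T_{i-1},T_{i-1}+1)$ are by definition not jumpers and so do not interfere, and $(M_1,\dots,M_{i-1})$ is $\mathcal{F}_{T_{i-1}+1}$-measurable, which gives the independence needed for the induction. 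It is worth observing that for the analogous i.i.d.\ claims about laggers and double-laggers (Lemmas~\ref{LEMMA: LAGGER IID} and~\ref{LEMMA: DOUBLE-LAGGER IID}), the paper deliberately avoids invoking the strong Markov property and instead decomposes the joint event into explicit conditions on the i.i.d.\ exponential inter-arrival times $Z_1,Z_2,\dots$ and factors the joint probability directly (Appendices~\ref{proof: lagger iid} and~\ref{s:doublelaggeriid}); that elementary route is needed there because laggers and double-laggers are defined by conditions on the inter-arrival times themselves, whereas the jumper case is simpler and your one-step memorylessness argument suffices.
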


Because $P(M_i>1)=1$ for all $i$, all jumpers have different heights (almost surely).
Let $\M{s}{r}$ denote the number of jumpers mined during time interval $(s,r]$.  For simplicity, we also assume that individual honest miners have infinitesimal mining power, so that almost surely no individual honest miner mines two blocks within one maximum delay.\footnote{{This is the worst case for honest miners.}}

% \begin{proof}
%     Let $b_0=b$ and let block $b_i$ denote the $i$-th jumper after block $b$.  Until $T_{b_i}+1$ when block $b_i$ is in all honest miners' views, all honest miners except block $b_i$'s miner (who is negligible) are mining at heights no higher than $h(b_i)$.  From $T_{b_i}+1$, it takes exponential time with mean $1/\alp$ to mine the next jumper.  Due to the memoryless nature of the honest mining process, $M_1,M_2,\dots$ are i.i.d.\ (the jumpers form a renewal process).
% \end{proof}

Using Lemma~\ref{lemma:inter-jumper}, it is straightforward to establish the following result concerning the height of the longest (or credible) blockchains. % over time.

\begin{lemma}[Blockchain growth]
    For all $s,t\ge0$, every honest miner's longest blockchain at time $s+t$ must be at least $n$ higher than every honest miner's longest blockchain at time $s$ with probability no less than
    % The During $(T_b,T_b+s]$, 
    % the probability that at least $n$ jumpers are mined is lower bounded by 
    \begin{align}
        F_2(t-1-n; n, \alp)
    \end{align}
    {where $F_2(\cdot;n,\alp)$ denotes the Erlang cdf with shape parameter $n$ and rate $\alp$.}
% Let $CG_t$ be the length of the shortest longest blockchain among honest miners at time $t$.
% Then, \[P(CG_t \geq j) \geq F(t-j-1,j,\alp).\]
\end{lemma}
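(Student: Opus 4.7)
The plan is to bound the height growth by constructing a surrogate sequence of honest blocks anchored at time $s$ (rather than at the genesis as in Definition~\ref{def: jumper}), show their heights increase by at least one per step via $\Delta$-synchrony, and then read off the Erlang law for their arrival times. Specifically, I would define auxiliary mining times $\tau_1<\tau_2<\cdots$ by letting $\tau_1$ be the time of the first honest block mined after time $s+1$, and, for $i\ge 2$, letting $\tau_i$ be the time of the first honest block mined after $\tau_{i-1}+1$. Because $(H_t)$ is Poisson with rate $\alp$, the memoryless property implies that the waits $Y_1=\tau_1-(s+1)$, $Y_i=\tau_i-(\tau_{i-1}+1)$ are i.i.d.\ Exp($\alp$), so $\tau_n=s+n+Y_1+\cdots+Y_n$ and $\tau_n-s-n$ is Erlang$(n,\alp)$.

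Next, let $L^+(s)$ denote the maximum height of any honest miner's longest blockchain at time $s$. I would show by induction that the honest block mined at $\tau_i$ has height at least $L^+(s)+i$. For $i=1$: some honest miner's longest blockchain of height $L^+(s)$ lies in that miner's view at time $s$, so by Definition~\ref{def:Delta} (with the normalized $\Delta=1$) every one of its blocks lies in every honest miner's view by time $s+1$; since $\tau_1>s+1$, the miner producing the block at $\tau_1$ extends a blockchain of height at least $L^+(s)$, yielding height at least $L^+(s)+1$. For $i\ge 2$: because $\tau_i>\tau_{i-1}+1$, Definition~\ref{def:Delta} places the honest block at $\tau_{i-1}$ in the view of the miner at $\tau_i$, so the latter extends a blockchain of height at least the inductive value $L^+(s)+(i-1)$.

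Finally, whenever $\tau_n\le s+t-1$ the $n$-th auxiliary block has time to propagate by $s+t$, at which point every honest miner's longest blockchain must have height at least $L^+(s)+n$; this is at least $n$ more than \emph{every} honest miner's longest blockchain at $s$. Computing the probability of this event in closed form gives $P(\tau_n\le s+t-1)=P(Y_1+\cdots+Y_n\le t-1-n)=F_2(t-1-n;n,\alp)$, matching the stated bound. (When $t-1-n<0$ the bound is vacuously zero.)

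The main conceptual obstacle, and the reason I prefer this anchored construction to the original jumpers of Definition~\ref{def: jumper}, is a boundary effect at $s$: the first original jumper after time $s$ may arrive arbitrarily close to $s$, so one cannot invoke Definition~\ref{def:Delta} to conclude that its miner has seen the height-$L^+(s)$ blockchain. Deterministically pausing for a unit of time past $s$ (and past each subsequent $\tau_i$) is precisely what activates the $\Delta$-synchrony guarantee, and the price of that pause appears as the additive $n$ and $-1$ inside $F_2(t-1-n;n,\alp)$. The only minor routine step left is to check that ``strictly after $s+1$'' and ``strictly after $\tau_{i-1}+1$'' are harmless in continuous time, which is immediate because every equality event like $\{\tau_1=s+1\}$ has probability zero under the Poisson law.
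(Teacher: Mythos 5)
Your proof is correct, and it takes a noticeably different (and in one respect more careful) route than the paper. The paper counts the original jumpers of Definition~\ref{def: jumper} in $(s,s+t-1]$, asserts that the first such jumper exceeds the height of the longest blockchain at time $s$ and that consecutive jumpers strictly increase in height, and then must invoke the stochastic-domination result (Lemma~\ref{lem:dom}) to replace the history-dependent count $J_{s,s+t-1}$ by an independent copy $J_{0,t-1}$ before reading off the Erlang law. You instead build a fresh sequence anchored at $s+1$, which buys you two things at once: memorylessness of the Poisson honest-mining process makes the waits $Y_i$ i.i.d.\ $\mathrm{Exp}(\alp)$ with no domination lemma needed, and the deterministic one-unit pause past $s$ legitimately activates Definition~\ref{def:Delta} so that the first auxiliary block provably sees the height-$L^+(s)$ chain. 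The boundary effect you flag is real: the first original jumper after $s$ can be mined arbitrarily soon after $s$, so the paper's claim that it exceeds every honest miner's longest chain at time $s$ is the loose step in their argument, and your anchored construction repairs it while still landing on exactly the same bound $F_2(t-1-n;n,\alp)$. The only cost is that your $\tau_i$ are not the paper's jumpers, but nothing downstream of this lemma requires them to be.
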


% \rl{move proof to later section or appendix}
\begin{proof}
    If $t\le1$ then $F_2(t-1-n;n,\alp)=0$, so the lemma holds trivially.  We assume $t>1$.
    All jumpers mined during $(s,s+t-1]$ must be in every honest miner's views by $s+t$, where the first jumper is higher than the miner's longest blockchain at time $s$ and the last jumper is no higher than the miner's longest blockchain at time $s+t$.  Hence the probability of interest is no less than
    {
    \begin{align}
        P( J_{s,s+t-1} \ge n ) \ge P(J_{0,t-1}\ge n)
    \end{align}
    where the inequality is easily justified using a special case of Lemma~\ref{lem:dom}.  The event that $n$ or more jumpers are mined during $(0,t-1]$ is the same as that $n$ inter-jumper times can fit in a duration of $t-1$.  Hence
    \begin{align}
        P( J_{s,s+t-1} \geq n ) 
        % &\geq P(J_{0,t-1} \geq j) \\
        &\ge P(M_1 + \dots + M_n \le t-1) \\
    % \end{align}
    % where $M_1,\dots,M_n$ are i.i.d.\ inter-jumper times.  This probability is equal to
    % \begin{align} 
        &= P( (M_1-1) + \dots + (M_n-1) \le t-1-n ) \\
        &= F_2( t-1-n; n, \alp)        \label{eq:Fsn}
    % \begin{split}
    %     P( (M_1-1) + \dots + (M_n-1) &< t-1-n ) \\
    %     &= F( t-1-n, n, \alp)        
    % \end{split}
    \end{align}}%
    where~\eqref{eq:Fsn} is because $M_1-1, \dots, M_n-1$ are i.i.d.\ exponential random variables whose sum has the Erlang distribution.
    %with shape parameter $n$ and rate $\alp$.
\end{proof}

With the preceding techniques, it is also straightforward to establish the following {result} for blockchain quality or liveness.

\begin{lemma}[Blockchain liveness]
    % Let $CQ_t$ be the smallest number of honest blocks among honest miners' longest blockchains at time $t$.
    % Then, \[P(CQ_t \geq j) \geq \sum_{i=1}^{\infty}\ex{-\bet t}\frac{(\bet t)^i}{i!} F(t-i-j-1,i+j,\alp).\]
    Let $s>0$ and $t>1$.
    In every honest miner's longest blockchain at time $s+t$, the probability that $n$ or more of those blocks are honest blocks mined during $(s,s+t]$ is lower bounded by
    \begin{align} \label{eq:quality}
        \sum_{i=0}^{\infty}\ex{-\bet t}\frac{(\bet t)^i}{i!} F_2(t-i-n-1;i+n,\alp) .
    \end{align}
\end{lemma}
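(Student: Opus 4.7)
My plan is to pit the jumper counting process against the adversarial mining process, paralleling the blockchain growth argument immediately above. Let $J=J_{s,s+t-1}$ denote the number of jumpers mined during $(s,s+t-1]$ and let $A=A_{s,s+t}$ denote the number of adversarial blocks mined during $(s,s+t]$; by Definition~\ref{def:processes} these two counts are independent.

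The central step is to show that the number $N$ of honest blocks in any honest miner's longest blockchain at time $s+t$ that were mined during $(s,s+t]$ satisfies $N\ge J-A$. Successive jumpers are separated by more than one unit of time and each one sees its predecessor, so the $J$ jumpers mined in $(s,s+t-1]$ have strictly increasing heights $h_1<h_2<\cdots<h_J$, each strictly exceeding every honest-chain height at time $s$. Because every such jumper is in every miner's view by time $s+t$, the longest chain at $s+t$ has height at least $h_J$ and contains some block at each height $h_j$. This block is either (i)~the jumper itself (honest, mined in $(s,s+t-1]$, so counted in $N$), (ii)~some other honest block at the same height (necessarily mined in $(s,s+t]$ since $h_j$ exceeds every pre-$s$ honest height, hence also counted in $N$), or (iii)~an adversarial block (charged against $A$). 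Thus $N \ge J - A$ outside a bad event that is definable purely in terms of mining times.

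Given the claim, the bound follows by conditioning on $A$ and using independence: $P(N\ge n) \ge P(J\ge n+A) = \sum_{i=0}^\infty P(A=i)\,P(J\ge n+i)$. Since $A\sim\mathrm{Poisson}(\bet t)$, $P(A=i)=e^{-\bet t}(\bet t)^i/i!$. By Lemma~\ref{lemma:inter-jumper}, the inter-jumper times are i.i.d.\ and each equals $1+E$ with $E\sim\mathrm{Exp}(\alp)$, so the waiting time from $s$ to the $(n+i)$-th subsequent jumper is stochastically dominated by $(n+i)+E_1+\cdots+E_{n+i}$. This yields $P(J\ge n+i)\ge F_2(t-1-(n+i);\,n+i,\alp)$, and substituting into the sum reproduces the claimed lower bound.

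The main obstacle is step (iii): adversarial blocks occupying the jumper heights in the longest chain could in principle include pre-$s$ blocks from a hidden private chain, so the ``at most $A$'' charge requires care. The careful treatment, in the exogenous-interval framing of Section~\ref{sec:proof:safety}, observes that any pre-$s$ adversarial prefix sitting above the highest pre-$s$ honest block in the chain must be outpaced by the honest growth of $J$ new heights for the adversarial chain not to have overtaken the honest chain; a case analysis on whether the private chain overtakes then absorbs that contribution into $A$, recovering the claimed inequality.
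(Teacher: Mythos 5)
Your proposal is correct and takes essentially the same route as the paper: both lower-bound the honest contribution to the longest chain by the surplus $J_{s,s+t-1}-A_{s,s+t}$ of jumpers over adversarial blocks (using that jumpers have distinct, strictly increasing heights and are in all views by $s+t$), then condition on the Poisson adversarial count and apply the Erlang cdf for the shifted inter-jumper times. The pre-mining subtlety you flag in your last paragraph is one the paper's own one-sentence justification does not address at all, so your treatment is, if anything, more careful than the published argument.
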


% \rl{move proof to later section or appendix}
\begin{proof}
    Since $\M{s}{s+t-1}$ jumpers are mined during $(s,s+t-1]$ (with different heights), and at most $\A{s}{s+t}$ of them are matched by adversarial blocks, the number of surplus jumper blocks lower bounds the number of honest blocks in any honest miner's longest blockchain that are mined during $(s,s+t]$.  The said probability is thus lower bounded by
    % The chain length in any honest miner's view at time $t$ is at least the number of jumpers mined so far minus the number of adversarial blocks mined so far, i.e., $CQ_t \geq \N{0}{t-1} - \A{0}{t}$. 
    \begin{align}
    % &\quad~P(CQ_t \geq j) \notag \\
    & \P{ \M{s}{s+t-1} - \A{s}{s+t} \geq n } \notag \\
    & = \sum_{i=0}^{\infty} P( \A{s}{s+t} = i ) P(\M{s}{s+t-1} \ge i+n) \\
    & = \sum_{i=0}^{\infty}\ex{-\bet t}\frac{(\bet t)^i}{i!}
        P(M_1+\dots+M_{i+n} \le t-1)
    % & = \sum_{i=1}^{\infty}\ex{-\bet t}\frac{(\bet t)^i}{i!} P(N_1+\ldots+N_{i+n} + i+n < t-1) \notag \\
    % & = \sum_{i=0}^{\infty}\ex{-\bet t}\frac{(\bet t)^i}{i!} F(t-i-n-1,i+n,\alp) .
    \end{align}
    which is equal to~\eqref{eq:quality} by~\eqref{eq:Fsn}.
\end{proof}

\subsection{Private Attack}

In addition to following the private attack of Definition~\ref{def:private}, the adversary {can also} manipulate the block propagation delays in the following manner: 1) Until time $s$, it presents all honest blocks to all honest miners with no delay, so that there is a unique blockchain consisting of honest blocks only.  {(This is not the most dangerous attack, but it is easy to analyze} and provides a bound on the unachievable security function).
% Hence the block mined at time $s$ is also the highest honest block at that point.  
% Until time $s$, the adversary keeps mining for private blocks to try to gain a lead.  As soon as it loses the lead it abandons its current blockchain and starts afresh to mine off the tip of the honest blockchain.  At time $s$, the adversary may have a relative advantage of $l\ge0$ (if the adversarial blockchain is longer just before time $s$) or a disadvantage of $-1$ (if the adversarial and the honest blockchains are of identical length just before time $s$).  Here we assume $s$ to be large (this is the case in all but the very beginning of a blockchain) so that the adversarial advantage (plus 1) has essentially a geometric distribution.
2) From time $s$ onward, the adversary delays all honest blocks by the maximum allowed time $\Delta$.  
% We remark that zero delay is used before $s$ so that we reuse the previous result for the pre-mining gain.
% In this way, only the so-called (honest) jumpers grow the height of the honest blockchain.  The adversary competes with the honest jumper process after time $s$.
% At last, we provide a proof for Theorem~\ref{th:unachievable}.  
Since the adversary makes all delays zero before time $s$, the pre-mining gain $L$ {has the geometric pmf~\eqref{eq:pm+}}.
%identical distribution as that in Section~\ref{s:zero delay unachievable}.  
After time $s$, the first higher honest block is mined in exponential time, and the following honest blocks of increasing heights are mined with inter-jumper times.

The adversary wins the race if there exists $d\ge t$ such that the honest blockchain's growth during $(s,d-1]$ is no greater than $A_{s,d}+L-1$.  The growth of the honest blockchain during $(s,d-1]$ is identically distributed as the number of jumpers mined in $(0,d-s)$.  Hence the probability that the adversary wins is equal to the probability of
% Define the following event:
\begin{align}
  E_{s,t}
  &= \bigcup_{d\in[s+t,\infty)} \{ J_{0,d-s} \le A_{0,d-s} + L - 1 \} \\
  &= \bigcup_{d\in[t,\infty)} \{ J_{0,t} + J_{t,d} \le A_{0,t} + A_{t,d} + L - 1 \} .
%   \bigcup_{d\in[t,\infty)} \{ J_{s,d-1} \le A_{s,d} + L \}
\end{align}
% where the pmf of $L$ is given by~\eqref{eq:pm+}.

The jumper counts $J_{0,t}$ and $J_{t,d}$ are in general dependent.  Using~\eqref{eq:dom-1} in Lemma~\ref{lem:dom}, we can replace $J_{t,d}$ by an identically distributed $J'_{0,d-t}$ which is independent of $J_{0,t}$ with a penalty of 1 count to yield
\begin{align}
  P(E_{s,t})
%   &= P\left(
%     \bigcup_{d\in[t,\infty)} \{ J_{0,t} + J_{t,d} \le A_{0,t} + A_{t,d} + L - 1 \}
%     \right) \\
  &\ge P\bigg( \{J_{0,t} \le A_{0,t} + L - 1 \} \notag \\
  &\qquad \bigcup_{d\in(t,\infty)} \{ J_{0,t} + J'_{0,d-t} \le A_{0,t} + A_{t,d} + L - 2 \}
    \bigg) \\
  &= P( J_{0,t} \le A_{0,t} + N + L - 1 ) \label{eq:pejanl}
\end{align}
% Using~\eqref{eq:dom-1} in Lemma~\ref{lem:dom} twice yields
% \begin{align}
%   P(E_{s,t})
%   &= P\left(
%     \bigcup_{d\in[t,\infty)} \{ J_{s,t-1} + J_{t-1,d-1} \le A_{s,d} + L \}
%     \right) \\
%   &\ge P( \bigcup_{d\in[t,\infty)} \{ J_{0,t-1-s} + J'_{0,d-t} \le A_{s,d} + L - 2 \} ) \\
%   &= P( J_{0,t-1-s} \le A_{s,t} + N + L - 2 )
% \end{align}
where
\begin{align}
  N = \max\left\{ 0, -1- \min_{r\in[0,\infty)} \{ J'_{0,r} - A'_{0,r} \} \right\}
\end{align}
lower bounds the post-mining gain, i.e., the maximum surplus the adversary can gain via private mining after time $s+t$.

We have a closed form expression for the Laplace-Stieltjes transform of $\min_{r\in[0,\infty)} \{ J'_{0,r} - A'_{0,r} \}$~\cite{kroese1992difference}.  That is, letting $\phi$ be $\phi_1$ given by~\eqref{eq:phi1} and $m=1+\frac1\alp$ in~\eqref{eq:laplace-stieltjes}, we can express the transform in the form of~\eqref{eq:xir}, the derivatives of which can be used to recover the pmf of $N$ as~\eqref{eq:qn}.
It is not difficult to show that the sum $A_{0,t}+L$ has the following pmf:
\begin{align}
    P( A_{0,t}+L = k )
    &= \sum^k_{i=0} 
    \left(1-\frac{\bet}{\alp}\right) \left(\frac{\bet}{\alp}\right)^{k-i}
    \ex{-\bet t}\frac{(\bet t)^i}{i!} \\
    &= \left(1-\frac{\bet}{\alp}\right) \left(\frac{\bet}{\alp}\right)^k
    \ex{(\alp-\bet) t} F_1(k;\alp t) \label{eq:alkf}
\end{align}
where $F_1(k;\alp t)$ stands for the cdf of a Poisson distribution with mean $\alp t$.
From~\eqref{eq:pejanl}, we have
\begin{align}
  P( E_{s,t} )
  &\ge
  \sum_{n=0}^{\infty} \sum_{k=0}^\infty  % \sum_{i=0}^\infty
  P(N=n, A_{0,t}+L=k)
  P( J_{0,t} \le n + k - 1 ) \\
  &=
%   \sum_{l=0}^\infty P(L=l)
  \sum_{n=0}^\infty P(N=n)
  \sum_{k=0}^\infty P(A_{0,t}+L=k) \notag \\
  &\qquad \qquad \times  ( 1 - F_2(t-n-k,n+k,\alp) ) .
  \label{eq:pef2}
\end{align}
%where $F_2$ denotes the cdf of an Erlang distribution.  
Plugging~\eqref{eq:alkf} into~\eqref{eq:pef2} proves
% Hence the proof of 
Theorem~\ref{th:unachievable}, in which we also return the time units to the original ones.

\section{Numerical Analysis} \label{sec: numerical}

\subsection{Methodology}

\paragraph{Metrics.}
The performance metrics of a Nakamoto-style protocol include latency for a given security level, throughput, and fault tolerance (the upper limit of the fraction of adversarial mining in a secure system).
In this section, we
numerically compute the trade-off between different performance metrics of popular Nakamoto-style {cryptocurrencies} (Bitcoin Cash, Ethereum, etc.) and discuss their parameter selections. 

\paragraph{Block propagation delay.} 
The above metrics crucially depend on the block generation rate (or the total mining rate), maximum block size, and block propagation delay. 
The former two are explicitly specified in the protocol. 
The block propagation delay, 
however, depends on network conditions.
% There are studies measuring 
Block propagation delays in the Bitcoin network have been measured in~\cite{decker2013information,croman2016scaling, gencer2018decentralization}.
Such measurements are in general lacking for other cryptocurrencies.
It is observed in \cite{decker2013information} that there is a linear relationship between propagation delays and block size.
In this section, we assume the block propagation delay upper bound is determined by the block size  $S$ (in KB) according to the following formula:
%We will use this formula to estimate the propagation delay for other cryptocurrencies.
\begin{align}
    \Delta = aS + b . \label{equ: linear}
\end{align}
We determine the coefficients $a$ and $b$ %in \eqref{equ: linear} 
%will be interpolated
using propagation delay data from Bitcoin and Ethereum monitoring websites. 
In Bitcoin, the block size is about 1 MB.  
The propagation delay of Bitcoin blocks fluctuates over the years with an overall decreasing trend~\cite{bitcoin_time2};  the 90th percentile of block propagation is 4 seconds on average as of May 2021.  
Since $\Delta$ in our model needs to be an upper bound on propagation delay, we assume $\Delta=10$ seconds for a 1~MB Bitcoin block.  
According to~\cite{eth_time3}, the 90th percentile of Ethereum block propagation is between 1.5 and 1.75 seconds for an average block size of 25 KB. 
We round it up to 2 seconds for an upper bound. Using these data points, we estimate $a = 0.0098$ and $b = 0.208$ in formula \eqref{equ: linear}.

\begin{figure}
    \includegraphics[width=0.95\columnwidth]{./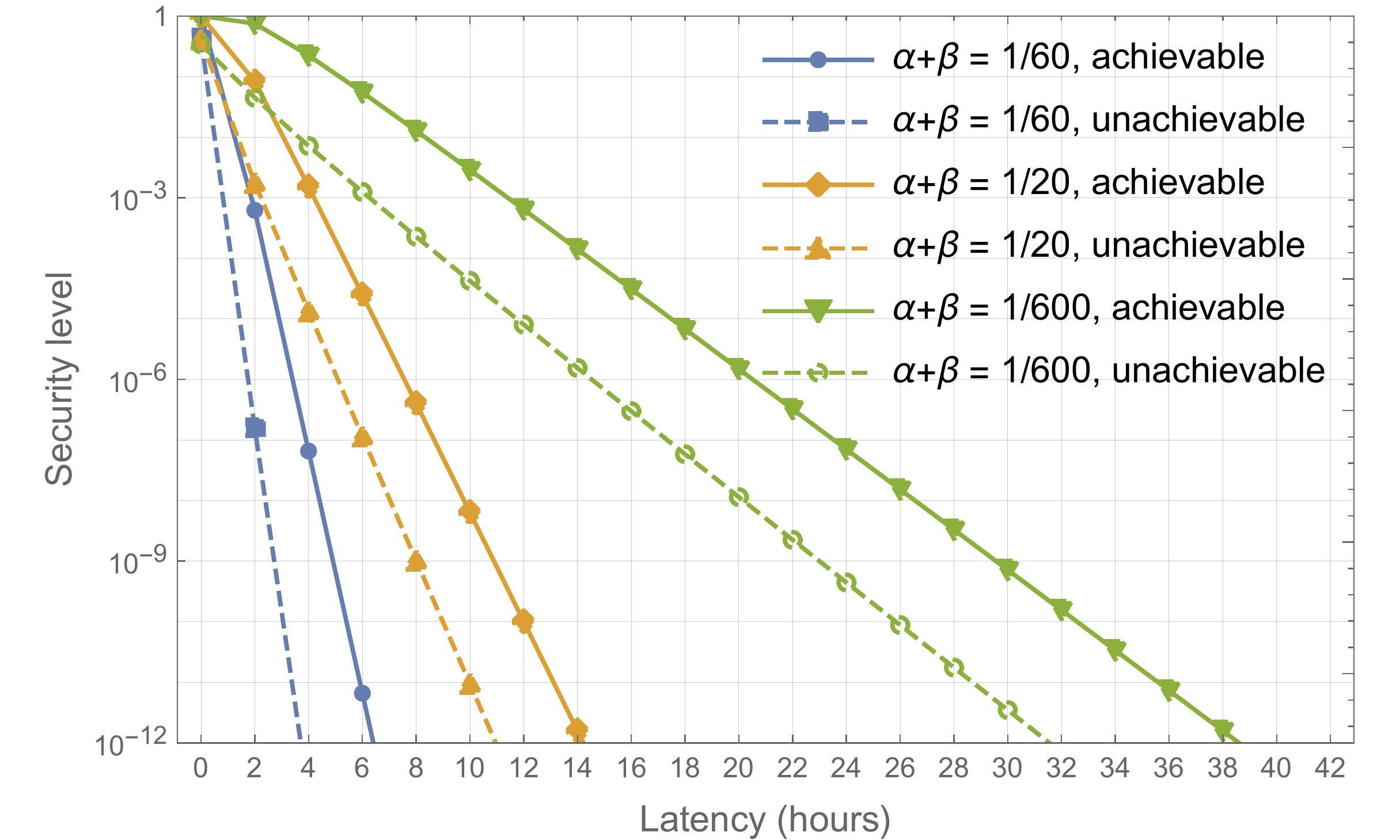}%
    \caption{Latency--security trade-off with $\Delta=10$ seconds and the percentage of adversarial mining power is 25\%.}
    \label{fig: security_latency2}
\end{figure}

\subsection{Confirmation time}

%For a given security level, the required confirmation time depends on the block generation rate, adversarial percentage, and the propagation delay. 
The latency--security trade-off 
% has already been 
{is} shown  in 
Figure \ref{fig: security_latency} {in Section~\ref{sec: main results}}
%has already discussed the confirmation time required
for the Bitcoin %protocol 
parameters.  
% This subsection discusses the how the confirmation latency changes with different parameter choices. 
Figure~\ref{fig: security_latency2} illustrates how the trade-off changes if the block generation rate increases by 10 folds (to 1 block per minute), with everything else held the same.
%(including the block propagation delay bound at 10 seconds).
%As before, the gaps between upper and lower bounds remain relatively constant across security levels.
% It is not surprising to see that
% As expected, 
The latency is much shorter under the higher block generation rate in this %particular 
case. Also note that as the block generation rate further increases, the latency is longer. 
% as a block is buried faster in the blockchain. 
%\renewcommand{\textfraction}{.01}
%\renewcommand{\floatpagefraction}{.99}

\begin{figure}
    \includegraphics[width=\columnwidth]{./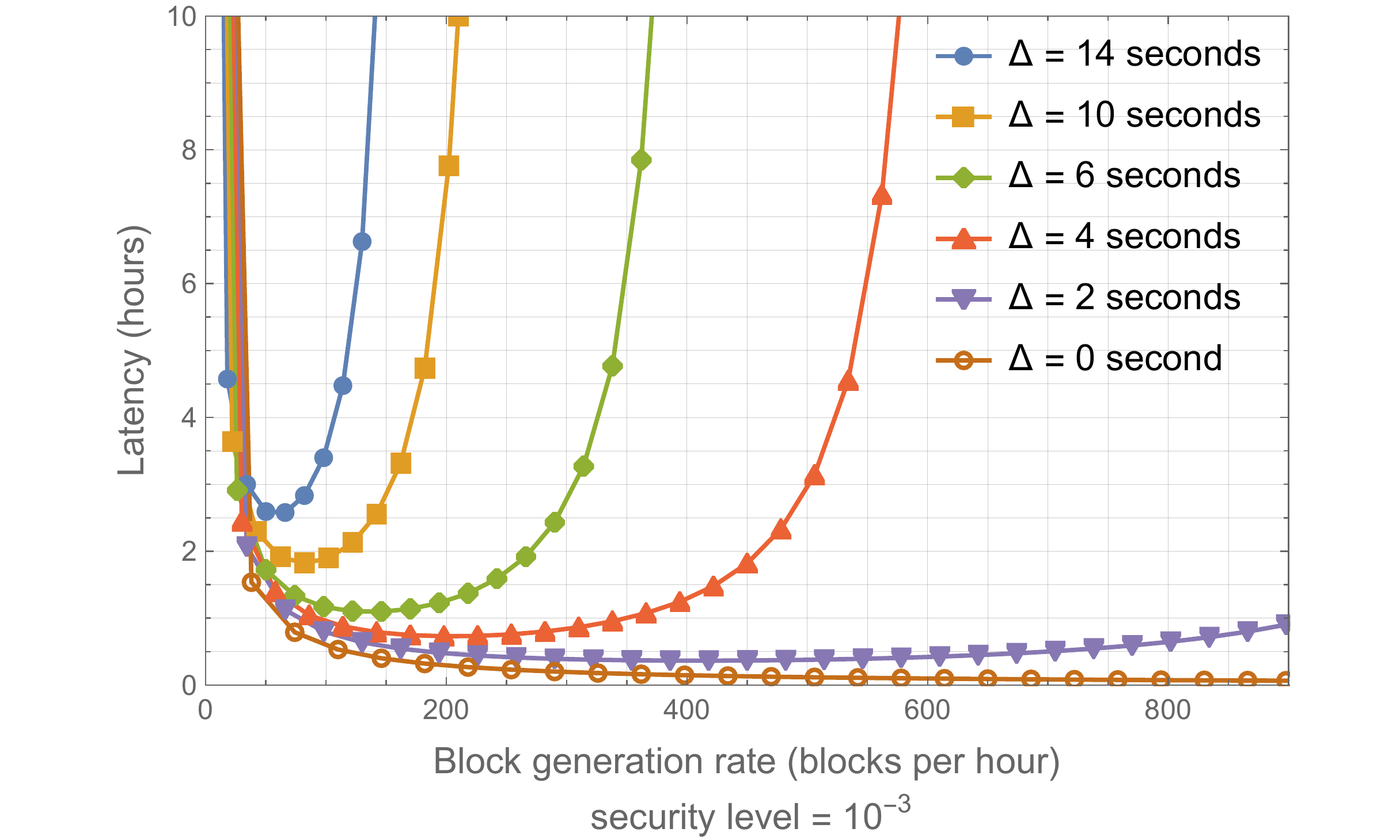}\\
    ~\\
    \includegraphics[width=\columnwidth]{./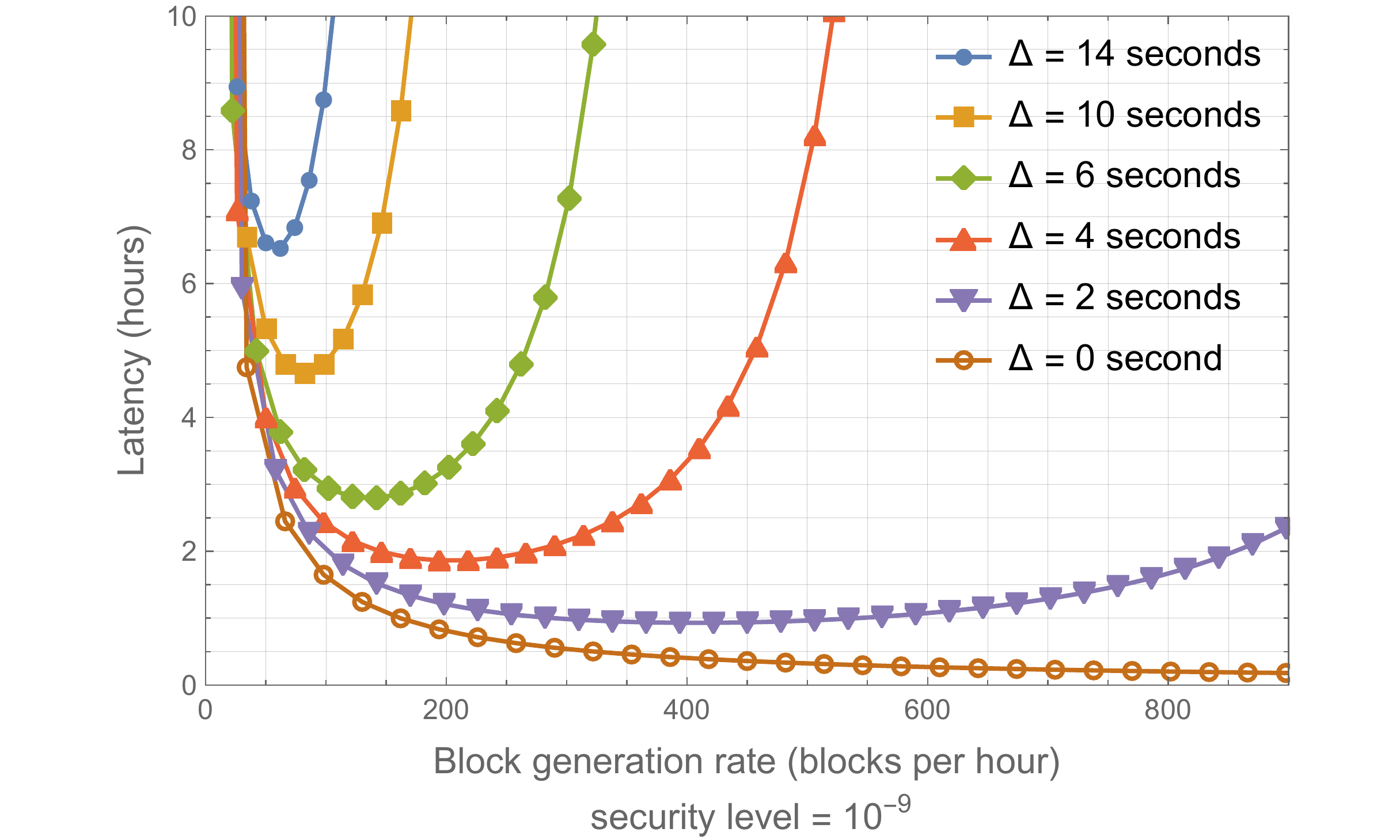}
    \caption{Latency required for different propagation delays. The percentage of adversarial mining power is 25\%.}
    \label{fig: latency_rate}
\end{figure}  

Figure \ref{fig: latency_rate}  illustrates sufficient confirmation times due to Theorem~\ref{th:achievable} for different security levels, block generation rates, and block propagation delays.  
The case of $\Delta=0$ is plotted using~\eqref{eq:e0} in Theorem~\ref{th:zero achievable}.
As expected, the latency is larger with longer block propagation delay and/or stronger security level requirement. 
%However, %one cannot keep increasing 
%as the block generation rate keeps increasing, the latency will eventually worsen due to heavy forking. 
Interestingly, increasing the block generation rate first reduces latency but eventually causes the latency to rise {rapidly (due to excessive forking)}.
% without bound. 
From the graph, with the Bitcoin block propagation delay around 
%10 
{5 to 15} seconds, a sweet spot for block generation rate is between 50 and 200 blocks per hour in terms of optimizing latency.  

\begin{figure}
    \includegraphics[width=0.95\columnwidth]{./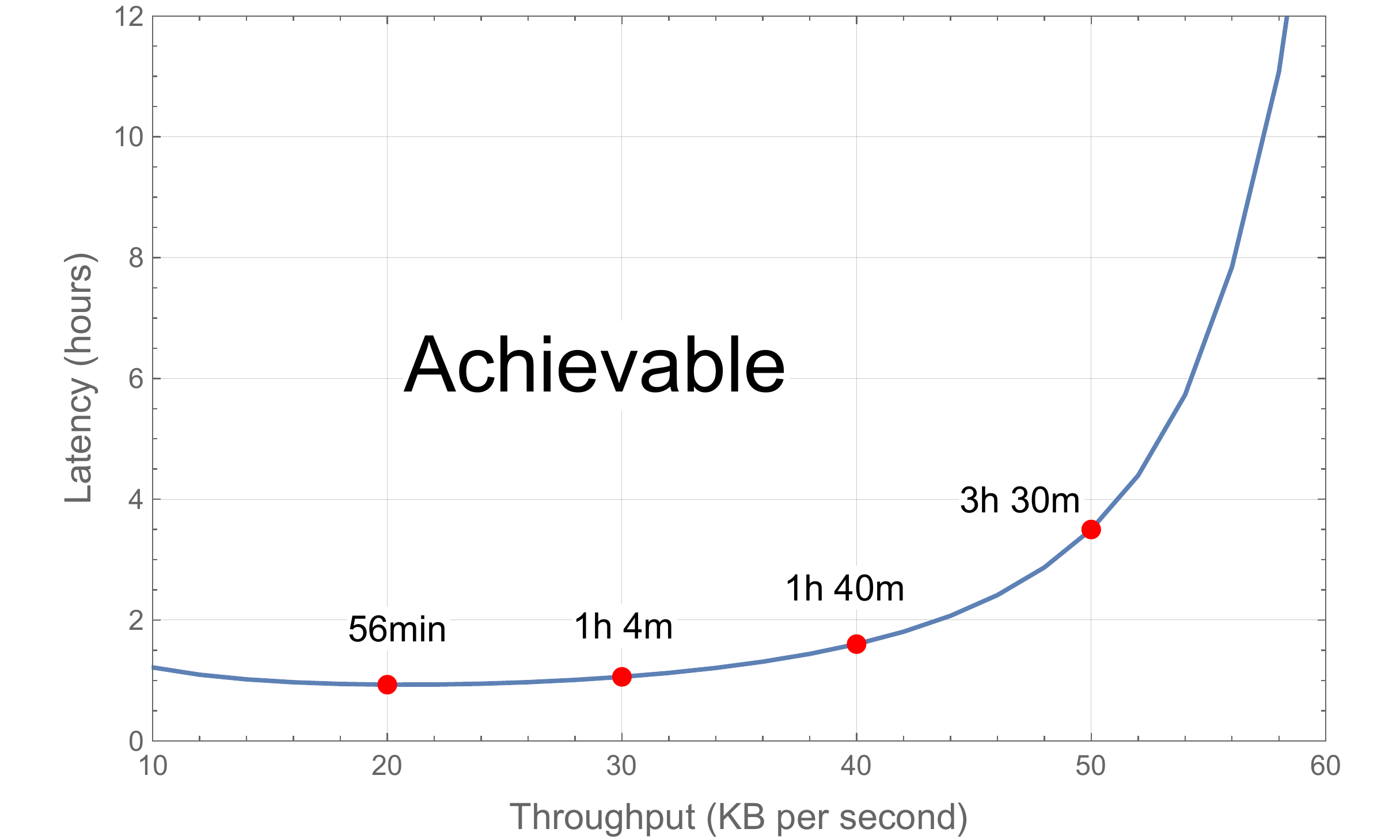}
    \caption{\jing{Achievable} latency for different throughput with \jing{$10^{-9}$} security level and 25\% adversarial mining power.}
    \label{fig: latency_throughput}
\end{figure}

% {\centering
\begin{table*}%[tp!]
\begin{tabular}{ |c|c|c|c|c|c|c|c|c|} 
\hline
\shortstack{Protocol \\ \wequ} & \shortstack{Maximum \\ block size} & \shortstack{Generation\\rate} & \shortstack{Propagation \\ delay (seconds)} & \shortstack{Latency: $10^{-3}$ \\ security level} &  \shortstack{Latency:$10^{-6}$  \\security level   } &  \shortstack{Latency: \jing{$10^{-9}$} \\ security level} & \shortstack{ \\ Throughput \\ (KB/second)} & \shortstack{ Fault \\ tolerance} \\ 
\hline
Bitcoin & 1 MB & 6 & 10 &  < 11h 39m &  < 20h 40m & < 29h 40m & 1.7 & 49.7\% \\
BCH & 8 MB & 6 & 78.5 & < 16h 55m & <  29h 59m &   < 43h 3m & 13.3 & 46.9\% \\
Litecoin & 1 MB & 24 & 10  &  < 3h 24m& <   6h 2m &  <  8h 39m  & 6.7 & 48.4\% \\
Dogecoin & 1 MB & 60 & 10 &  <  1h 56m &  <  3h 26m &  <  4h 56m & 16.6 & 46.2\%\\
Zcash & 2 MB & 48 & 19.8 & <  3h 40m &  <  6h 32m &  < 9h 23m & 26.7 & 44.2\% \\
Ethereum & 0.183 MB & 240 & 2 &  <  26m &  <  46m &  <  1h 6m & 12.2 & 46.9\%\\
\hline
\end{tabular}
\smallskip
\caption{Parameters and performances of Nakamoto-style Protocols. The percentage of adversarial mining power is 25\%. In formula \eqref{equ: linear}, \jing{a = 0.0098 and b = 0.208}.
}
% \label{table: protocols2}
\label{table: protocols}
\end{table*}
% }

%Notice in Figure~\ref{fig: latency_rate} that, when the block generating rate is too high, the latency will become infinite, i.e., no block can ever be confirmed. 
% Our proof in Section~\ref{s:consistency} requires 
We recall that Theorem~\ref{th:achievable} requires condition~\eqref{e:beta<} to hold, i.e., the honest-to-adversarial mining ratio is bounded by
% \begin{align}
    $\bet/\alp < \ex{-2\alp \Delta}$,
    % $\frac{\bet}{\alp} < \ex{-2\alp \Delta}$.
    % \label{equ: max_rate1}
% \end{align}
% This is because 
where $\alp \ex{-2\alp \Delta}$ is the exact rate that loners are mined.
% Beyond~\eqref{e:beta<}, % the ratio in~\eqref{equ: max_rate1},
% %block generation time, our proof 
% this paper provides no security guarantee.
% This is marked as the ``calculation bound'' in Figure~\ref{fig: latency_rate}. 
%Note that~\eqref{e:beta<} %our condition $\alp \ex{-2\alp \Delta}>\bet$ 
% is a sufficient but not necessary condition for the consistency of the Nakamoto consensus. 
The actual sufficient and necessary condition for the consistency of the Nakamoto consensus in the infinite-latency limit is given in~\cite{dembo2020everything,gazi2020tight}:
\begin{align}
     \frac{\bet}{\alp} < \frac1{1+\alp\Delta} . \label{equ: max_rate2}
\end{align}
% This is marked as the fundamental bounds in Figure~\ref{fig: latency_rate}.
The product $\alp\Delta$ is equal to the average number of honest blocks mined per maximum delay period, which is usually set to a very small number to curb forking.
If $\alp\Delta\ll1$, as is the case in Bitcoin, the condition~\eqref{e:beta<} is almost identical to~\eqref{equ: max_rate2}.
%In Figure \ref{fig: latency_rate}, we draw the calculation upper bounds of block generation rate $\alp_{\max} = \frac{1}{2\Delta}\log\frac{\bet}{\alp}$ in dotted lines (different colors corresponds with different propagation delay). The fundamental bounds of block generation rate in Theorem \ref{th: foundamental bound} are drawn in solid lines. 

\begin{comment}
\begin{theorem} 
%[Theorem 4.2 in \cite{dembo2020everything}]
\label{th: foundamental bound}
Suppose
\begin{align}
    \frac{\bet}{\alp} < \frac{1}{1+\alp\Delta}.
\end{align}
If at a certain time a transaction which is mined more than $\tau$ time away appears in an honest miner's longest blockchain, then the transaction will appear in all honest miners' longest blockchain with probability $1-\ex{-\Omega(\tau)}$.
\end{theorem}
\end{comment}

\subsection{The latency--throughput trade-off} % between latency and throughput}

% DONE[block size is estimated, while propagation delay is the precise requirement] \rl{For figure 6, I think it is more clear to mark the curves by their block size, as opposed to Delta because the block size is the parameter to set.}

%\rl{I have trouble interpreting the two figures in this section. What is the free parameter here? Block generation time? This text suggests it is showing throughput as a function of latency. That is an unnatural way to think about his to me. I think what you want to say here, and what a designer/reader cares here is, if a designer tunes block size AND block rate, what latency/throughput combinations are achievable, and what parameter selections (block size, block rate) achieve them. The figure needs to convey these. For example, I would show only the global feasible region, and sample parameter choices close to the boundary.}\jing{updated.}

A larger block size may benefit throughput by carrying more transactions. On the other hand, the larger block size increases the propagation delay, which causes longer latency. A protocol designer may want to find a sweet spot that %takes care of both  
leads to the most desirable latency and throughput.
%The trade-off between confirmation time and throughput is discussed in this section. 
% By throughput we mean the product of block generation rate and block size. 
% This is the theoretical maximum throughput a protocol designer hope for. In practice, the p

The throughput metric can be defined in a few different ways, ranging from the ``best-case'' throughput where the adversarial miners follow the protocol, to the ``worst-case'' throughput where the adversarial miners not only mine empty blocks but also use a selfish mining type of attack~\cite{eyal2014majority,garay2015bitcoin} to displace %push 
honest blocks. % out of longest blockchains. 
In this paper, we choose to focus on the ``best-case'' throughput, which %since this
is the throughput under normal operation and is 
% likely the throughput metric 
perhaps what protocol designers have in mind when setting parameters. 
 
%Each line represents the latency achievable by a protocol with certain propagation delay (which means the block size is decided).  For a protocol with block propagation delay of 3 seconds, a throughput of 30KB requires the minimum latency of 20 hours. For a protocol with block propagation delay of 10 seconds, the same throughput requires a latency of about 9 hours. Thus, if the target throughput is relatively low, picking a smaller block size is beneficial as they require shorter confirmation time. As the target throughput get larger, picking a larger block size is needed. For example, to target a throughput of 10KB per second, the latency is smaller with $\Delta = 3$ (corresponds to the block size of 146KB). For a throughput of 40KB per second, the latency of a protocol with 146 KB block size explodes. It is a better choice to set $\Delta = 8$ (corresponds to the block size of 756).

%Figure \ref{fig: latency_throughput} and Figure \ref{fig: throughput_latency} together illustrates the trade-off between latency and throughput, as well as the suitable propagation delay (block size) to achieve it.  

Figure \ref{fig: latency_throughput} illustrates the minimum latency required \jing{for a} given throughput according to Theorem~\ref{th:achievable} (it is actually a latency upper bound).
For several target throughput numbers (20, 30, 40, 50, and 60 KB per second), we also mark the corresponding \jing{latency bound}.  % 
We see that we can achieve a latency less of around one hour with a throughput about 30 KB per second.
This is achieved by increasing the block generation rate and decreasing the block size.

\subsection{Case Studies in the Current Ecosystem}

Some PoW protocols attempt to %improve the 
better Bitcoin by increasing the block size (Bitcoin Cash) 
%\rl{inflat -> increase, removed BSV} 
or the block generation rate (Litecoin). 
This subsection discusses the performance of these Bitcoin-like protocols.
Table \ref{table: protocols} describes the parameters, estimated propagation delay, and performances of the aforementioned protocols. 

\paragraph{Bitcoin Cash (BCH)} is a hard fork of Bitcoin %that occurred on
from 2017.  BCH aims to increase the throughput by increasing the maximum block size to 8 MB while {keeping} %remaining 
the same block generation rate as Bitcoin~\cite{kwon2019bitcoin}.
As a result, the latency is increased from around \jing{30 hours to 43 hours for $10^{-9}$} security level. 
Had BCH increase the block generation rate (instead of the block size) by eight times, it would have obtained the same eight fold throughput improvement while at the same time shortened the latency by a factor of eight or so.

\paragraph{Litecoin}
% Starting in 2011, 
 is also a fork of the Bitcoin Core client that dates back to 2011. Litecoin decreases the block generation time from 10 minutes to 2.5 minutes per block~\cite{litecoin_time1}.
For Litecoin, the latency is \jing{8 hours 39 minutes}, and the current throughput is 6.7 KB per second (for \jing{$10^{-9}$} security level and 25\% percentage of adversarial mining power). 

\jing{
\paragraph{Dogecoin}
is a cryptocurrency developed in 2013 as a joke, making fun of the wild speculation in cryptocurrencies. It is a fork of Litecoin blockchain with higher block generation rate (from 2.5 minute per block to one minute per block). The throughput of Dogecoin blockchain increases to 16.6 KB/second (compared with 6.7 KB/second and 1.7 KB/second for Litecoin and Bitcoin, respectively). Meanwhile, the $10^{-9}$ security-level confirmation time decreases to 5 hours (compared with 8.6 hours and 29.6 hours for Litecoin and Bitcoin, respectively). 
}

% \rl{I moved Dogecoin to follow Litecoin.}

\paragraph{Zcash.}
Proposed in 2016, Zcash is aims to provide enhanced privacy features. 
In 2017, Zcash doubled the maximum block size from 1 MB to 2 MB~\cite{zcash_size_double}. Zcash also decreased the block interval from 10 minutes to 1.25 minutes~\cite{zcash_1}. 
Similar to Litecoin, ZCash can be improved by increasing the block generation rate (higher throughput) 
%or \rl{why not and?} 
and/or decreasing block size (shorter latency).

\paragraph{Ethereum.}
The second largest cryptocurrency platform Ethereum has the block generation rate of 15 seconds per block \cite{vujivcic2018blockchain, eth_time1}. 
%The average block size of an Ethereum block is $25$KB \cite{eth_size1, eth_size2}. 
The maximum gas consumption for each Ethereum block is 12.5$\times 10^{6}$. Given that 21000 gas must be paid for each transaction and 68 gas must be paid for each non-zero byte of the transaction\cite{wood2014ethereum}, we estimate the maximum block size of an Ethereum block is 183 KB. 
Ethereum increases the block generation rate and decrease the block size. 
%From Figure \ref{fig: throughput_latency}, we know for a latency of 2 hours and 10 minutes, the maximum throughput achievable is below 18KB per second. Also 
From Figure \ref{fig: latency_throughput}, for the throughput of around 12 KB per second, the latency bound is around \jing{1 hour  for $10^{-9}$ security level}, which is close to the current confirmation time of 1 hour and \jing{6} minutes. The Ethereum parameters appear to be well-chosen.

\paragraph{Summary.}
In general, most of Nakamoto-style cryptocurrencies start with Bitcoin as the baseline and aim to improve its throughput. 
Since Bitcoin has a very low block generation rate, the best option according to a principled method is to increase its block generation rate.
Additional improvements can be obtained by \emph{decreasing} the block size and \emph{further increasing} the block generation rate. 
This will not only increases throughput but also shortens the latency. 
Unfortunately, almost all of the cryptocurrencies we looked at either kept the block size the same as Bitcoin or went in the opposite direction to increase the block size, partly due to a lack of principled methodology. 
The only exception is Ethereum; Ethereum's parameters are very close to the optimal ones recommended in Figure~\ref{fig: latency_rate}.

\section{Conclusion and Future Directions}
\label{sec: conclusion}

%This work has 
{We have developed concrete bounds for the} latency--security trade-off for the Nakamoto consensus {under the assumption that the block propagation delays are upper bounded by a constant.  
If the total block generation rate (normalized by the block propagation delay bound) is not very high and the adversarial mining power is well below that of honest mining, the upper and lower bounds are quite close.
We have also}
applied the 
{new formulas} %results
to analyze existing PoW longest-chain cryptocurrencies. 
% When the mining rate is low (compared to the block propagation delay), the obtained upper bounds are close to the lower bounds from private mining. 
% When the block generation rate is high, however, our method does not give very tight results. 

Recent works~\cite{dembo2020everything,gazi2020tight} have established the 
%tight 
{exact}
fault tolerance under high mining rate, but %tight 
{concrete} bounds on latency {beyond the lower tolerance studied in this paper remain open.
Also, only asymptotic bounds are known for the Nakamoto consensus with dynamic participation and/or difficulty adjustment~\cite{garay2017bitcoin, chan2017blockchain}.  It would be interesting to establish concrete latency--security trade-offs in that more general setting}. % bounds.

\begin{acks}
We would like to thank David Tse and Sreeram Kannan for stimulating discussions about this work.
\end{acks}

\clearpage
\bibliographystyle{ACM-Reference-Format}
% \bibliography{ieeetr}
\bibliography{ref}

\appendix
%{BOTH s AND t SHOW UP IN FORMULAS. WHERE IS t DEFINED? CONFUSING.}

%Thus, the height of a $(T_b+t)$-credible blockchain is at most $h(b)+\M{T_b+\Delta}{T_b+t}$. Note that block $b$ must be in any $(T_b+t)$-credible blockchain because a loner is the only honest block at its height and all adversarial blocks are private. 
%The height of the private adversarial blockchain is $h(b)-1+\A{T_b+\Delta}{T_b+t}$. 

%Given $\A{T_b+\Delta}{T_b+t} \ge \N{T_b+\Delta}{T_b+t} + 1$, the private adversarial blockchain is no shorter than the height of all $(T_b+t)$-credible blockchains. The adversary publishes the private chain, and the attack succeeds. 

%Since both the arrival of honest blocks and the arrival of adversarial blocks follow Poisson process, the probability of event $\{\A{T_b+\Delta}{T_b+t} \ge \N{T_b+\Delta}{T_b+t} + 1\}$ can be numerically computed as 

% \end{proof}

\section{Converting Confirmation Depth to Confirmation Time}
%sion of Confirmation Rules}
\label{sec:conversion}

%It may even be desirable to confirm a block using a criteria involving both confirmation time and depth.
%While there is no strict conversion between confirmation time and depth, they imply each other in probability.  
Let
\begin{align} \label{eq:kappa}
    \kappa(\lambda,\epsilon) 
    =
    \min \bigg\{ k: \;
    e^{-\lambda} \sum^\infty_{i=k} \frac{\lambda^i}{i!} \le \epsilon
    \bigg\} .
\end{align}
Then for every $\tau>0$, the probability that $\kappa( (\alp+\bet)\tau, \epsilon)$ or more blocks are mined in $\tau$ units of time is no greater than $\epsilon$.  For example, we learn from Figure~\ref{fig: security_latency} that \jing{4.35}  hours of latency guarantees a security level of 0.0005.  Using~\eqref{eq:kappa}, we obtain that \jing{ $\kappa(4.35\times6,0.0005)=45$}.  Hence if one counts {45} confirmation blocks, it implies that at least \jing{4.35} hours have elapsed with error probability 0.0005.  In all, \jing{45} confirmation blocks guarantees $10^{-3}$ security level, assuming that at most 10\% of the total mining power is adversarial.

At 10\% adversarial mining power, Nakamoto~\cite{nakamoto2008bitcoin} estimated that confirming after six blocks beats private attack at least 99.9\% of the time.  In contrast, \jing{45} confirmation blocks guarantees the same security level regardless of what attack the adversary chooses to employ.
We also note that while on average six blocks take only one hour to mine, with probability $10^{-3}$ it takes 2.75 hours or more to mine.
%(This is consistent with the lower bound (about one and a half hours) in Figure~\ref{fig: security_latency}.)  

\section{Proof of Lemma \ref{LEMMA: LAGGER IID}} \label{proof: lagger iid} 
\label{s:lem:laggeriid}

For $i = 1, 2,  \dots$, let $Z_i$ denote the inter-arrival time between the $i$-th honest block and the $(i-1)$-st honest block throughout the appendices.  They are i.i.d.\ exponential random variables with mean $1/\alp$.

For convenience, we introduce the following shorthand notation within the bounds of this proof: 
\begin{align}
    L_n &= K_1+\dots+K_n\\
    l_n &= k_1+\dots+k_n  
\end{align}
for $n=1,2,\dots$.  By convention, we let $L_0=l_0=0$.
It is easy to see that
\begin{align}
    X_i =  Z_{L_{i-1}+1} + \dots+ Z_{L_i}
\end{align}
holds for $i=1,2,\dots$.  Also, the event that $K_i=k_i$ is equivalent to the event that
\begin{align}
    Z_{L_{i-1}+1} &\le 1, \dots,
    Z_{L_{i-1}+k_i-1} \le 1,
    Z_{L_{i-1}+k_i} > 1 . 
\end{align}
    Given $K_1=k_1, \dots, K_i = k_i$, the event $X_i\le x$ is equivalent to the event that
    \begin{align}
        Z_{l_{i-1}+1} + \dots+ Z_{l_i}\le x.
    \end{align}

    For all positive integers $n, k_1, k_2, \dots, k_n$  and real numbers $x_1, x_2, \dots, x_n$, we have
    \begin{align}
    & P(X_1\le x_1, K_1 = k_1, \dots, X_n\le x_n, K_n = k_n) \notag \\
    & =
    P(Z_1+\cdots+Z_{l_1}\le x_1, \notag \\
    & \qquad\qquad Z_1\le 1, \dots, Z_{l_1-1}\le 1, Z_{l_1}>1, \notag \\
    & \qquad\qquad \cdots, \notag \\
    & \qquad Z_{l_{n-1}+1}+\cdots+Z_{l_n}\le x_n, \notag \\
    & \qquad\qquad Z_{l_{n-1}+1}\le 1, \dots, Z_{l_{n}-1}\le 1, Z_{l_n}>1) \\
    & =
    P(Z_1+\cdots+Z_{k_1}\le x_1, \notag\\
    & \qquad\qquad Z_1\le 1, \dots, Z_{k_1-1}\le 1, Z_{k_1}>1 ) \notag \\
    & \qquad \times \cdots \times \notag \\
    & \quad~P( Z_{l_{n-1}+1}+\cdots+Z_{l_n}\le x_n, \notag \\
    & \qquad\qquad Z_{l_{n-1}+1}\le 1, \dots, Z_{l_n-1}\le 1, Z_{l_n}>1) 
    % &=
    % P( Z_1+\cdots+Z_{k_1}\le x_1, \notag\\
    % & \qquad Z_1\le 1, \dots, Z_{k_1-1}\le 1, Z_{k_1}>1) \notag \\
    % & \quad \quad \dots \notag \\
    % & \quad \quad P( Z_{l_{n-1}+1}+\cdots+Z_{l_n}\le x_n, \notag \\
    % & \qquad Z_{l_{n-1}+1}\le 1, \dots, Z_{l_n-1}\le 1, \notag \\
    % & \qquad\qquad Z_{l_n}>1)
    \label{d 17}
    \end{align}
    which is a product of $n$ probabilities, where~\eqref{d 17} is because $Z_1,Z_2,\dots$ are i.i.d.
    Let us define a two-variable function
    \begin{align}
        f( x, k ) = P( X_1 \le x, K_1=k )
    \end{align}
    for all $x\in(-\infty,\infty)$ and $k\in\{0,1,\dots\}$.
    The $i$-th probability on the right hand side of~\eqref{d 17} can be reduced as follows:
    \begin{align}
        & P(Z_{l_{i-1}+1}+\cdots+Z_{l_i}\le x_i, \notag\\
        & \qquad\qquad Z_{l_{i-1}+1}\le 1, \dots, Z_{l_i-1}\le 1, Z_{l_i}>1) \notag \\
        & = P(Z_1+\cdots+Z_{k_i}\le x_i,  \notag \\
        & \qquad\qquad Z_1\le 1, \dots, Z_{k_i-1}\le 1, Z_{k_i}>1)
        \label{eq:wliki} \\
        &= f( X_1 \le x_i, K_1=k_i )
    \label{eq:Pi}
    \end{align}
    for all $i= 1,\dots,n$, where~\eqref{eq:wliki} is because $Z_1,Z_2,\dots$ is stationary.
    Applying~\eqref{eq:Pi} to~\eqref{d 17} yields
    \begin{align}
        &P( X_1\le x_1, K_1=k_1, \dots, X_n\le x_n, K_n = k_n) \notag \\
        &=
        P(X_1\le x_1, K_1=k_1) \cdots P(X_1\le x_n, K_1=k_n) \\
        &=
        f( x_1, k_1  ) \cdots f(x_n, k_n) . \label{d 14}
    \end{align}
    Hence the joint probability distribution of $(X_i,K_i)^n_{i=1}$ decomposes and each term takes exactly the same form.  Thus Lemma \ref{LEMMA: LAGGER IID} is established.

\section{Proof of Lemma \ref{LEMMA: DOUBLE-LAGGER IID}}
\label{s:doublelaggeriid}

This proof takes the same form as the proof of Lemma~\ref{LEMMA: LAGGER IID}.
For convenience, we introduce the following shorthand within the bounds of this proof: 
\begin{align}
    M_n &= J_1+\dots+J_n \\
    m_n &= j_1+\dots+j_n  
\end{align}
for $n=1,2,\dots$.  By convention, we let $M_0=m_0=0$.
It is easy to see that
\begin{align}
    \Vr{i} = X_{M_{i-1}+1} + \cdots + X_{M_i}
\end{align}
holds for $i=1,2,\dots$.
Also, the event $J_i=j_i$ is equivalent to the event that 
\begin{align}
    K_{M_{i-1}+1}>1, \dots, K_{M_{i-1}+j_i-1}>1, K_{M_{i-1}+j_i}=1 .
\end{align}
Given $J_1=j_1,\dots,J_i=j_i$, the event $\Vr{i}\le w$ is equivalent to 
\begin{align}
    X_{m_{i-1}+1}+\dots+X_{m_{i}}\le w.
\end{align}
For all positive integers $n, j_1,\dots, j_n$ and real numbers $w_1,\dots, w_n$, we have 
\begin{align}
    & P(\Vr{1}\le w_1, J_1=j_1, \dots, \Vr{n}\le w_n, J_n = j_n) \notag \\
    % & = P(X_1+\dots+X_{j_1}\le v_1, J_1 =j_1,  \notag \\
    % & \qquad  \dots, \notag \\
    % & \qquad  X_{m_{n-1}+1} + \dots + X_{m_n}\le v_n, J_n = j_n)\\
    & =  P(X_1+\dots+X_{j_1}\le w_1, \notag \\
    & \qquad \qquad K_1>1, \dots, K_{j_1-1}>1, K_{j_1}=1, \notag \\
    & \qquad \qquad \dots, \notag \\
    & \qquad X_{m_{n-1}+1} + \dots + X_{m_n}\le w_n, \notag \\
    & \qquad\qquad K_{m_{n-1}+1}>1,\dots, K_{m_n-1}>1, K_{m_n}=1
    ) \\
    & =  P(X_1+\dots+X_{j_1}\le w_1, \notag \\
    & \qquad \qquad K_1>1, \dots, K_{j_1-1}>1, K_{j_1}=1 ) \notag \\
    & \qquad \quad \times \cdots \times \notag \\
    & \quad~ P( X_{m_{n-1}+1} + \dots + X_{m_n}\le w_n, \notag \\
    & \qquad\qquad K_{m_{n-1}+1}>1,\dots, K_{m_n-1}>1,K_{m_n}=1)
    % \\
    % & =  P(X_1+\dots+X_{j_1}\le v_1, \notag \\
    % & \qquad\qquad K_{j_1}=1, K_{j_1-1}>1,\dots, K_1>1) \notag \\
    % & \qquad\qquad  \dots \notag \\
    % &  \qquad P(X_{m_{n-1}+1} + \dots + X_{m_n}\le v_n, \notag \\
    % & \qquad\qquad K_{m_n}=1,  \notag \\
    % & \qquad\qquad \qquad K_{m_n-1}>1,\dots, K_{m_{n-1}+1}>1)
    \label{d 75}
\end{align}
which is the product of $n$ probabilities, where~\eqref{d 75} is due to Lemma~\ref{LEMMA: LAGGER IID}, i.e., $(X_1,K_1),(X_2,K_2),\dots$ are i.i.d.  Moreover, the $i$-th probability on the right hand side of ~\eqref{d 75} can be reduced as:
\begin{align}
\begin{split}
   & P(X_{m_{i-1}+1} + \dots + X_{m_{i}}\le w, \\
    & \quad\qquad K_{m_{i-1}+1}>1, \dots,
     K_{m_{i}-1}>1,K_{m_i}=1)
        \\
    =~& 
    P(X_{1}+\dots+X_{j_i}\le w, \\
    & \quad \qquad K_1>1, \dots, K_{j_i-1}>1,K_{j_i}=1)
    \label{d 39}
\end{split}
\end{align} 
for all $i= 1,\dots,n$. Applying \eqref{d 39} to \eqref{d 75} yields
\begin{align}
\begin{split}
    & \wequ P(\Vr{1}\le w_1, J_1=j_1, \dots, \Vr{n}\le w_n, J_n = j_n) \\
& =P(\Vr{1}\le w_1, J_1=j_1)\cdots P(\Vr{1}\le w_n, J_1 = j_n). \label{d 76}
\end{split}
\end{align}
Hence the joint probability distribution of $(\Vr{i},J_i)_{i=1}^n$ decomposes and each term takes exactly the same function form. Thus Lemma~\ref{LEMMA: DOUBLE-LAGGER IID} is established.

\section{Proof of Lemma \ref{D MGF L}}
\label{s:lem:mgf}

By Lemma \ref{LEMMA: DOUBLE-LAGGER IID}, it suffices to consider $\Vr{1}$, the arrival time of the first double-lagger starting from time $0$.  Let $K$ denote the number of honest blocks until (including) the first lagger and let $b_1,\dots,b_K$ denote that sequence of blocks.  Then blocks $b_1, \dots, b_{K-1}$ are non-laggers, and block $b_K$ is a lagger (it may or may not be a double-lagger). 
% Let $K$ denote the number of laggers before 
% Suppose block $b_0$, block $b_1$, and block $b_2$ are consecutive honest blocks and block $b_1$ is a double-lagger. By definition $T_{b_1}-T_{b_0}>1$.

    With probability $\ea$, $Z_1>1$. In this case, block $b_1$ is a double-lagger since the genesis block is a lagger. We know $K=1$ and $\Vr{1} = Z_1$. 

    With probability $1-\ea$, $Z_1 \le 1$. Then block $b_1$ is not a lagger. We have $Z_1\le 1, \dots, Z_{K-1}\le 1, Z_K>1$.  Let $I'$ denote the time from lagger $b_K$ to the next double-lagger.  Then we can write
    % This observation allows one to write a recursion:
    \begin{align} \label{d 48}
      \Vr{1} = 
      \begin{cases} 
          Z_1 &\quad \text{if } Z_1 > 1 \\
          Z_1 + \dots + Z_K + I' &\quad \text{if } Z_1 \le 1 \\
      \end{cases}
    \end{align}
    where $I'$ follows the same distribution as $\Vr{1}$ by Lemma \ref{lemma: lagger double-lagger dist}. 
    % According to \eqref{d 48}, we have
    Thus the MGF of $\Vr{1}$ can be calculated as
    \begin{align}
    \expect{ e^{u\Vr{1}} } 
    &=  (1-\ea) \expect{ e^{u(Z_1+\dots+Z_K+I')}  \Big|  Z_1 \le 1 }
    \notag \\
    & \qquad + \ea \expect{ e^{uZ_1}  \Big|  Z_1 > 1 }\\
    &= 
    (1-\ea) \expect{ e^{u(Z_1+\dots+Z_K)} \Big| Z_1\le 1 } \expect{ e^{uI'} } \notag \label{d 57}\\
    & \qquad + \ea \expect{ e^{uZ_1}  \Big| Z_1 > 1 } \\
    &=  (1-\ea) \expect{ e^{u(Z_1+\dots+Z_K)} \Big| Z_1\le 1 } \expect{ e^{u\Vr{1}} }  \notag \\
    &  \qquad  + \ea \expect{ e^{uZ_1} \Big| Z_1>1 } \label{d 58}
    \end{align}
    where \eqref{d 57} is because $I'$ and $I_i$s are independent, and the fixed-point equation~\eqref{d 58} is because $I'$ is identically distributed as $\Vr{1}$.
    
    If
    \begin{align} \label{eq:cond1}
        (1-\ea) \expect{ e^{u(Z_1+\dots+Z_K)} \Big| Z_1\le 1 } <1
    \end{align}
    rearranging \eqref{d 58} yields
    \begin{align}
    %&\wequ
    \expect{ e^{u\Vr{1}} } %\notag \\
    =
    \frac{ \ea \expect{e^{uZ_1} \Big| Z_1>1} }
    { 1 - (1-\ea) \expect{e^{u(Z_1+\dots+Z_K)} \Big| Z_1\le 1}} . \label{d 47}
    \end{align}
    We shall revisit the condition~\eqref{eq:cond1} shortly.
    
    Note that 
    \begin{align}
        P(K = k|Z_1\le 1) = (1-\ea)^{k-2} \ea, \quad k = 2,3,\dots.
    \end{align}
    Hence
    \begin{align}
        & %\wequ 
        \expect{e^{u(Z_1+\dots+Z_K)} \Big| Z_1\le 1} \notag \\
        & = \sum_{k=2}^{\infty} P(K = k|Z_1\le 1) \times \notag\\ &\qquad\qquad\expect{\ex{u(Z_1+\dots +Z_k)} \Big| K=k, Z_1\le 1} \\
        & =  \sum_{k=2}^{\infty}(1-\ea)^{k-2}\ea \times \notag \\
        &\quad \expect{\ex{u(Z_1+\dots +Z_k)} \Big| Z_1\le 1, \dots, Z_{k-1}\le 1, Z_k > 1} \\
        & =  \sum_{k=2}^{\infty} (1-\ea)^{k-2}\ea \times \expect{e^{uZ_1} | Z_1\le 1}
          \times \cdots \times \notag \\ 
          & \quad \qquad \expect{e^{uZ_{k-1}} | Z_{k-1}\le 1} \times \expect{e^{uZ_k} | Z_k>1} \label{d 52}\\
        & =  \sum_{k=2}^{\infty}(1-\ea)^{k-2}\ea \,
        \phi^{k-1}_0(u)\phi_1(u) \label{d 53}\\
        & = \ea \phi_0(u)\phi_1(u) \sum_{k=0}^{\infty}(1-\ea)^{k}\phi^k_0(u)
    \end{align}
    where~\eqref{d 52} is due to mutual independence of inter-arrival times and \eqref{d 53} is due to Lemmas~\ref{lemma: d mgf R0}
    and~\ref{lemma: d mgf R1}.

    If
    \begin{align} \label{eq:cond2}
    (1-\ea)\phi_0(u)<1,
    \end{align}
    then the series sum converges to yield
    \begin{align} \label{d 55}
        \expect{e^{u(Z_1+\dots+Z_K)} \Big| Z_1\le 1}
        =
        \frac{\ea\phi_0(u)\phi_1(u)}{1-(1-\ea)\phi_0(u)} .
    \end{align}

    Let us examine the conditions~\eqref{eq:cond1} and~\eqref{eq:cond2}.
    Note that 
    \begin{align}
        (&1-\ea)\phi_0(u) =
        \begin{cases}\label{d 42}
        \frac{1-\frac{\ex{u}}{\ex{\alp}}}{1-\frac{u}{\alp}}  & \text{if } u \ne \alp, \\
        \alp & \text{if } u = \alp.
        \end{cases}
    \end{align}
    It is clear that~\eqref{d 42} is less than 1 for $u\le0$.  For $u>0$, because $e^u/u$ is monotone decreasing on $(0,1)$ and monotone increasing on $(1,+\infty)$, there must exist $u\le\alp$ that satisfies $(1-\ea)\phi_0(u)=1$.  Hence the region of convergence must be a subset of $(-\infty,\alp)$.
    
    % Let $u'$ be the smallest number such that $(1-\ea)\phi_0(u)=1$. $(1-\ea)\phi_0(u)<1$ if $u<0$. By the property of \eqref{d 42},  $u' \in (0,  \alp]$ and  \eqref{d 53} holds for all $u<u'$.
    
     Let $g_\alp(u)$ be defined as in~\eqref{eq:hau}.  Using~\eqref{d 55}, it is straightforward to show that
     \begin{align}
         (1-\ea) \expect{ e^{u(Z_1+\dots+Z_K)} \Big| Z_1\le 1 } = 1
     \end{align}
     is equivalent to $g_\alp(u)=0$.  Let $u_0>0$ be the smallest number that satisfies $g_\alp(u_0)=0$.  Then $u_0$ exists and $0<u_0\le\alp$ according to the discussion in Section~\ref{s:mgf}.  Also, $u<u_0$ implies~\eqref{eq:cond1}.  Therefore, the region of convergence for the MGF is $(-\infty,u_0)$.
  
    For $u<u_0$, we have by~\eqref{d 55} and Lemma~\ref{lemma: d mgf R1}:
    \begin{align}
    %&\wequ 
    \expect{ e^{u\Vr{1}} } %\notag \\
    % &=
    % \frac{ \ea \expect{e^{uZ_1} \Big| Z_1>1} }
    % { 1 - (1-\ea) \expect{e^{u(Z_1+\dots+Z_K)} \Big| Z_1\le 1}} \label{d 47}\\
    & =  \frac{\ea\phi_1(u)(1-(1-\ea)\phi_0(u))}{1-(1-\ea)\phi_0(u) - \ea(1-\ea)\phi_0(u)\phi_1(u)} % \\
    % & = 1 - \frac{u^2-\alp u}
    %         {u^2 - \alp  u - \alp  u e^{(u-\alp )}
    %         + \alp ^2 e^{2(u-\alp)} }.  \label{d 48.0}
    \end{align}
    which becomes~\eqref{eq:doublelaggermgf}.

\section{Race Between a Renewal Process and a Poisson Process}
\label{s:renewal-poisson}

In this {appendix} we consider the race between a renewal process and a Poisson process.
Let $(W_t)_{t\ge0}$ denote the renewal process whose i.i.d.\ renewal times are denoted as $I_1,I_2,\dots$.
Let $(A_t)_{t\ge0}$ denote the Poisson process with rate $\bet$.
Let $[s,r]$ denote the interval of interest.
%, whose duration is denoted as $t=r-s$.

We give the Poisson process an early start by $\mu>0$ units of time and a late finish of $\nu\ge0$ units of time.  We also give the Poisson process an extra advantage of $n\ge0$.  We say the Poisson process wins the race if there exists $c\in(0,s]$ and $d\in(t,\infty)$ such that
\begin{align}
    W_{c,d} \le A_{c-\mu,d+\nu} + n ;
\end{align}
otherwise the renewal process $W$ wins the race.
% More formally, we define the following
% ``bad'' event for the interval:
The losing event is thus exactly:
\begin{align}
  F_{s,r}
  =
  \bigcup_{c\in[0,s], d\in[r,\infty)}
  \{ W_{c,d} \le A_{c-\mu,d+\nu} + n \} .
\end{align}

We divide the race to three segments: before $s$, from $s$ to $r$, and after $r$.  While the Poisson process is memoryless, the renewal process has memory, so the number of renewals in those three segments are dependent.
For convenience let us introduce two processes $(W'_t)_{t\ge0}$ and $(W''_t)_{t\ge0}$ which are i.i.d.\ as $(W_t)_{t\ge0}$.
Using the following result, we bound $P(F_{s,r})$ by studying a race with three independent segments.

\begin{lemma} \label{lem:dom}
    Let $d\ge r\ge s\ge0$.  The random variable $W_{s,d}=W_{s,r}+W_{r,d}$ is stochastically bounded:
    % dominated by $W_{s,t} + W'_{0,d-t}$.  That is,
    \begin{align}
    P( W_{s,r} + W'_{0,d-r} \le x-1 )
    &\le
    P( W_{s,r} + W_{r,d} \le x ) \label{eq:dom-1} \\
    &\le
    P( W_{s,r} + W'_{0,d-r} \le x ) \label{eq:dom} \\
    &\le
    P( W_{0,r-s} + W'_{0,d-r} \le x ) \label{eq:dom+1}
    \end{align}
    for every real number $x$.
\end{lemma}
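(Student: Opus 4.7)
The plan is to prove the three stochastic domination inequalities via explicit couplings of $W$ and $W'$ on a common probability space, exploiting the regenerative property of the renewal process.

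For \eqref{eq:dom}, let $T^* := \tau_{N_r}$ where $N_r := \min\{i : \tau_i > r\}$ is the index of the first renewal of $W$ strictly after $r$. By the regenerative property of the renewal process at $T^*$, the continuation $\widetilde W_t := \#\{\text{renewals of } W \text{ in } (T^*, T^* + t]\}$ is a fresh renewal process with the same distribution as $W$, independent of the pre-$T^*$ history and hence independent of $W_{s,r}$. Setting $W' := \widetilde W$ yields a valid coupling in which $W_{r,d} = \mathbf{1}_{T^* \le d}(1 + \widetilde W_{0, d - T^*})$ and $W'_{0, d-r} = \widetilde W_{0, d-r}$. Since $T^* > r$ gives $d - T^* < d - r$ and $\widetilde W_{0, \cdot}$ is non-decreasing, $W_{r,d} \le 1 + W'_{0, d-r}$ almost surely, which yields \eqref{eq:dom}.

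For \eqref{eq:dom-1} and \eqref{eq:dom+1} I aim for the reverse stochastic domination. The key step is a Strassen-type coupling of the residual $R := T^* - r$ with a fresh inter-arrival $I'_1 \sim F$ so that $R \le I'_1$ almost surely, performed conditionally on $W_{s,r}$ so that the marginal of $I'_1$ is $F$ uniformly across values of $W_{s,r}$ (guaranteeing $I'_1 \perp W_{s,r}$). I then couple subsequent inter-arrivals of $W'$ with the fresh tail $(I_{N_r + k})_{k \ge 1}$ of $W$ past $T^*$, which are i.i.d.\ with distribution $F$ and independent of pre-$T^*$ information by the renewal property. Under this coupling, the $k$-th renewal of $W$ in $(r, d]$ (in shifted coordinates) occurs at $R + I'_2 + \dots + I'_k$ while the $k$-th renewal of $W'$ is at $I'_1 + I'_2 + \dots + I'_k$; since $R \le I'_1$, the former is at most the latter for every $k$, and monotonicity of the count in window length gives $W_{r,d} \ge W'_{0, d-r}$ almost surely, hence \eqref{eq:dom-1}. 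Repeating the same construction at time $s$ rather than $r$ gives $W_{s,r} \ge W_{0, r-s}$ almost surely, which combined with the independence of $W'_{0, d-r}$ yields \eqref{eq:dom+1}.

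The main obstacle is that the Strassen step requires $R \le_{st} I'_1 \sim F$, which is equivalent to the new-better-than-used condition $\bar F(a+x) \le \bar F(a)\bar F(x)$ for all $a, x \ge 0$. I plan to verify this for the two specific renewal processes to which the lemma is actually applied: the jumper process (Section~\ref{s:live}) has inter-arrival time $1 + \mathrm{Exp}(\alp)$, for which NBU follows immediately from the memoryless tail; the double-lagger process (Lemma~\ref{D MGF L}) has inter-arrivals built from exponentials truncated at the delay bound, and I expect NBU to follow from a direct analysis of the corresponding survival function via the decomposition in \eqref{d 48}. A secondary technical point is keeping $W'$ independent of $W_{s,r}$ under the Strassen coupling; this is handled by performing the coupling conditionally on $W_{s,r}$ (so that the coupling function depends only on the conditional distribution of $R$ given $W_{s,r}$, while the marginal of $I'_1$ remains $F$ uniformly), after which the unconditional independence follows.
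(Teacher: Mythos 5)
Your two couplings are the right ideas, and they mirror the paper's own argument almost exactly: the regeneration at $T^*$ with a sacrificed ``$+1$'' is the coupling-language version of the paper's decomposition that discards the residual time to the first renewal after $r$ and uses only the fresh inter-arrivals $I_{m+n+2},\dots,I_{m+x+1}$, while your Strassen coupling of the residual $R$ with a fresh $I_1'$ is the coupling-language version of the paper's claim \eqref{eq:yi'} that the forward recurrence time is conditionally stochastically dominated by a renewal time. However, you have attached each construction to the wrong conclusion. The almost-sure bound $W_{r,d}\le 1+W'_{0,d-r}$ gives $\{W_{s,r}+W'_{0,d-r}\le x-1\}\subseteq\{W_{s,r}+W_{r,d}\le x\}$, which is \eqref{eq:dom-1}, not \eqref{eq:dom}; and the bound $W_{r,d}\ge W'_{0,d-r}$ gives $\{W_{s,r}+W_{r,d}\le x\}\subseteq\{W_{s,r}+W'_{0,d-r}\le x\}$, which is \eqref{eq:dom}, not \eqref{eq:dom-1}. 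The swap is mechanical to fix, but as written each half of your proof claims the wrong inequality.

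The substantive gap is in the Strassen step, which (once relabeled) is what carries both \eqref{eq:dom} and \eqref{eq:dom+1}. That step requires exactly the new-better-than-used condition $\bar{F}(a+x)\le\bar{F}(a)\bar{F}(x)$, and you verify it only for the jumper process. For the double-lagger process --- the renewal process with which Theorem~\ref{th:renewal-poisson} is actually invoked in Section~\ref{s:2lagger-poisson} --- you say you ``expect'' NBU to follow from \eqref{d 48}. That is the load-bearing step and cannot be left as an expectation: the inter-double-lagger time is a geometric compound built from truncated and untruncated exponentials, and NBU is not preserved under mixtures or random sums in general, so an actual computation with the survival function (or the explicit MGF of Lemma~\ref{D MGF L}) is required. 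Your instinct that the residual life must be stochastically dominated by a fresh inter-arrival is sound --- it is precisely what the paper asserts at \eqref{eq:yi'} --- but a proof routed through NBU either must verify NBU for every process to which the lemma is applied, or must restate the lemma with an NBU hypothesis; as it stands your proposal neither closes this step for the double-lagger process nor establishes the lemma at the stated level of generality (an arbitrary renewal process).
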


We relegate the proof of Lemma~\ref{lem:dom} to Appendix~\ref{s:lem:dom}.

Using~\eqref{eq:dom} and~\eqref{eq:dom+1} yields
\begin{align} \label{eq:wwwa}
  P( F_{s,t} )
  \le
  P\left(
  \bigcup_{c\in[0,s], d\in[t,\infty)}
  \{ W'_{0,s-c} + W_{0,r-s} + W''_{0,d-r} \le A_{c-\mu,d+\nu} + n \}
  \right)
\end{align}
where we also use the fact that the adversarial process $A$ is independent of $W$, $W'$ and $W''$.
The inequality on the right hand side of~\eqref{eq:wwwa} is equivalent to
\begin{align}
%   P( F_{s,t} )
%   &\le
%   P\left( \bigcup_{c\in[0,s], d\in[t,\infty)} \{
  ( W'_{0,s-c} - A_{c-\mu,s-\mu} )
  +
  ( W_{0,r-s} - A_{s-\mu,r+\nu} - n )
  +
  ( W''_{0,d-r} - A_{r+\nu,d+\nu} )
  \le
  0
%   \} \right) 
\label{eq:PF'} .
\end{align}
Let us define
\begin{align}
  M_- &= \min_{c\in[0,s]} \{ W'_{0,s-c} - A_{c-\mu,s-\mu} \} \\
  M_+ &= \min_{d\in[r,\infty)} \{ W''_{0,d-r} - A_{r+\nu,d+\nu} \} .
\end{align}
Using~\eqref{eq:PF'}, it is straightforward to check that the union event on the right hand side of~\eqref{eq:PF'} is identical to
\begin{align} \label{eq:mwam}
  M_- + ( W_{0,r-s} - A_{s-\mu,r+\nu} - n ) + M_+ \le 0 .
\end{align}
It is important to note that $W_{s,r}$, $A_{s-\mu,r+\nu}$, $M_-$, and $M_+$ are mutually independent.
We note that $M_-$ stochastically dominates $M_+$ as it is seeking the minimum gap between two processes over a shorter period of time.  For our convenience we introduce $M'_+$ to be i.i.d.\ with $M_+$.  Then,
\begin{align} \label{eq:pfpmm}
  P(F_{s,r})
  &=
  P( M_- + ( W_{0,r-s} - A_{s-\mu,r+\nu} - n ) + M_+ \le 0 ) \\
  &\le
  P( W_{0,r-s} \le A_{s-\mu,r+\nu} + n - M_+ - M'_+ ).
  \label{eq:pfpwamm}
\end{align}

Let the moment generating function of $W$'s renewal time be denoted as
\begin{align}
  \phi(u) = \expect{ e^{uI_1} } .
\end{align}
We then have that for every $u\ge0$ and positive integer $l$,
\begin{align}
  P( W_{0,r-s} \le l-1 )
  &\le P( I_1 + \dots + I_l \ge r-s ) \\
  &\le \expect{ \exp( u (I_1 + \dots + I_l - r+s) ) } \\
  &= ( \phi(u) )^l e^{-u(r-s)} .
  \label{eq:wstn}
\end{align}

The Laplace-Stieltjes transform of $M_+$, defined as
\begin{align}
    \mathcal{L}(\rho) = \expect{ \rho^{-M_+} }
\end{align}
is derived in~\cite{kroese1992difference} as~\eqref{eq:laplace-stieltjes} where $m=\expect{I_1}$ therein.
Plugging~\eqref{eq:wstn} into~\eqref{eq:pfpwamm}, we have
\begin{align}
    P(F_{s,r})  &\le \expect{ (\phi(u))^{ A_{s-\mu,r+\nu} + n - M'_+ - M_+ + 1 } } \\ 
    &=
    \expect{ (\phi(u))^{ A_{s-\mu,r+\nu} } }
    (\phi(u))^{ n + 1 }
    \left( \expect{ (\phi(u))^{ M_+ } } \right)^2 \\ 
%   &= \sum_{i=0}^\infty \sum_{j=0}^\infty \sum_{k=0}^\infty
%     P( W_{s,t} \le i+j+k+n ) \exp(-\bet(t-s+\mu+\nu)) \notag \\
%   & \qquad \qquad \times \frac{ (\bet(t-s+\mu+\nu))^i }{i!} 
%     P(M_+=-j) P(M_-=-k) \\
%   &\le \sum_{i=0}^\infty \sum_{j=0}^\infty \sum_{k=0}^\infty
%     ( \phi(u) )^{i+j+k+n+1} \exp(-u(t-s))
%     \exp(-\bet(t-s+\mu+\nu))  \notag\\
%   & \qquad \qquad \times   \frac{ (\bet(t-s+\mu+\nu))^i }{i!} P(M_+=-j) P(M_-=-k) \\
%   &= \exp(-u(t-s)) \exp((\phi(u)-1)\bet(t-s+\mu+\nu)) \notag\\
%   & \qquad\qquad \times  \sum_{j=0}^\infty \sum_{k=0}^\infty
%       ( \phi(u) )^{j+k+n+1}
%       P(M_+=-j) P(M_-=-k) \\
    &\le \exp((\bet\phi(u)-\bet-u)(r-s))
     \exp((\phi(u)-1)\bet(\mu+\nu)) \notag \\
     & \qquad\qquad \times 
     ( \phi(u) )^{n+1} ( \mathcal{L}( \phi(u) ) )^2 
\end{align}
where we have used the Laplace-Stieltjes transforms of $M_+$ as well as 
$A_{s-\mu,r+\nu}$ which has a Poisson distribution with mean $\bet(r-s+\mu+\nu)$.
{Hence Theorem~\ref{th:renewal-poisson} is established.}

\section{Proof of Lemma~E.1} %\ref{lem:dom}}
\label{s:lem:dom}

In this proof we make some simple intuitions precise.  Let $(W_t)_{t\ge 0}$ and $(W'_t)_{t\ge 0}$ denote two i.i.d.\ renewal processes.  Let $I_1,I_2,\dots$ and $I'_1,I'_2,\dots$ denote their respective i.i.d.\ renewal times.
We write
\begin{align}
    P( W_{s,t} + W_{t,d} \le x )
    &= \sum^\infty_{m=0} \sum^\infty_{n=0}
    P( W_{t,d} \le x-n, \, W_{0,s}=m, \, W_{s,t}=n ). \label{eq:wwwww}
\end{align}
The idea here is to divide the event into exclusive events corresponding to $m$ renewals in $(0,s]$ and $n$ renewals in $(s,t]$.  With $W_{0,s}=m$ and $W_{s,t}=n$, the first arrival after $t$ is the $(m+n+1)$-st renewal.  Regardless of the time from $t$ to this arrival, as long as $t+I_{m+n+2}+\dots+I_{m+x+1} > d$, we know that the $(m+x+1)$-st arrival is after $d$, so that at most $(m+x)-(m+n)=x-n$ renewals can occur during $(t,d]$, i.e., $W_{t,d}\le x-n$.  Hence~\eqref{eq:wwwww} leads to
\begin{align}
    P( 
    & W_{s,t} + W_{t,d} \le x ) \notag\\
    &\ge \sum^\infty_{m=0} \sum^\infty_{n=0}
    P( I_{m+n+2}+\dots+I_{m+x+1} > d-t, \, W_{0,s}=m, \, W_{s,t}=n ) .
    \label{eq:wwiiww}
\end{align}
Because $I_{m+n+2},\dots,I_{m+x+1}$ are renewal times after $t$, they are independent of $W_{0,s}$ and $W_{s,t}$.  Therefore,~\eqref{eq:wwiiww} is equivalent to
\begin{align}
    P(
    & W_{s,t} + W_{t,d} \le x ) \notag\\
    &\ge \sum^\infty_{m=0} \sum^\infty_{n=0}
    P( I'_1+\dots+I'_{x-n} > d-t, \, W_{0,s}=m, \, W_{s,t}=n ) \\
    &= \sum^\infty_{m=0} \sum^\infty_{n=0}
    P( W'_{0,d-t}\le x-n-1, \, W_{0,s}=m, \, W_{s,t}=n ) \\
    &= P( W_{s,t} + W'_{0,d-t} \le x-1 ) .
\end{align}
Hence the proof of~\eqref{eq:dom-1}.

To prove~\eqref{eq:dom}, we begin with~\eqref{eq:wwwww} and write
\begin{align}
    P( 
    & W_{s,t} + W_{t,d} \le x ) \notag\\
    &= \sum^\infty_{m=0} \sum^\infty_{n=0}
    P( t+Y+I_{m+n+2}+\dots+I_{m+x+1} > d, \, W_{0,s}=m, \, W_{s,t}=n ) \\
    &= \sum^\infty_{m=0} \sum^\infty_{n=0}
    P( t+Y+I'_2+\dots+I'_{x-n+1} > d, \, W_{0,s}=m, \, W_{s,t}=n ) 
    \label{eq:wwtyii}
\end{align}
where $Y$ denote the time between $t$ and the first renewal after $t$ and we have used the same aforementioned independence argument to arrive at~\eqref{eq:wwtyii}.  Conditioned on $W_{0,s}=m$ and $W_{s,t}=n$, the variable $Y$ is equal to part of the renewal time $I_{m+n+1}$.  Hence
% Although $Y$ is dependent on $W_{0,s}$ and $W_{s,t}$ (an ordinary renewal process has memory and is not stationary in general), it is straightforward to show that
\begin{align}
    P(Y\le y|W_{0,s}=m,W_{s,t}=n)
    &\ge P(I_{m+n+1} \le y|W_{0,s}=m,W_{s,t}=n) \\
    &= P(I'_1 \le y)
    \label{eq:yi'}
\end{align}
i.e., $Y$ is (conditionally) stochastically dominated by the renewal time.  In fact the intuition is very simple: If one has waited for some time, then the time to the next arrival should be shorter than the renewal time.
Using~\eqref{eq:wwtyii} and~\eqref{eq:yi'}, we have
\begin{align}
    P( 
    & W_{s,t} + W_{t,d} \le x ) \notag\\
    &\le \sum^\infty_{m=0} \sum^\infty_{n=0}
    P( t+I'_1+I'_2+\dots+I'_{x-n+1} > d, \, W_{0,s}=m, \, W_{s,t}=n ) \\
    &= \sum^\infty_{m=0} \sum^\infty_{n=0}
    P( W'_{0,d-t} \le x-n, \, W_{0,s}=m, \, W_{s,t}=n ) \\
    &= P( W_{s,t} + W'_{0,d-t} \le x ) .
\end{align}
% Hence the proof of~\eqref{eq:dom}.

\end{document}